\newtheorem{claim}[theorem]{Claim}
\newtheorem{observation}[theorem]{Observation}
\def\R{\RR}
\def\RR{\mathbb R}
\def\ZZ{\mathbb Z}
\def\eqdef{~\triangleq~}
\def\eps{\varepsilon}
\def\B{{\mathbf D}}
\def\Hyp{{\sf Hyp}}
\def\G{{\mathsf G}}
\def\d{{\sf d}}
\def\a{{\mathfrak a}}
\def\b{{\mathfrak b}}
\def\c{{\mathfrak c}}
\newcommand{\hcd}{\mathsf{hcd}}
\newcommand{\VG}{VG}
\newcommand{\md}[1]{\ (\operatorname{mod} #1)}
\newcommand{\cupdot}{\sqcup}
\newcommand{\cB}{\mathcal{B}}
\newcommand{\cL}{{\mathcal{L}}}
\newcommand{\cM}{\mathcal{M}}
\newcommand{\cP}{\mathcal{P}}
\newcommand{\divides}{\mid}
\newcommand{\notdivides}{\centernot\mid}
\newcommand{\nd}{\notdivides}
\def\M{\mathbf M}
\def\H{\mathbf H}
\def\D{\mathbf D}
\newcommand{\bS}{{\bf S}}
\newcommand{\bll}{{\bf 1}}
\newcommand{\Sec}[1]{\hyperref[sec:#1]{\S\ref*{sec:#1}}} 
\newcommand{\Eqn}[1]{\hyperref[eq:#1]{(\ref*{eq:#1})}} 
\newcommand{\Fig}[1]{\hyperref[fig:#1]{Fig.\,\ref*{fig:#1}}} 
\newcommand{\Tab}[1]{\hyperref[tab:#1]{Tab.\,\ref*{tab:#1}}} 
\newcommand{\Thm}[1]{\hyperref[thm:#1]{Theorem\,\ref*{thm:#1}}} 
\newcommand{\Lem}[1]{\hyperref[lem:#1]{Lemma\,\ref*{lem:#1}}} 
\newcommand{\Prop}[1]{\hyperref[prop:#1]{Prop.~\ref*{prop:#1}}} 
\newcommand{\Cor}[1]{\hyperref[cor:#1]{Corollary~\ref*{cor:#1}}} 
\newcommand{\Def}[1]{\hyperref[def:#1]{Definition~\ref*{def:#1}}} 
\newcommand{\Alg}[1]{\hyperref[alg:#1]{Alg.~\ref*{alg:#1}}} 
\newcommand{\Ex}[1]{\hyperref[ex:#1]{Ex.~\ref*{ex:#1}}} 
\newcommand{\Clm}[1]{\hyperref[clm:#1]{Claim~\ref*{clm:#1}}} 
\def\D{\mathbf D}
\def\R{\mathbf R}
\newcommand{\hd}[2]{{||#1 - #2||_1}}
\newcommand{\proj}{{\mathsf{proj}}}
\newcommand{\ignore}[1]{}
\newcommand{\st}[2]{st_{#2}(#1)}
\newcommand{\cross}[2]{cr_{#2}(#1)}
\newcommand{\sk}[2]{sk_{#2}(#1)}
\def\doctype{2}
\def\tsubmission{1}
	\newcommand{\full}[1]{}
	\newcommand{\submit}[1]{#1}
	\newcommand{\full}[1]{#1}
	\newcommand{\submit}[1]{}
\begin{document}
\title{Optimal bounds for monotonicity and Lipschitz testing over 
hypercubes and hypergrids}
\author{D. CHAKRABARTY
\affil{Microsoft Research} 
C. SESHADHRI
\affil{Sandia National Laboratories, Livermore} 
}
\begin{abstract}
The problem of monotonicity testing over the hypergrid and its special case, the hypercube, is a classic, well-studied, yet unsolved question in property testing. We are given query access to $f:[k]^n \mapsto \R$
(for some ordered range $\R$). 
The hypergrid/cube has a natural partial order given by coordinate-wise ordering, denoted by $\prec$.
A function is \emph{monotone} if for all pairs $x \prec y$,
$f(x) \leq f(y)$. The distance to monotonicity, $\eps_f$, is the minimum fraction of values of $f$
that need to be changed to make $f$ monotone.  
For $k=2$ (the boolean hypercube), the usual tester is the \emph{edge tester}, which
checks monotonicity on adjacent pairs of domain points. 
It is known that the edge tester using $O(\eps^{-1}n\log|\R|)$
samples can distinguish a monotone function from one where $\eps_f > \eps$.
On the other hand, the best lower bound for monotonicity testing over general $\R$ is $\Omega(n)$. 
We resolve this long standing open problem and prove that $O(n/\eps)$ samples suffice
for the edge tester. For hypergrids, existing testers require $O(\eps^{-1}n\log k\log |\R|)$ samples. 
We give
a (non-adaptive) monotonicity tester for hypergrids running in $O(\eps^{-1} n\log k)$ time,
recently shown to be optimal.
Our techniques lead to optimal property testers (with the same running time) for the natural \emph{Lipschitz property}
on hypercubes and hypergrids. (A $c$-Lipschitz function
is one where $|f(x) - f(y)| \leq c\|x-y\|_1$.)
In fact, we give a general unified proof for $O(\eps^{-1}n\log k)$-query testers for a class of ``bounded-derivative" properties that contains both monotonicity and Lipschitz.
\end{abstract}
\terms{Theory}
\category{F.2.2}{Analysis of algorithms and problem complexity}{Nonnumerical Algorithms and Problems}[Computations on discrete structures]
\category{G.2.1}{Discrete Mathematics}{Combinatorics}[Combinatorial algorithms]
\keywords{Property Testing, Monotonicity, Lipschitz functions}
\begin{bottomstuff}
A preliminary version of this result appeared as \cite{ChSe13}.
\end{bottomstuff}
\maketitle
\section{Introduction}\label{sec:intro}
Monotonicity testing over hypergrids~\cite{GGLRS00} is a classic problem in
property testing. We focus on functions $f:\D \mapsto \R$, where the domain, $\D$, is the hypergrid $[k]^n$ and the range, $\R$, is a total order.
The hypergrid/hypercube defines the natural coordinate-wise partial order: $x \preceq y$, iff $\forall i \in [n], x_i \leq y_i$. A function $f$ is  {\em monotone} if $f(x) \leq f(y)$ whenever $x\preceq y$.
The \emph{distance to monotonicity}, denoted by $\eps_f$, 
is  the minimum fraction of places at which $f$ must be changed to have the property $\cP$. Formally,
if $\cM$ is the set of all monotone functions,
$\eps_f \eqdef \min_{g \in \cM}\left(|\{x | f(x) \neq g(x)\}|/|\D|\right).$
Given a parameter $\eps \in (0,1)$, the aim is to
design a randomized algorithm for the following problem. If $\eps_f = 0$ (meaning $f$ is monotone), 
the algorithm must accept with probability $>2/3$, and if $\eps_f > \eps$, it must reject with
probability $>2/3$. If $\eps_f \in (0,\eps)$, then any answer
is allowed. 
Such an algorithm is called a \emph{monotonicity tester}. 
The quality of a tester is determined by the number of queries to $f$.
A \emph{one-sided tester} accepts with probability $1$ if the function is monotone.
A \emph{non-adaptive tester} decides all of its queries in advance, so
the queries are independent of the answers it receives.
Monotonicity testing has been studied extensively in the past decade \cite{EKK+00,GGLRS00,DGLRRS99,LR01,FLNRRS02,AC04,E04,HK04,PRR04,ACCL04,BRW05,BGJRW09,BCG+10,BBM11}. 
Of special interest is the hypercube domain, $\{0,1\}^n$.
~\cite{GGLRS00} introduced the \emph{edge tester}. Let $\H$ be the pairs 
that differ in precisely one coordinate (the edges of the hypercube).
The edge tester picks a pair in $\H$ uniformly at random and checks if monotonicity is satisfied by this pair. For boolean range, ~\cite{GGLRS00} prove $O(n/\eps)$ samples suffice 
to give a bonafide montonicity tester. \cite{DGLRRS99} subsequently showed that $O(\eps^{-1}n\log|\R|)$ samples suffice for a general range $\R$. In the worst case, $|\R| = 2^n$, and so this gives a $O(n^2/\eps)$-query tester. The best known general lower bound is 
$\Omega(\min(n,|\R|^2))$~\cite{BBM11}.
It has been an outstanding open problem in property testing (see Question 5 in the
Open Problems list from the Bertinoro Workshop~\cite{Bert}) to give 
an optimal bound for monotonicity testing over the hypercube.
We resolve this by showing that the edge tester is indeed
optimal (when $|\R| \geq \sqrt{n}$).
\begin{theorem} \label{thm:mono-hc} 
The edge tester is a
$O(n/\eps)$-query non-adaptive, one-sided monotonicity tester for functions \submit{\\}$f:\{0,1\}^n\mapsto \R$.
\end{theorem}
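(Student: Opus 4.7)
The plan is to reduce the theorem to the single combinatorial inequality
$$|V_f|\ \geq\ c\cdot\eps_f\cdot 2^{n-1}$$
for an absolute constant $c>0$, where $V_f$ denotes the set of hypercube edges $(x,y)$ with $x\prec y$ and $f(x)>f(y)$. The hypercube has $n\cdot 2^{n-1}$ edges, so this inequality says a uniformly random edge is violated with probability $\Omega(\eps_f/n)$, and the edge tester's $\Theta(n/\eps)$ samples therefore find a violation with constant probability whenever $\eps_f>\eps$. Non-adaptivity and one-sidedness are immediate from the description of the tester.

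To prove the inequality, I would try to construct a monotone function $g$ at vertex-distance $O(|V_f|)$ from $f$ via coordinate-wise sorting. Let $S_i$ be the operator that pairs up vertices differing only in coordinate $i$ and replaces each pair's values by $(\min,\max)$. Then $S_i f$ is monotone along coordinate $i$ and differs from $f$ at exactly $2\,v_i(f)$ vertices, where $v_i(f)$ is the number of direction-$i$ violated edges. Iterating $S_1,\ldots,S_n$ yields a monotone function, and the total vertex-displacement is bounded by $2\sum_{i=1}^n v_i(f^{(i-1)})$, where $f^{(t)}$ denotes the state after the first $t$ sorts.

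The main obstacle is that over a general range $\R$, $S_i$ may \emph{create} violations in another direction $j$. The 2D face with $f(0,0)=5,\ f(1,0)=3,\ f(0,1)=2,\ f(1,1)=4$ has two violated edges before sorting direction $1$ and three afterwards, so the naive telescoping $\sum_i v_i(f^{(i-1)})\leq|V_f|$ can fail for real-valued $f$; this pathology is precisely what is absent in the Boolean-range setting of \cite{GGLRS00} and is the reason the problem has remained open. My plan to circumvent it is a refined potential argument: reformulate $\eps_f\cdot 2^n$ via bipartite-matching duality as a vertex-disjoint matching of violated comparable pairs, and design a potential $\Phi(f)$, a Hamming-distance-weighted sum over violated comparable pairs, that simultaneously (a) dominates $\eps_f\cdot 2^n$ up to constants, and (b) decreases under each $S_i$ by an amount that can be charged to direction-$i$ violated edges of the \emph{current} function.

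The technical crux, and the step I expect to be hardest, is a face-by-face $2$-dimensional case analysis establishing the correct local inequality: on every 2D face spanning coordinates $\{i,j\}$, the drop in the local contribution to $\Phi$ induced by $S_i$ must dominate any spurious direction-$j$ violations that $S_i$ introduces on that face. Once this local inequality is in hand, summation over the $\binom{n}{2}\cdot 2^{n-2}$ 2D face orbits yields $\eps_f\cdot 2^n=O(|V_f|)$ and hence the theorem; the remaining bookkeeping is routine.
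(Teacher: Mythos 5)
Your reduction to proving that the number of violated hypercube edges is $\Omega(\eps_f 2^n)$ is correct, and your intent to invoke vertex-cover/matching duality on the violation graph to extract a vertex-disjoint family of $\Omega(\eps_f 2^n)$ violating comparable pairs is an ingredient the paper also uses (\Thm{vc}). Your diagnosis of the sorting pathology over general $\R$ is accurate as well. But the route you then propose --- iterating the coordinate-sorting operators $S_i$ and controlling the collateral damage with a potential $\Phi$ whose drop under each $S_i$ is chargeable to the current direction-$i$ violated edges --- is exactly the ``constructive'' route that the paper explicitly sets aside as the one stuck at $O(\eps^{-1} n\log|\R|)$. You never actually define $\Phi$; you state desiderata for it, and the step you yourself call the ``technical crux'' (a 2D face-by-face inequality making the per-sort charging work) is left entirely open, which is precisely where all earlier attempts along these lines broke down. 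As written this is a gap, not a proof.

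The paper's argument never modifies $f$. It takes $\M$ to be a \emph{maximum weight} matching in the violation graph, where the weight of a violating pair $(x,y)$ with $x\prec y$ is the violation magnitude $f(x)-f(y)$; maximality in cardinality gives $|\M|\ge\tfrac12\eps_f 2^n$, and maximality in weight is what powers the proof. For each coordinate $r$ it forms the dimension cut $H_r$ and looks at the alternating paths in the symmetric difference of $H_r$ with the $\M$-pairs lying entirely on one side of the cut. Starting at the lower endpoint of an $\M$-pair that crosses coordinate $r$, one walks the alternating path; if no $H_r$-edge along it is violated, one can rewire $\M$ by deleting a prefix of the path's $\M$-pairs and inserting a ``shifted'' set of comparable pairs of strictly larger total weight, contradicting the choice of $\M$. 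Summing over $r$, every $\M$-pair is charged to a distinct violated hypercube edge. The two ingredients your plan is missing are the violation-magnitude weights on $\M$ (the Hamming-distance weights you propose play no comparable role) and the alternating-path rewiring, which replaces the sorting iteration entirely.
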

For general hypergrids $[k]^n$,~\cite{DGLRRS99} give a $O(\eps^{-1}n\log k\log |\R|)$-query monotonicity tester.
Since $|\R|$ can be as large as $k^n$, this gives a $O(\eps^{-1}n^2\log^2k)$-query tester. 
In this paper, we give a $O(\eps^{-1}n\log k)$-query monotonicity tester on hypergrids that generalizes the edge tester. 
This tester is also a uniform pair tester, in the sense it defines a set $\H$ of pairs, picks a pair uniformly at random from it, and checks for monotonicity among this pair. The pairs in $\H$ also differ in exactly one coordinate, as in the edge tester. 
\begin{theorem} \label{thm:mono-hg}
There exists a non-adaptive, one-sided \submit{\\}$O(\eps^{-1}n\log k)$-query monotonicity tester for functions\submit{\\} $f:[k]^n\mapsto \R$.
\end{theorem}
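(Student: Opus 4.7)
The plan is to design a dyadic-pair tester for $[k]^n$ that generalizes the edge tester of \Thm{mono-hc}. For each dimension $i \in [n]$ and each scale $\ell \in \{0,1,\ldots,\ceil{\log k}-1\}$, let $\cH_{i,\ell}$ consist of all pairs $(x,y) \in [k]^n \times [k]^n$ with $x$ and $y$ agreeing outside coordinate $i$, with $y_i = x_i + 2^\ell$, and with $x_i, y_i$ lying in opposite halves of a common dyadic block of length $2^{\ell+1}$ along coordinate $i$. Let $\cH = \bigcup_{i,\ell} \cH_{i,\ell}$; a routine count gives $|\cH| = \Theta(n k^n \log k)$. The tester draws $\Theta(\eps^{-1}n\log k)$ independent uniform pairs from $\cH$ and rejects iff any sampled $(x,y)$ satisfies $f(x)>f(y)$. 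It is manifestly non-adaptive, one-sided, and has the claimed query complexity.

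Correctness reduces to a \emph{violation lemma}: if $\eps_f \geq \eps$, then $\cH$ contains at least $\Omega(\eps k^n)$ violating pairs. My plan is to prove this by composing two reductions. First, an \emph{axis reduction}: let $\eps_f^{(i)}$ denote the average over the $k^{n-1}$ axis-$i$ lines $L$ of the 1D distance to monotonicity of $f|_L$; I aim to show $\eps_f = O\!\left(\sum_{i=1}^n \eps_f^{(i)}\right)$. The strategy is to inductively construct a monotone $g$ close to $f$ via an axis-by-axis optimal-repair procedure, choosing the repairs so that subsequent stages do not substantially increase the axis distance of untreated coordinates; the key technical fact is a matching-of-values invariant across paired lines that exploits the order-statistic structure of optimal 1D repairs (naive sorting is insufficient, as it can cascade and inflate future axis distances).

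Second, a \emph{1D dyadic density bound}: for any $g:[k]\to \R$ with distance to monotonicity $\delta$, the number of violating dyadic pairs $(u,v)$ in $[k]$ is at least $\Omega(\delta k)$. I would prove this by exhibiting an injective map from a ``bad'' set of indices of size $\Omega(\delta k)$ (those any optimal monotone repair must modify) to distinct violating dyadic pairs: climb the dyadic tree from each bad leaf up to the first ancestor at which the left-half maximum exceeds the right-half minimum, and extract a violating pair at that scale.

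Combining: for each axis $i$, the 1D density bound applied to each of the $k^{n-1}$ axis-$i$ lines yields $\Omega(\eps_f^{(i)} k^n)$ violating pairs in $\cH$ along axis $i$; summing over $i$ and invoking the axis reduction gives $\Omega(\eps_f k^n)$ violating pairs in $\cH$, which proves the violation lemma and hence the theorem. The principal obstacle will be the axis reduction: unlike the Boolean hypercube case behind \Thm{mono-hc}, controlling the cascading effect of 1D repairs across axes is delicate and is the main hypergrid-specific technical hurdle.
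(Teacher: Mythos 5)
Your tester is essentially the same dyadic pair tester the paper uses, and your overall structure (reduce to a ``many dyadic violations'' lemma, then sample) is sound, but the proof plan has a gap at the key step: the \emph{axis reduction} $\eps_f = O\bigl(\sum_{i=1}^n \eps_f^{(i)}\bigr)$ with a range-independent constant. This is not a detail you can defer; with constant loss, it is essentially as strong as the theorem you are trying to prove, and it is precisely the obstacle that kept this problem open. The route you sketch---repair axis-by-axis, controlling cascading via a ``matching-of-values invariant'' on optimal 1D repairs---is the classical sorting-operator strategy of \cite{DGLRRS99}. The sorting lemma there controls the number of violating \emph{pairs} in untreated axes, not the 1D \emph{line distances} $\eps^{(j)}$ you need, and those two quantities can differ by a $\Theta(k)$ factor on a single line (e.g.\ a line of value $2,1,4,3,6,5,\ldots$ has distance $1/2$ but only a $\Theta(1/k)$ fraction of violating pairs). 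Bridging that gap is exactly where the $\log|\R|$ range-reduction loss entered the prior work, and nothing in your proposal shows how your variant of the repair procedure avoids it. Until you can prove the axis reduction, the argument does not close.

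The paper takes a genuinely different route that avoids dimension reduction altogether. It fixes a \emph{maximum-weight} matching $\M$ of the violation graph (weight $f(x)-f(y)$) that also maximizes a potential $\Phi$ counting, for each matched pair and each coordinate, the largest power of $2$ dividing the coordinate difference. It then decomposes $\M$ not line-by-line but against each dyadic cut $H^r_{a,b}$: the ``cross'' pairs of $\M$ with respect to that cut are charged, via alternating paths with $\st{\M}{H^r_{a,b}}$, to distinct violated $H^r_{a,b}$-pairs. Maximality of the weight rules out alternating paths that terminate without hitting a violation; maximality of $\Phi$ rules out the remaining ``skew'' terminations specific to hypergrids. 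This converts distance directly into dyadic violations without ever invoking a per-axis distance decomposition, which is why it sidesteps the cascading issue you correctly flag. Your 1D dyadic density bound is a reasonable (and essentially known, cf.\ the line tester of \cite{EKK+00}) ingredient, but on its own it does not substitute for the missing axis reduction.
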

\begin{remark}
Subsequent to the conference version of this work, the authors proved 
a $\Omega(\eps^{-1}n\log k)$-query lower bound for monotonicity testing on the hypergrid 
for any (adaptive, two-sided error) tester~\cite{ChSe13-2}. Thus, both the above theorems
are optimal.
\end{remark}
A property that has been studied recently is that of a function being {\em Lipschitz}: 
a function $f:[k]^n \mapsto \R$ is called $c$-Lipschitz if for all $x,y\in[k]^n, |f(x) - f(y)| \leq c\|x-y\|_1$.
The Lipschitz testing question was introduced by~\cite{JR11}, who show that for the range $\R = \delta \ZZ$,
$O((\delta\eps)^{-1}n^2)$ queries suffice for Lipschitz testing. 
For general hypergrids,~\cite{AJMS12} recently give an $O((\delta\eps)^{-1}n^2k\log k)$-query tester
for the same range.
\cite{BlJh+13} prove a  lower bound of $\Omega(n\log k)$ queries for 
non-adaptive monotonicity testers (for sufficiently large $\R$).
We give a tester for the Lipschitz property that improves all known results and matches
existing lower bounds. Observe that the following holds for arbitrary ranges.
\begin{theorem} \label{thm:lip-hc} 
There exists a non-adaptive, one-sided \submit{\\}$O(\eps^{-1}n\log k)$-query $c$-Lipschitz tester for functions\submit{\\} $f:[k]^n\mapsto \R$.
\end{theorem}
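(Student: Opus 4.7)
The plan is to cast the $c$-Lipschitz property as a ``bounded-derivative'' property in the same unified framework used to prove \Thm{mono-hg}, and then re-use the monotonicity analysis essentially verbatim. Observe that a function $f:[k]^n \to \R$ is $c$-Lipschitz iff along every axis-parallel edge $(x, x+e_i)$ the discrete derivative satisfies $f(x+e_i) - f(x) \in [-c, c]$; this is structurally the same kind of local constraint as monotonicity, which demands the derivative lie in $[0, \infty)$. I would define the Lipschitz tester to use the same pair set $\H$ as in \Thm{mono-hg} (pairs $(x, y)$ differing in exactly one coordinate by a power-of-two offset), and reject iff $|f(x) - f(y)| > c\|x-y\|_1$. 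One-sidedness is immediate: if $f$ is $c$-Lipschitz, applying the Lipschitz bound along the path of $2^j$ unit edges from $x$ to $y$ yields $|f(x)-f(y)| \leq c\cdot 2^j = c\|x-y\|_1$. The task thus reduces to showing that if $\eps_f > \eps$, then at least an $\Omega(\eps/(n\log k))$ fraction of pairs in $\H$ violate the Lipschitz condition.

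The analysis proceeds by the same dimension reduction used for monotonicity. One first shows that the average, over all $n k^{n-1}$ axis-parallel lines $L$, of the one-dimensional distance $\eps_{f|_L}$ to being $c$-Lipschitz on $[k]$ is at least $\eps/n$. This step transfers from monotonicity because repairing $f$ along a single line preserves the property on the $n-1$ perpendicular lines through each point, since the local edge-constraint is axis-separable in exactly the same way. The remaining step is a one-dimensional structural lemma: for any $g:[k]\to\R$ at $\ell_0$-distance $\eps_1$ from being $c$-Lipschitz on $[k]$, the fraction of violating pairs among the $O(k\log k)$ power-of-two pairs on $[k]$ is at least $\Omega(\eps_1/\log k)$. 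Combining the two steps yields the claimed query bound of $O(\eps^{-1} n \log k)$.

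The main obstacle is the one-dimensional structural lemma, because the Lipschitz constraint is two-sided: both $g(y)-g(x) \leq c(y-x)$ and $g(x)-g(y) \leq c(y-x)$ must hold. A clean first attempt is the reduction that $g$ is $c$-Lipschitz on $[k]$ iff $g(x)-cx$ is non-increasing and $g(x)+cx$ is non-decreasing, so every power-of-two pair that violates Lipschitz violates monotonicity in at least one of these two auxiliary functions. Applying the monotonicity 1-D structural lemma to each auxiliary and summing would give the result up to a factor of $2$, provided that the distance from $g$ to Lipschitz is at most a constant times the sum of the distances of the two auxiliaries to monotonicity; the latter can be argued because any coordinate at which an optimal Lipschitz repair differs from $g$ must be a repair coordinate for at least one of the two monotonicity problems. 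If this bookkeeping turns out to be lossy, the fallback is to adapt the hierarchical (dyadic) charging argument behind the monotonicity 1-D lemma directly, replacing the interval $[0,\infty)$ with $[-c,c]$ throughout; the charging of violating pairs at each dyadic scale relies only on the interval structure and should transfer with minor modifications.
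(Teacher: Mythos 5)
Your proposal takes a genuinely different route from the paper, and it has a gap at its load-bearing step.

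The paper proves \Thm{lip-hc} as a special case of \Thm{main}, via the pseudo-distance $\d(x,y)$ of \Sec{dist}, a maximum-weight matching $\M$ of the violation graph that also maximizes the potential $\Phi$, and the alternating-path machinery of \Sec{altpaths}--\Sec{lip}. There is no reduction to one-dimensional lines at any point; the argument is global in the whole hypergrid. Your plan instead follows the classical ``constructive route'': (i) a dimension-reduction lemma (average line distance $\geq \eps/n$), (ii) a 1-D structural lemma for Lipschitz on $[k]$, and (iii) a reduction from 1-D Lipschitz to two 1-D monotonicity problems via the auxiliary functions $g+cx$ and $g-cx$. Step (iii) is a nice idea and is essentially sound: the violation graph for $c$-Lipschitz on $[k]$ is the union of the violation graphs for the two auxiliary monotonicity problems, and the vertex-cover characterization (the analogue of \Lem{vc}) gives $\eps_L \leq \eps_{h_1}+\eps_{h_2}$, so the 1-D bound transfers up to a factor of $2$.

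The gap is step (i). You assert that ``repairing $f$ along a single line preserves the property on the $n-1$ perpendicular lines through each point, since the local edge-constraint is axis-separable.'' This is false even for monotonicity: sorting a line (the standard repair operator) does \emph{not} preserve monotonicity on perpendicular lines, and can create new violations there. What the monotonicity literature actually proves is the weaker counting statement that sorting in dimension $i$ does not \emph{increase} the number of violated pairs in any other dimension $j$; this requires a nontrivial argument (a rearrangement-style lemma), and chaining it across dimensions bounds the distance only in terms of counts of violated pairs, not of line distances directly. For the Lipschitz property the corresponding repair operator is more delicate still, and the constructive route through it is exactly what gave the $O((\delta\eps)^{-1}n^2k\log k)$ bound of prior work that this paper improves upon; the paper explicitly flags that ``for the Lipschitz property, this route becomes incredibly complex.'' A clean dimension-reduction lemma of the form ``$\mathbb{E}_L[\eps_{f|_L}] \geq \eps_f/n$'' is not available off the shelf for Lipschitz (nor, without extra loss, for general-range monotonicity), so your chain of lemmas does not close. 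The paper's non-constructive matching/alternating-path approach is precisely the device for avoiding this bottleneck; without it, or a correct proof of (i), the argument as proposed does not yield $O(\eps^{-1}n\log k)$.
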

Our techniques apply to a class of properties that contains monotonicity and Lipschitz. We call it the bounded derivative property, or more technically, the $(\alpha,\beta)$-Lipschitz property.
Given parameters $\alpha,\beta$, with $\alpha < \beta$, we say that a function $f:[k]^n \mapsto \R$ 
has the $(\alpha,\beta)$-Lipschitz property if for any $x\in [k]^n$, and $y$ obtained by increasing exactly 
one coordinate of $x$ by exactly $1$, we have $\alpha \leq f(y)-f(x) \leq \beta$. Note that when $(\alpha=0,\beta=\infty)$\footnote{If the reader is uncomfortable with the choice of $\beta$ as $\infty$, $\beta$ can be thought
of as much larger than any value in $f$.}, 
we get monotonicity. When $(\alpha=-c,\beta=+c)$, we get $c$-Lipschitz. 
\begin{theorem}\label{thm:main}
There exists a non-adaptive, one-sided\submit{\\} $O(\eps^{-1}n\log k)$-query $(\alpha,\beta)$-Lipschitz tester for functions\submit{\\} $f:[k]^n\mapsto \R$, for any $\alpha<\beta$.
There is no dependence in the running time on $\alpha$ and $\beta$.
\end{theorem}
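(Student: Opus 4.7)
The plan is to treat monotonicity ($\alpha=0,\beta=\infty$) and $c$-Lipschitz ($\alpha=-c,\beta=c$) as two specializations of the same framework, and to generalize the hypergrid tester of \Thm{mono-hg}. First I would normalize by replacing $f$ with $\tilde f(x)=f(x)-\alpha\sum_i x_i$, which turns the $(\alpha,\beta)$-Lipschitz condition into the $(0,\beta-\alpha)$-Lipschitz condition at every axis-parallel step; this has no effect on the distance to the property and lets me treat $\alpha=0$ throughout. The tester itself is the natural generalization of the one for \Thm{mono-hg}: fix a set $H$ of pairs $(x,y)$ differing in a single coordinate by a power of two, sample uniformly from $H$, and reject if the observed difference $f(y)-f(x)$ falls outside $[\alpha\Delta,\beta\Delta]$, where $\Delta=\|y-x\|_1$. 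With $|H|=\Theta(nk^{n-1}\log k)$, non-adaptiveness and one-sidedness are built in, and the main task is to lower bound the rejection probability by $\Omega(\eps/(n\log k))$ when $\eps_f\geq \eps$.

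The rejection-probability bound factors into a dimension reduction and a one-dimensional analysis. The dimension reduction I would aim for is of the standard Goldreich-Goldwasser-Lehman-Ron-Samorodnitsky form: if $f$ is $\eps$-far from being $(\alpha,\beta)$-Lipschitz, then $\sum_{i,L} d_i^{(L)} = \Omega(\eps k^n)$, where $d_i^{(L)}$ is the 1D distance of the restriction of $f$ to an axis-parallel line $L$ in direction $i$. The mechanism is the usual one: define an axis-$i$ ``repair operator'' $R_i$ that replaces $f|_L$ by the nearest $(\alpha,\beta)$-Lipschitz function on every line $L$ parallel to direction $i$, and argue that $R_i$ never increases the per-line distance in any other direction $j$, so that iterating $R_1,\ldots,R_n$ produces a valid function while modifying at most $\sum_{i,L} d_i^{(L)}$ points in total.

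The 1D analysis is to show that for $g\colon [k]\to\R$ at distance $d$ from the 1D $(\alpha,\beta)$-Lipschitz property, the dyadic-pair set contains $\Omega(d)$ violated pairs. The classical monotonicity proof encodes any inversion at any scale as an inversion at a single dyadic scale, losing a $\log k$ factor; here I would run the same decomposition, splitting on whether the violation witnesses $f(y)-f(x)<\alpha\Delta$ or $f(y)-f(x)>\beta\Delta$. Combining the two stages, the number of violated pairs in $H$ is $\Omega(\eps k^n)$, which is an $\Omega(\eps/(n\log k))$ fraction of $|H|$, giving the desired query complexity by a standard amplification.

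The main obstacle is proving that the repair operator $R_i$ for the general $(\alpha,\beta)$-Lipschitz property commutes well with the axis-aligned structure, i.e., that repairing along one direction cannot increase 1D distance along another. For pure monotonicity this is the well-known ``sorting preserves sortedness in orthogonal directions'' argument, but with both upper and lower derivative bounds the per-line optimal repair is a ``clamped'' nearest $(\alpha,\beta)$-Lipschitz projection rather than a sort, and showing that pairwise application of these clamped projections is well-behaved under transposition will be the technical heart of the proof. Once that lemma is in hand, the rest of the argument is routine from the structure of \Thm{mono-hg}.
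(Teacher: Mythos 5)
Your proposal follows the classical ``constructive'' dimension-reduction route (sort/repair axis by axis, argue the repairs commute), which is precisely the approach the paper's introduction singles out as the reason earlier analyses degrade with $|\R|$: \cite{DGLRRS99} gets $O(\eps^{-1}n\log k\log|\R|)$, not $O(\eps^{-1}n\log k)$. The load-bearing step of your plan --- that the axis-$i$ repair operator $R_i$ ``never increases the per-line distance in any other direction $j$'' --- is already false for monotonicity over a real-valued range, even on the $2\times 2$ grid. Take $f(0,0)=1$, $f(1,0)=3$, $f(0,1)=2$, $f(1,1)=0$. Before sorting along dimension~$1$, the two lines in dimension~$2$ have restrictions $(1,2)$ and $(3,0)$, with total $1$D distance $0+1=1$. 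Sorting dimension~$1$ replaces $(f(0,1),f(1,1))=(2,0)$ by $(0,2)$, after which the dimension-$2$ lines are $(1,0)$ and $(3,2)$ with total distance $1+1=2$. So the repair \emph{increased} the orthogonal distance, and the $\sum_{i,L} d_i^{(L)}=\Omega(\eps k^n)$ inequality you want does not follow by iterating $R_1,\dots,R_n$. (For Boolean range this commutation lemma does hold, which is exactly why the $O(n/\eps)$ bound was known for $\R=\{0,1\}$ but was a long-standing open problem for general $\R$.) Since for $k=2$ your claimed reduction already implies the main open problem, there is no way for it to be ``routine''; the difficulty you defer to the Lipschitz clamping operator is present for plain monotonicity.

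The paper takes a genuinely different, non-constructive route precisely to sidestep this obstruction. Instead of repairing $f$, it fixes a \emph{maximum-weight} matching $\M$ in the violation graph, where the weight of a violating pair $(x,y)$ is $\max(f(x)-f(y)-\d(x,y),\,f(y)-f(x)-\d(y,x))$ for a pseudo-distance $\d$ built from $\alpha,\beta$, and among maximum-weight matchings chooses one maximizing a number-theoretic potential $\Phi$. It then runs alternating paths between $\M$ and each dyadic matching $H^r_{a,b}$, and shows via rewiring arguments (\Lem{lip-cons}, \Lem{lip-cons2}, \Lem{lip-skew}) that every such path contains a violated $H$-pair; the potential $\Phi$ is what rules out the ``skew'' pairs that have no analogue in the hypercube. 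No function is ever modified, and no commutation of per-axis repairs is needed. Your normalization $\tilde f(x)=f(x)-\alpha\sum_i x_i$ is a fine simplification (it corresponds to the special case $\alpha=0$ of the paper's pseudo-distance, essentially $\d(x,y)=(\beta-\alpha)\hd{x}{\hcd(x,y)}$), and the one-dimensional dyadic-pair count you invoke is the right ingredient in spirit, but neither rescues the dimension-reduction step. To reach \Thm{main} you would need to replace that step with something like the matching/alternating-path machinery of Sections~\ref{sec:altpaths}--\ref{sec:lip}, or a new argument of comparable strength.
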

Although \Thm{main} implies all the other theorems stated above, we prove \Thm{mono-hc} and \Thm{mono-hg} before giving a whole proof of \Thm{main}. 
The final proof is a little heavy on notation, and the proof of the monotonicity theorems illustrates the new techniques.
\subsection{Previous work} \label{sec:prev}
We discuss some other previous work on monotonicity testers for hypergrids.
For the 
total order (the case $n=1$), which has been called the monotonicity testing problem on the {\em line}, \cite{EKK+00} give a $O(\eps^{-1}\log k)$-query tester, and this is optimal \cite{EKK+00,E04}. 
Results for general posets were first obtained by~\cite{FLNRRS02}.
The elegant concept of $2$-TC spanners introduced by \cite{BGJRW09} give a general class of monotonicity testers for various posets.
It is known that such constructions give testers with polynomial dependence of $n$ for the hypergrid \cite{BGJ+12}.
For constant $n$,~\cite{HK04,AC04} give a $O(\eps^{-1}\log k)$-query tester (although the dependency on $n$ is exponential). 
From the lower bound side,~\cite{FLNRRS02} first prove an $\Omega(\sqrt{n})$ 
(non-adaptive, one-sided) lower bound for hypercubes. 
\cite{BCG+10} give an $\Omega(n/\eps)$-lower bound for non-adaptive, one-sided testers,
and a breakthrough result of~\cite{BBM11} prove a general $\Omega(\min(n,|\R|^2)$ lower bound.
Testing the Lipschitz property is a natural question that arises in many applications. For instance, given a computer program, one may like to test the robustness of the program's output to the input. This has been studied before, for instance in \cite{CGLN11}, however, the solution provided looks into the code to detect if the program satisfies Lipschitz or not. The property testing setting is a black-box approach to the problem. \cite{JR11} also provide an application to differential privacy; a class of mechanisms known as Laplace mechanisms proposed by \cite{DMNS06} achieve privacy in the process of outputting a function by adding a noise proportional to the Lipschitz constant of the function. 
\cite{JR11} gave numerous results on Lipschitz testing over hypergrids.
They give a 
$O(\eps^{-1}\log k)$-query tester for the 
line, a general $\Omega(n)$-query lower bound for the Lipschitz testing question on the hypercube, and 
a non-adaptive, 1-sided $\Omega(\log k)$-query  lower bound on the line. 
\ignore{
\subsection{Edge testers for the generalized Lipschitz property} \label{sec:res}
We state our main technical result in this section.
Let $\cB^n := \{0,1\}^n$ and $\Hyp^n = (\cB^n,H)$ be the undirected graph
where $H = \{(x,y): \hd{x}{y} = 1\}$. Given $x,y\in \B$, we say 
$x \preceq y$ if $x_i \leq y_i$ for all $1\le i\le n$. 
We will be working with functions $f:\B \mapsto \RR$ defined on the $n$-dimensional hypercube. 
(We use the somewhata ugly notation of $\beta_{-1}, \beta_1$ because it makes
life easier for the proofs.)
\begin{definition} \label{def:lip} Let $\beta_{1} > \beta_{-1}$ be in $\RR$. A function $f:\cB^n \mapsto \RR$
is \emph{$(\beta_{-1},\beta_{1})$-Lipschitz} if: $\forall (x,y) \in H$, where $x \prec y$, 
$\beta_{-1} \leq f(y) - f(x) \leq \beta_{1}$. The set (or alternately, the \emph{property}) of $(\beta_{-1},\beta_{1})$-Lipschitz
functions is denoted by $\cL_{\beta_{-1},\beta_{1}}$.
\end{definition} 
This is a general class of properties: monotonicity is precisely the $(0,\infty)$-Lipschitz property,
and the usual definition of $c$-Lipschitz is that of $(-c,+c)$-Lipschitz. 
(If the reader is uncomfortable with the choice of $\beta_{1}$ as $\infty$, $\beta_{1}$ can be thought
of as much larger than any value in $f$.)
We now give a laundry list of fairly standard property testing definitions that make it convenient
to express our main result.
\begin{definition} \label{def:viol}
\begin{asparaitem}
	\item The \emph{distance to being $(\beta_{-1},\beta_{1})$-Lipschitz} is $\min_{g \in \cL_{\beta_{-1},\beta_{1}}} \Delta(f,g)$.
We use $\eps_{\beta_{-1},\beta_{1},f}$ to denote this quantity. 
	\item A \emph{violated edge} for $\cL_{\beta_{-1},\beta_{1}}$ is an edge $(x,y) \in H$ ($x \prec y$)
	such that $f(y) - f(x) \notin [\beta_{-1},\beta_{1}]$.
	\item The \emph{$\cL_{\beta_{-1},\beta_{1}}$ edge tester} queries (the endpoints of) a uniform random edge of $\Hyp^n$
	and rejects $f$ if the edge is violated. 
\end{asparaitem}
\end{definition}
Our main result can now be succinctly stated as follows.
\begin{theorem} \label{thm:main} Let $f:\{0,1\}^n \mapsto \RR$. There are at least
$\eps_{\beta_{-1},\beta_{1},f} 2^{n-1}$ violated edges for $\cL_{\beta_{-1},\beta_{1}}$.
\end{theorem}
A standard corollary of this theorem gives an optimal property tester for $\cL_{\beta_{-1},\beta_{1}}$.
We provide a proof for completeness.
\Thm{mono-hc} and \Thm{lip-hc} follow directly by setting the property $\cL_{\beta_{-1},\beta_{1}}$ appropriately.
\begin{corollary} \label{cor:main} The $\cL_{\beta_{-1},\beta_{1}}$ edge tester independently invoked $O(n/\eps)$ times
will always accept a function $f \in \cL_{\beta_{-1},\beta_{1}}$ and, with probability $> 2/3$, will reject $f$
such that $\eps_{\beta_{-1},\beta_{1},f} > \eps$.
\end{corollary}
\begin{proof} The $\cL_{\beta_{-1},\beta_{1}}$ edge tester never rejects a function in $\cL_{\beta_{-1},\beta_{1}}$.
Suppose $\eps_{\beta_{-1},\beta_{1},f} > \eps$. By \Thm{main}, there are at least $\eps 2^n$ violated edges for $\cL_{\beta_{-1},\beta_{1}}$
in $f$.
The total number of edges in $\Hyp^n$ is $n2^{n-1}$, so the fraction of violated edges is at least $\eps/2n$.
The probability that one invocation of the edge tester rejects is at least $\eps/2n$. The probability
that $4n/\eps$ independent invocations of the edge tester does \emph{not} reject is at most
$(1-\eps/2n)^{4n/\eps} < 1/3$.
\end{proof}
}
\section{The Proof Roadmap} \label{sec:min}
The challenge of property testing is to relate the tester behavior
to the distance of the function to the property. Consider monotonicity over the hypercube. 
To argue about the edge tester, we want to show that
a large distance to monotonicity implies many violated edges. Most current
analyses of the edge tester go via what we could call the \emph{contrapositive route}.
If there are few violated edges in $f$, then they show the distance to monotonicity
is small. This is done by modifying $f$ to make it monotone, and bounding the number of changes as a function of the 
number of violated edges.
There is an inherently ``constructive" viewpoint to this: it specifies a method
to convert non-monotone functions to monotone ones.
Implementing this becomes difficult when the range is large, and existing bounds degrade
with $\R$. For the Lipschitz property, this route becomes incredibly complex.
A non-constructive approach may give more power, but how
does one get a handle on the distance? The \emph{violation graph} provides a method.
The violation graph has $[k]^n$ as the vertex set and an edge between
any pair of comparable domain vertices $(x,y)$ ($x \prec y$) if $f(x) > f(y)$.
The following theorem can be found as Corollary 2 in \cite{FLNRRS02}.
\begin{theorem}[\cite{FLNRRS02}]\label{thm:vc}
The size of the minimum vertex cover of the violation graph is exactly $\eps_f|\D|$.
As a corollary, the size of any maximal matching in the violation graph is at least $\frac{1}{2}\eps_f|\D|$.
\end{theorem}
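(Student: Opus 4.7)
The statement has two parts: the vertex cover equality and the matching corollary. I will prove the equality by showing both inequalities, then derive the corollary from a standard graph-theoretic fact.

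For the upper bound on the minimum vertex cover, I would start with the nearest monotone function $g$ to $f$ and let $S = \{x \in \D : f(x) \neq g(x)\}$, so $|S| = \eps_f |\D|$. I claim $S$ is a vertex cover of the violation graph. Take any violation edge $(x,y)$ with $x \prec y$ and $f(x) > f(y)$. Since $g$ is monotone, $g(x) \le g(y)$, so the strict inequality $f(x) > f(y)$ forces at least one of $f(x) \ne g(x)$ or $f(y) \ne g(y)$; hence $x \in S$ or $y \in S$. This gives $\min\text{-VC} \le \eps_f |\D|$.

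The main content is the reverse inequality: given a minimum vertex cover $S$, I want to produce a monotone $g$ that agrees with $f$ on $\D \setminus S$, which bounds $\eps_f |\D| \le |S|$. The natural candidate is the ``downward envelope''
\[
g(x) \;=\; \max\bigl\{f(y) \,:\, y \preceq x,\; y \notin S\bigr\},
\]
with $g(x)$ set to some floor value (say $-\infty$, or any sufficiently small value in $\R$) when no such $y$ exists. Monotonicity is immediate because enlarging $x$ can only enlarge the set of eligible $y$'s. The key check is that $g$ agrees with $f$ on $\D \setminus S$: for $x \notin S$, taking $y = x$ shows $g(x) \ge f(x)$; conversely, for any $y \preceq x$ with $y \notin S$, the pair $(y,x)$ lies entirely outside the vertex cover, hence is \emph{not} an edge of the violation graph, so $f(y) \le f(x)$, giving $g(x) \le f(x)$. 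Thus $g \equiv f$ on $\D \setminus S$, so $\eps_f |\D| \le |\{x : f(x) \neq g(x)\}| \le |S|$. Combined with the previous paragraph this yields the equality.

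The corollary follows from the classical fact that in any graph, the $2|M|$ endpoints of a maximal matching $M$ form a vertex cover (any uncovered edge could be added to $M$, contradicting maximality). Applied to the violation graph, this gives $2|M| \ge \min\text{-VC} = \eps_f|\D|$, so $|M| \ge \tfrac{1}{2}\eps_f|\D|$.

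The only subtle step is the extension argument defining $g$: one has to be careful about points $x$ with no comparable $y \preceq x$ outside $S$, and to verify that no violation arises within $\D \setminus S$ (which is exactly the statement that $S$ is a vertex cover). Everything else is a direct unpacking of definitions.
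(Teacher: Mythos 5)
Your proof is correct. The paper itself cites \Thm{vc} from \cite{FLNRRS02} without reproducing a proof, but later proves a generalization --- \Lem{vc}, for the $(\alpha,\beta)$-Lipschitz pseudo-distance $\d$ --- whose argument, specialized to monotonicity, is essentially yours. The one real difference is in how the lower bound on the vertex cover is established. You define the monotone completion $g$ in one shot via the downward envelope $g(x) = \max\{f(y): y\preceq x,\ y\notin S\}$ and directly verify it is monotone and agrees with $f$ off $S$. The paper instead proves an abstract extension claim (\Clm{partial}): a partial function satisfying $f'(x)-f'(y)\leq\d(x,y)$ on a subset $V$ can be completed to the whole domain, by filling in one point $u$ at a time with any value in the non-empty interval $\bigl[\max_{x\in V}(f'(x)-\d(x,u)),\ \min_{y\in V}(f'(y)+\d(u,y))\bigr]$, the non-emptiness coming from the triangle inequality. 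Your envelope is exactly the choice of the left endpoint of that interval at every step, collapsed into a closed-form formula; the triangle inequality that justifies it is, for monotonicity, just transitivity of $\preceq$. The paper's existential phrasing is chosen because it carries over cleanly to the general pseudo-distance, but your explicit formula works there too (with $g(x)=\max_{y\notin S}(f(y)-\d(y,x))$), so neither approach is strictly weaker. Your corollary argument (matching endpoints are a vertex cover) is the same one used implicitly throughout the paper.
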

Can a large matching in the violated graph imply there are many violated edges? 
\cite{LR01} give an approach by reducing the monotonicity testing problem on the hypercube 
to routing problems.
For any $k$ source-sink pairs on the {\em directed} hypercube, 
suppose $k\mu(k)$ edges need to be deleted in order to pairwise separate them.
Then $O(n/\eps\mu(n))$ queries suffice for the edge tester. Therefore, if $\mu(n)$ is at least a constant, one gets a linear query monotonicity tester on the cube. Lehman and Ron \cite{LR01} explicitly ask for bounds on $\mu(n)$.~\cite{BCG+10} show that $\mu(n)$ could be as small as $1/\sqrt{n}$, thereby putting an $\Omega(n^{3/2}/\eps)$ bottleneck to the above approach. 
In the reduction above, the function values are altogether ignored. More precisely, once one moves to the combinatorial routing question on source-sink pairs, the fact that they are related by actual function values is lost. 
Our analysis crucially uses the value of the functions to argue about the structure of the maximal matching in the violation graph. 
\def\M{\mathbf M}
\def\H{\mathbf H}
\def\D{\mathbf D}
\subsection{It's all about matchings}
The key insight is to move to  a \emph{weighted} violation graph. The weight of violation $(x,y)$ depends on the property at hand; for now it suffices to know that for monotonicity, the weight of $(x,y)$ ($x \prec y$) is $f(x) - f(y)$.
This can be thought of as a measure of the magnitude of the violation. 
(Violation weights were also used for Lipschitz testers~\cite{JR11}.)
We now
look at a maximum {\em weighted} matching $\M$ in the violation graph. Naturally, this
is maximal as well, so $|\M|\geq \frac{1}{2}\eps_f |\D|$. 
All our algorithms pick a pair uniformly at random 
from a predefined set $\H$ of pairs, and check the property on that pair. 
For the hypercube domain, $\H$ is the set
of all edges of the hypercube.
Our  
analysis is based on the construction of a one-to-one
mapping  from pairs in $\M$ to {\em violating} pairs in $\H$.
This mapping implies the number of violated pairs in $\H$ is at least $|\M|$, and thus the uniform pair tester succeeds with probability $\Omega(\eps_f|\D|/|\H|)$, implying  $O(|\H|/\eps_f|\D|)$ queries suffice to test monotonicity. 
For the hypercube, $|\H| = n2^{n-1}$ and $|\D| = 2^n$, giving the final bound of $O(n/\eps_f)$.
To obtain this mapping, we first decompose $\M$ into sets $M_1,M_2,\ldots,M_t$ such that each pair in $\M$ is in at least one $M_i$. Furthermore, we {\em partition} $\H$ into perfect
matchings $H_1,H_2,\ldots, H_t$.
In the hypercube case, $M_i$ is the collection of pairs in $\M$ whose $i$th coordinates differ, and $H_i$ is the collection of hypercube edges differing {\em only} in the $i$th coordinate; for the hypergrid case, the partitions are more involved.
We map each pair in $M_i$ to a unique violating pair in $H_i$. For simplicity, let us 
ignore subscripts and call the matchings $M$ and $H$. We will assume in this discussion that $M \cap H = \emptyset$. Consider the {\em alternating paths and cycles} generated by the symmetric difference of 
$\M \setminus M$ and $H$. Take a point $x$ involved in a pair of $M$, and note that it can
only be present as the endpoint of an alternating path, denoted by $\bS_x$. 
Our main technical lemma shows that each such $\bS_x$ contains a violated $H$-pair.
\begin{figure}[htbp]
\begin{center}
\includegraphics[scale=0.2]{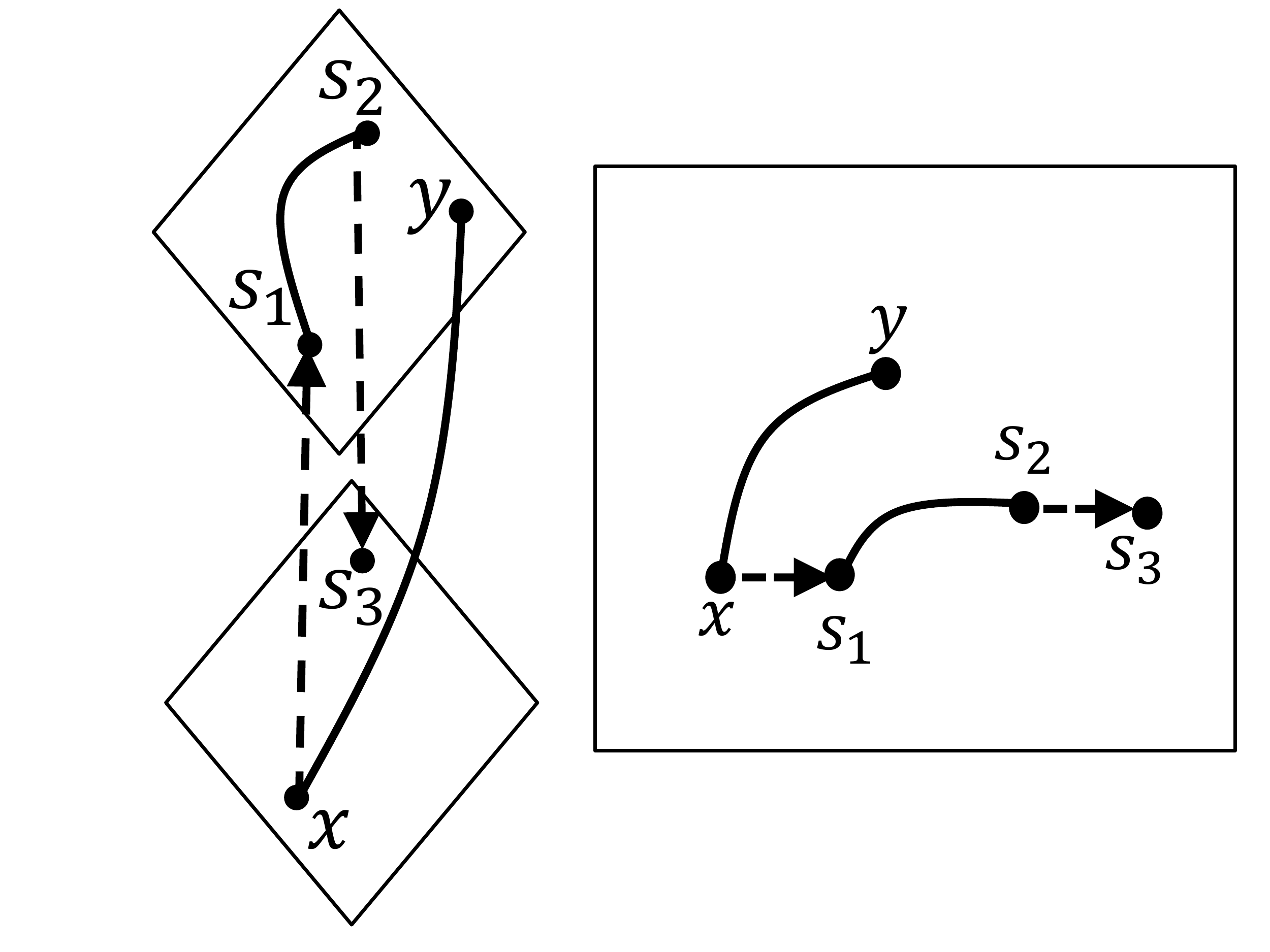}
\caption{\small{The alternating path: the dotted lines connect pairs of $M$, the solid curved lines connect pairs of $\M \setminus M$, and the dashed lines are $H$-pairs.}}
\label{fig:two-step}
\end{center}
\end{figure}
\submit{\vspace{-15pt}}
\subsection{Getting the violating $H$-pairs} \label{sec:pair}
Consider $M$, the pairs of $\M$ which differ on the $i$th coordinate, and $H$ is the 
set of edges in the dimension cut along
this coordinate. Let $(x,y)\in M$, and say $x[i] = 0$ giving us $x\prec y$.  (We denote the $a$th coordinate
of $x$ by $x[a]$.) Recall that the weight of this violation is $f(x) - f(y)$.
It is convenient to think of $\bS_x$ as follows. We begin from $x$ and take the incident
$H$-edge to reach $s_1$ (note that that $s_1 \prec y$). Then we take the $(\M\setminus M)$-pair
containing $s_1$ to get $s_2$. But what if no such pair existed? This can be possible in two ways: either $s_1$ was $\M$-unmatched or $s_1$ is $M$-matched.
If $s_1$ is $\M$-unmatched, then delete $(x,y)$ from $\M$ and add $(s_1,y)$ to obtain a new matching. If $(x,s_1)$ was not a violation, and therefore $f(x) < f(s_1)$\footnote{We are assuming here that all function values are distinct; as we show in \Clm{pert} this is without loss of generality.}, we get $f(s_1) - f(y) > f(x) - f(y)$. Thus the new matching has strictly larger weight, contradicting the choice of $\M$. If $s_1$ was $M$-matched, then let $(s_1,s_2)\in M$. First observe that 
$s_1 \succ s_2$. This is because $s_1[i] = 1$ (since $s_1[i] \neq x[i]$) and since $(s_1,s_2)\in M$ they must differ on the $i$th coordinate implying $s_2[i]=0$. 
This implies $s_2 \prec y$, and so we could
replace pairs $(x,y)$ and $(s_2,s_1)$ in $\M$ with $(s_2,y)$. 
Again, if $(x,s_1)$ is not a violation, then $f(s_2) - f(y) > [f(s_2) - f(s_1)] + [f(x) - f(y)]$, contradicting the maximality of $\M$.
Therefore, we can taje a $(\M\setminus M)$-pair to reach $s_2$. 
With care, this argument can be carried over till we find a violation, and a detailed description of this is given in \Sec{mono-hc}. 
Let us demonstrate a little further (refer to the left of \Fig{two-step}). Start with $(x,y)\in M$, and $x[i] = 0$. Following the sequence $\bS_x$,
the first term $s_1$ is $x$ projected ``up" dimension cut $H$. The second term is obtained by following
the $\M \setminus M$-pair incident to $s_1$ to get $s_2$. 
Now we claim that $s_2 \succ s_1$, for otherwise one can remove $(x,y)$ and $(s_1, s_2)$
and add $(x,s_1)$ and $(s_2, y)$ to increase the matching weight. (We just made the argument earlier; the interested reader
may wish to verify.)
In the next step, $s_2$ is projected ``down" along $H$ to get $s_3$. 
By the nature of the dimension cut $H$, $x \prec s_3$ and $s_1 \prec y$. So, if $s_3$ is unmatched
and $(s_2, s_3)$ is not a violation, we can again rearrange the matching to improve the weight. We alternately
go ``up" and ``down" $H$ in traversing $\bS_x$, because of which we can modify the pairs in $\M$ and
get other matchings in the violation graph. The maximality of $\M$ imposes additional structure,
which leads to violating edges in $H$.
In general, the spirit of all our  arguments is as follows. Take an endpoint of $M$ and start walking
along the sequence given by the alternating paths generated by $\M \setminus M$ and $H$. Naturally,
this sequence must terminate somewhere. If we never
encounter a violating pair of $H$ during the entire sequence, then we can rewire the matching
$\M$ and increase the weight. Contradiction!
Observe the crucial nature of alternating up and down movements along $H$. This happens because the first coordinate
of the points in $\bS_x$ switches between the two values of $0$ and $1$ (for $k=2$).
Such a reasoning does not hold water in the hypergrid domain. 
The structure of $\H$ needs to be more complex, and is not
as simple as a partition of the edges of the hypergrid. Consider the extreme case of the line $[k]$.
Let $2^r$ be less than $k$. We break $[k]$ into contiguous pieces of length $2^r$. We can now match
the first part to the second, the third to the fourth, etc. In other words, the pairs look like $(1,2^r+1)$, $(2, 2^r+2)$,
$\ldots$, $(2^r, 2^{r+1})$, then $(2^{r+1}+1,2^{r+1}+2^r+1)$, $(2^{r+1}+2,2^{r+1}+2^r+2)$, etc.
We can construct such matchings for all powers of $2$ less than $k$, and these will be our $H_i$'s.
Those familiar with existing proofs for monotonicity on $[k]$ will not be surprised by this set of matchings. 
All methods need to cover all ``scales" from $1$ to $k$ (achieved
by making them all powers of $2$ up to $k$). 
It can also be easily generalized to $[k]^n$.
What about the choice of $\M$? Simply choosing $\M$ to be a maximum weight matching 
and setting up the sequences $\bS_x$ does not seem to work. It suffices to look at $[k]^2$
and the matching $H$ along the first coordinate where $r=0$, so the pairs are $\{(x,x') | x[1] = 2i-1, x'[1] = 2i, x[2] = x'[2]\}$. 
A good candidate for the corresponding $M$ is the set of pairs in $\M$ that connect lower endpoints of $H$
to higher endpoints of $H$. Let us now follow $\bS_x$ as before. 
Refer to the right part of \Fig{two-step}. Take $(x,y) \in M$
and let $x \prec y$. We get $s_1$ by following the $H$-edge on $x$, so $s_1 \succ x$.
We follow the $\M \setminus M$-pair incident to $s_1$ (suppose it exists) to get $s_2$. 
It could be that $s_2 \succ s_1$. It is in $s_3$ that we see a change from the hypercube.
We could get $s_3 \succ s_2$, because there is no guarantee that $s_2$ is at the higher
end of an $H$-pair. This could not happen
in the hypercube. We could have a situation where $s_3$ is unmatched, we have not
encountered a violation in $H$, and yet we cannot rearrange $\M$ to increase the weight. For a concrete
example, consider the points as given in \Fig{two-step} with function values
$f(x) = f(s_1) = f(s_3) = 1$, $f(y) = f(s_2) = 0$.
Some thought leads to the conclusion that $s_3$ must be less than $s_2$ for any such rearrangement argument to work.
The road out of this impasse is suggested by the two observations. First, the difference in $1$-coordinates
between $s_1$ and $s_2$ must be odd. Next, we could rearrange and match $(x,s_2)$ and $(s_1,y)$ instead.
The weight may not increase, but this matching might be more amenable to the alternating path approach.
We could start from a maximum weight matching that also maximizes the number of pairs where coordinate
differences are even. Indeed, the insight for hypergrids is the definition of a \emph{potential} $\Phi$ for $\M$.
The potential $\Phi$ is obtained by summing for every pair $(x,y) \in \M$ and every coordinate $a$, 
the largest power of $2$ dividing the 
difference $|x[a] - y[a]|$. We can show that a maximum weight matching that also maximizes $\Phi$
does not end up in the bad situation above. With some addition arguments, we can generalize 
the hypercube proof. We describe this in \Sec{mono-hg}. 
\subsection{Attacking the generalized Lipschitz property} 
One of the challenges in dealing
with the Lipschitz property is the lack of direction. The Lipschitz
property, defined as $\forall x,y, |f(x) - f(y)| \leq \|x-y\|_1$, is an 
undirected property, as opposed to monotonicity. In monotonicity, a point $x$ only
``interacts" with the subcube above and below $x$, while in Lipschitz, constraints
are defined between all pairs of points. Previous results for Lipschitz testing 
require very technical and clever machinery to deal with this issue, since
arguments analogous to monotonicity do not work.
The alternating paths argument given
above for monotonicity also exploits this directionality, as can be seen by
heavy use of inequalities in the informal calculations. Observe that in the monotonicity
example for hypergrids in \Fig{two-step}, the fact that $s_3 \succ s_2$ (as opposed to $s_3 \prec s_2$)
required the potential $\Phi$ (and a whole new proof). 
A subtle point is that while the property of Lipschitz is undirected, violations
to Lipschitz are ``directed". If $|f(x) - f(y)| > \|x-y\|_1$,
then either $f(x) - f(y) > \|x-y\|_1$ or $f(y) - f(x) > \|x-y\|_1$, but never
both. This can be interpreted as a direction for violations.  In the alternating
paths for monotonicity (especially for the hypercube), the partial order relation
between successive terms follow a fixed pattern. This is crucial
for performing the matching rewiring. 
As might be guessed, the weight of a violation $(x,y)$ becomes $\max(f(x) - f(y) - \|x-y\|_1, f(y) - f(x) - \|x-y\|_1)$.
For the generalized Lipschitz problem, this is defined in terms of a pseudo-distance over
the domain. We look at the maximum weight matching as before (and use the same potential function $\Phi$).
The notion of ``direction" takes the place of the partial order relation in monotonicity.
The main technical arguments show that these directions follow a fixed pattern in the corresponding alternating paths.
Once we have this pattern, we can perform the matching rewiring argument
for the generalized Lipschitz problem.
\ignore{
Assume that all function values are unique. Each of the pairs
in $M$ will be uniquely identified with a violated edge (not quite, but it is
not far from the truth). Consider a pair in $M$ $(x,y)$ that ``crosses" the $r$-th dimension. 
This means that $x$ and $y$ differ in their $r$th bit. 
Let us try to find a violated edge in the $r$th dimension associated with it.
This will be done by trying to increase the matching weight by replacing pairs.
Since this is not possible, we will gain structural information about these pairs.
Now for the magic.
Let $y'$ be obtained by flipping the $r$th bit of $y_r$ (set $x \prec y$, so $y'_r = 0$). We have $x \prec y'$. 
If $f(y') > f(y)$, we are done. Suppose not, so $f(y') < f(y)$ and $f(x) - f(y') > f(x) - f(y)$. If we could match
$(x,y')$ instead of $(x,y)$, the matching weight would go up! Because $M$ has maximum weight,
$y'$ itself must be present in a matched pair $(y',y'')$. Furthermore, we can show that $y' \succ y''$.
If not, then $f(y') - f(y'') > 0$ (since $(y',y'')$ is a violation).
So $f(x) - f(y'') > [f(x)-f(y)] + [f(y') - f(y'')]$. We can replace $(x,y)$ and $(y',y'')$
in the matching by the single pair $(x,y'')$ and increase the weight, contradicting
the maximality of $M$. Observe how the maximality of $M$ allows us to make many
arguments about these pairs and incident edges.
So we have pairs $(x,y)$, $(y',y'')$, where $y'_r = y''_r = 0$ (and $y'' \prec y'$).
We now flip the $r$th bit of $y''$ to get $z$, where $z_r = 1$. We can show
that $z \prec y$. (The interested reader is recommended to prove this, to get
a feel for the argument.) So we could try to match $(x,y')$ and $(y'',z)$, and gain
some more properties of these pairs.
And so the argument proceeds. We keep alternately following
pairs in $M$ and edges crossing the $r$th dimension, and we show that eventually
a violated edge is encountered. Furthermore, starting from a different pair
of $M$, we prove that a different violated edge is reached.
But what about the generalized Lipschitz property? It turns out the basic ideas
still work, despite the fact that monotonicity
has an inherent directionality, making for easier proofs.
The weights of the violation graph
measure how much pairs violates the $(\beta_{-1},\beta_{1})$-Lipschitz condition.
The charging
of pairs of $M$ crossing the $r$th-dimension to violated edges
goes along the similar lines, with a lot more notation. Many arguments that were somewhat trivial or easy
for monotonicity require more work now. These also involve some monotonous
case analyses, thereby showing us that monotonicity is a fundamental
aspect of these proofs.
}
\ignore{
\subsection{Some preliminaries} \label{sec:not}
We will focus on a fixed domain $\B := \prod_{r=1}^n [m_r]$ and a fixed property $\cL_{\beta_{-1}, \beta_{1}}$.
We let $\Hyp$ denote the minimal Hasse diagram on $\B$ representing the partial order. So
the edges of $\Hyp$ are pairs where $\|x-y\|_1 = 1$ and $x \prec y$.
We use $e_r$ for the vector with $m_r/2$ in the $r$th coordinate and $0$ elsewhere.
We define the projection operator $\proj$ as follows: $\proj(x) = x - e_r$ if $x_r > m_r/2$
and $\proj(x) = x + e_r$ otherwise.
It will be convenient to assume that $\beta_{-1}+\beta_{1} \geq 0$. This is no
loss of generality. The $(\beta_{-1},\beta_{1})$-Lipschitz property asserts
that for every $(x,y) \in \B$, $x \prec y$, $f(y) - f(x) \in [\beta_{-1},\beta_{1}]$.
This is equivalent to $f(x) - f(y) \in [-\beta_{1},-\beta_{-1}]$. This means that
$\cL_{\beta_{-1},\beta_{1}}$ is the same as $\cL_{-\beta_{1},-\beta_{-1}}$ on the ``reversed"
version of $\Hyp$ (where edges are directed in the opposite direction).
For convenience, we use $\eps_f$ instead of $\eps_{\beta_{-1}, \beta_1,f}$.
We use Gothic letters $\a, \b, \c, \ldots$ to as \emph{indicator} variables. These allow us to
cut down much of the case analysis and give succinct claim statements. These take the value
$+1$ if a particular condition holds and $-1$ otherwise. 
}
\section{The Alternating Paths Framework}\label{sec:altpaths}
The framework of this section is applicable for all $(\alpha,\beta)$-Lipschitz properties over hypergrids.
We begin with two objects: $\M$, the matching of violating pairs, and $H$, a matching of $\D$.
The pairs in $H$ will be aligned along a fixed dimension (denote it by $r$) with the same $\ell_1$ distance, called the $H$-distance.
That is, each pair $(x,y)$ in $H$ will differ only in one coordinate and the difference will be the same for all pairs.
We now give some definitions. 
\begin{itemize}
	\item $L(H), U(H)$: Each pair $(x,y) \in H$ has a ``lower" end $x$ and an ``upper" end $y$ depending on the value of the coordinate at which they differ.
	We use $L(H)$ (resp. $U(H)$) to denote the set of lower (resp. upper) endpoints. Note that $L(H) \cap U(H) = \emptyset$.
	\item $H$-straight pairs, $\st{\M}{H}$: All pairs $(x,y) \in \M$ with both ends in $L(H)$ or both in $U(H)$.
	\item $H$-cross pairs, $\cross{\M}{H}$: All pairs $(x,y) \in \M \setminus H$ such that $x \in L(H)$, $y \in U(H)$, and the $H$-distance divides $|y[r] - x[r]|$.
	\item $H$-skew pairs, $\sk{\M}{H}= \M \setminus (\st{\M}{H} \cup \cross{\M}{H})$.
	\item $X$: A set of lower endpoints in $\cross{\M}{H} \setminus H$.
\end{itemize}
Consider the domain $\{0,1\}^n$. We set $H$
to be (say) the first dimension cut. $\st{\M}{H}$ is the set of pairs in $(x,y) \in \M$ where
$x[1] = y[1]$. All other pairs $(x,y) \in \M$ ($x \prec y$) are in $\cross{\M}{H}$ since $x[1] = 0$ and $y[1] = 1$.
There are no $H$-skew pairs. The set $X$ will be chosen differently for the applications.
We require the following technical definition of \emph{adequate matchings}.
This arises because we will use matchings that are not necessarily perfect.
A perfect matching $H$ is always adequate. 
\begin{definition} \label{def:adeq} A matching $H$ is \emph{adequate} if
for every violation $(x,y)$, both $x$ and $y$ participate in the matching $H$.
\end{definition}
\noindent
We will henceforth assume that $H$ is adequate.
The symmetric difference of $\st{\M}{H}$ and $H$ is a collection of alternating paths and cycles. 
Because $H$ is adequate and $\st{\M}{H} \cap \cross{\M}{H} = \emptyset$,
any point in $x \in X$ is the endpoint of some alternating path (denoted by $\bS_x$). 
Throughout the paper, $i$ denotes an even index, $j$ denotes an odd index, and $k$
is an arbitrary index.
\smallskip
\begin{asparaenum}
	\item The first term $\bS_x(0)$ is $x$.
	\item For even $i$, $\bS_x(i+1) = H(\bS_x(i))$.
	\item For odd $j$: if $\bS_x(j)$ is $\st{\M}{H}$-matched, $\bS_x(j+1) = \M(\bS_x(j))$. Otherwise,	terminate.
\end{asparaenum}
\medskip
We start with a simple property of these alternating paths.
\begin{proposition}\label{prop:sub}
For $k\equiv 0,3 \md 4$, $s_k \in L(H)$. For non-negative $k \equiv 1,2 \md 4$, $s_k \in U(H)$.
\end{proposition}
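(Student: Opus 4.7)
The plan is to prove this by a straightforward induction on $k$, reading the parity of $k$ modulo $4$. The claim unpacks to the cycle $L,U,U,L,L,U,U,L,\ldots$ as $k$ increases from $0$, so two base cases and one inductive step (of length $4$) will suffice.

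For the base case $k=0$: by definition $\bS_x(0) = x \in X$, and $X$ was specified to be a set of lower endpoints, so $x \in L(H)$, matching $0 \equiv 0 \md 4$. Next, $k=1$ is obtained via the rule for even $i$: $\bS_x(1) = H(\bS_x(0))$. Since $H$ is a matching whose pairs each have one endpoint in $L(H)$ and one in $U(H)$, and $\bS_x(0) \in L(H)$, we get $\bS_x(1) \in U(H)$, matching $1 \equiv 1 \md 4$.

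For the inductive step, I would use two simple observations about how the construction rules move between $L(H)$ and $U(H)$. First, the even-index rule $\bS_x(i+1) = H(\bS_x(i))$ always flips membership: if $\bS_x(i) \in L(H)$ then $\bS_x(i+1) \in U(H)$ and vice versa. Second, the odd-index rule $\bS_x(j+1) = \M(\bS_x(j))$ is applied only when $\bS_x(j)$ is $\st{\M}{H}$-matched; by the definition of $H$-straight pairs, both endpoints of such a pair lie in $L(H)$ or both lie in $U(H)$, so this step preserves membership. Combining these rules in sequence produces the cycle $L \to U \to U \to L \to L \to U \to \ldots$ of length four. Assuming $\bS_x(4m) \in L(H)$ (matching $0 \md 4$), the even-rule gives $\bS_x(4m+1) \in U(H)$ ($1 \md 4$); the odd-rule preserves, so $\bS_x(4m+2) \in U(H)$ ($2 \md 4$); the even-rule flips, so $\bS_x(4m+3) \in L(H)$ ($3 \md 4$); and the odd-rule preserves, giving $\bS_x(4m+4) \in L(H)$, closing the induction.

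There is essentially no obstacle here: the only subtlety is that the path may terminate at some odd index if $\bS_x(j)$ is not $\st{\M}{H}$-matched, but the proposition is only asserting membership for those $k$ that actually appear in the sequence, so termination causes no issue. The statement follows once the two step-by-step invariants (flip on even indices, preserve on odd indices) are in place.
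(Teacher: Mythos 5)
Your proposal is correct and follows the same approach as the paper: the even-index step $s_i \to H(s_i)$ flips between $L(H)$ and $U(H)$ since $H$ pairs a lower endpoint with an upper endpoint, the odd-index step $s_j \to \M(s_j)$ preserves the set since $\st{\M}{H}$-pairs have both endpoints on the same side, and $s_0 = x \in L(H)$ starts the induction. The paper states this more tersely as ``a trivial induction,'' but the two invariants you identify are precisely what it uses.
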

\begin{proof}
If $k$ is even, then $(s_{k},s_{k+1})\in H$. Therefore, either $s_k \in L(H)$ and $s_{k+1} \in U(H)$
or vice versa. If $k$ is odd, $(s_k, s_{k+1})$ is a straight pair. So $s_k$ and $s_{k+1}$ lie in the same sets. 
Starting with $s_0 \in L(H)$,
a trivial induction completes the proof.
\end{proof}
The following is a direct corollary of \Prop{sub}.
\begin{corollary} \label{cor:sub}
If $i \equiv 0 \md 4$, $s_{i} \prec s_{i+1}$. If $i \equiv 2 \md 4$, $s_{i+1} \prec s_{i}$.
\end{corollary}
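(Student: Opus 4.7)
The plan is to read off the corollary almost mechanically from \Prop{sub} together with the definition of the matching $H$. The key observation is that for any even $i$, the pair $(s_i, s_{i+1})$ is by construction an $H$-pair (this is the even-step rule $\bS_x(i+1) = H(\bS_x(i))$). Since every $H$-pair has one endpoint in $L(H)$ and the other in $U(H)$, and the lower endpoint is componentwise below the upper endpoint (they differ only in coordinate $r$, with the $L(H)$ endpoint having the smaller $r$-th coordinate), I get $\prec$ between $s_i$ and $s_{i+1}$ once I know which one is in $L(H)$.

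Then I split on the residue of $i$ modulo $4$. If $i \equiv 0 \md 4$, \Prop{sub} gives $s_i \in L(H)$, and since $i+1 \equiv 1 \md 4$, the same proposition gives $s_{i+1} \in U(H)$, so $s_i \prec s_{i+1}$. If $i \equiv 2 \md 4$, \Prop{sub} gives $s_i \in U(H)$, and since $i+1 \equiv 3 \md 4$, $s_{i+1} \in L(H)$, so $s_{i+1} \prec s_i$.

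There is essentially no obstacle here: the corollary is purely a bookkeeping consequence of \Prop{sub}. The only thing worth stating carefully is the reminder that an $H$-pair always has its lower end in $L(H)$ and its upper end in $U(H)$, which is exactly how $L(H)$ and $U(H)$ were defined at the start of \Sec{altpaths}. The statement is valid in the range where the alternating path actually reaches index $i+1$; by the termination rule, this only needs the weak hypothesis that $\bS_x$ has not yet stopped, which is implicit in writing $s_{i+1}$ at all.
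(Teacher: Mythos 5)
Your proof is correct and matches the paper's reasoning exactly: the paper dispenses with this corollary by calling it a ``direct corollary'' of \Prop{sub}, and the chain you spell out (even-step rule makes $(s_i,s_{i+1})$ an $H$-pair, the lower end of an $H$-pair is componentwise below the upper end, and \Prop{sub} pins down which of $s_i,s_{i+1}$ is the lower end based on the residue mod $4$) is precisely that bookkeeping made explicit.
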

We will prove that every $\bS_x$ contains a violated $H$-pair.
Henceforth, our focus is entirely on some fixed sequence $\bS_x$.
\subsection{The sets $E_-(i)$ and $E_+(i)$} 
Our proofs are based on matching rearrangements, and this
motivates the definitions in this subsection.
For convenience, 
we denote $\bS_x$ by $x = s_0, s_1, s_2, \ldots$.
We also set $s_{-1} = y$.
Consider the sequence $s_{-1}, s_0, s_1, \ldots, s_i$, for
even $i > 1$. We define
$$ E_-(i) = (s_{-1}, s_0), (s_1, s_2), (s_3, s_4), \ldots, (s_{i-1}, s_i) = \{(s_j, s_{j+1}): \textrm{$j$ odd}, -1 \leq j < i\} $$
This is simply the set of $\M$-pairs in $\bS_x$ up to $s_i$. We now define $E_+(i)$.
Think of this as follows. We first pair up $(s_{-1},s_1)$. Then, we go in order of $\bS_x$ to pair
up the rest. We pick the first unmatched $s_k$ and pair it to the first term of \emph{opposite} parity.
We follow this till $s_{i+1}$ is paired. These sets are illustrated in \Fig{lemma4point3}.
\begin{eqnarray*}
 E_+(i) & = & (s_{-1}, s_1), (s_0, s_3), (s_2, s_5), \ldots, (s_{i-4},s_{i-1}), (s_{i-2}, s_{i+1}) \\
 & = & \{(s_{-1}, s_1)\} \cup \{(s_{i'}, s_{i'+3}): \textrm{$i'$ even}, 0 \leq i' \leq i-2\} 
\end{eqnarray*}
\begin{figure}[h]
\includegraphics[trim=4cm 6cm 3cm 6cm,clip=true,scale=0.55]{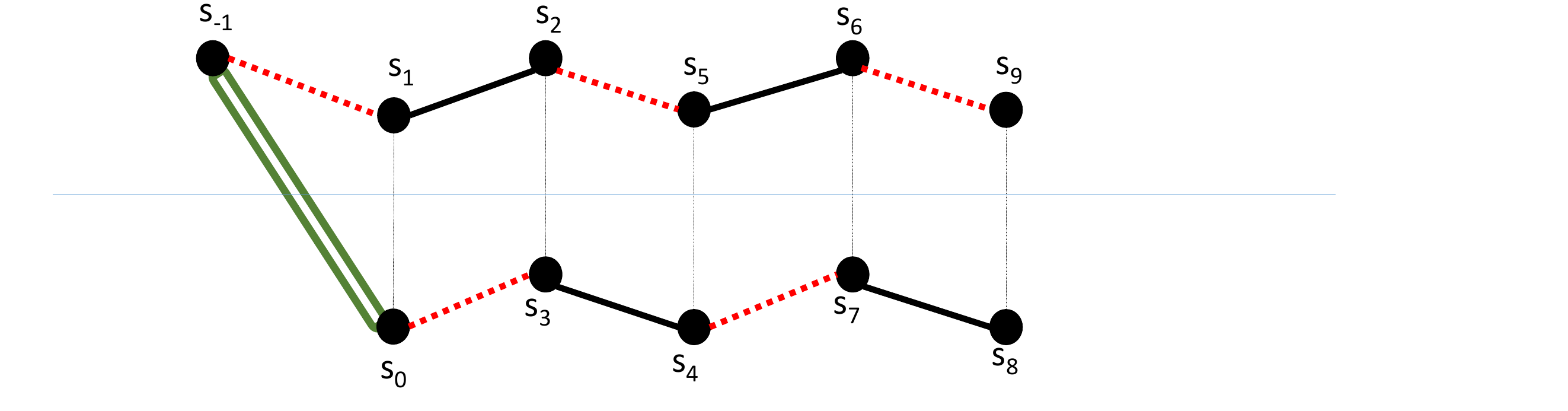}
\caption{Illustration for $i=8$. The light vertical edges are $H$-edges. The dark black ones are $\st{\M}{H}$-pairs. The green, double-lined one on the left is the starting $M$-pair. The dotted red pairs form $E_+(8)$. 
All points alove the horizonatal line are in $U(H)$, the ones below are in $L(H)$.}
\label{fig:lemma4point3}
\end{figure}
\vspace{-20pt}
\begin{proposition} \label{prop:ek} 
$E_-(i)$ involves $s_{-1}, s_0, \ldots, s_i$, 
while $E_+(i)$ involves $s_{-1}, s_0, \ldots, s_{i-1}, s_{i+1}$.
\end{proposition}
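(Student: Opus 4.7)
The plan is to prove this by direct enumeration: both claims are really just unpacking the explicit formulas defining $E_-(i)$ and $E_+(i)$ and reading off which sequence indices appear as endpoints. There is nothing deep to do, but since $i$ is even, one has to be careful about parities of the indices.

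For $E_-(i)$, I would start from the second form $\{(s_j, s_{j+1}) : j \text{ odd}, -1 \leq j < i\}$. Because $i$ is even, the odd values of $j$ in the range are exactly $j = -1, 1, 3, \ldots, i-1$, so the pair $(s_j, s_{j+1})$ contributes the first coordinate $s_j$ (covering the odd indices $-1, 1, 3, \ldots, i-1$) and the second coordinate $s_{j+1}$ (covering the even indices $0, 2, 4, \ldots, i$). Taken together these give every index in $\{-1, 0, 1, \ldots, i\}$, which is the first claim.

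For $E_+(i)$, I would again work from the explicit form $\{(s_{-1}, s_1)\} \cup \{(s_{i'}, s_{i'+3}) : i' \text{ even}, 0 \leq i' \leq i-2\}$. The singleton pair $(s_{-1}, s_1)$ contributes the indices $-1$ and $1$. As $i'$ ranges over the even values $0, 2, \ldots, i-2$, the first coordinates $s_{i'}$ cover the even indices $\{0, 2, \ldots, i-2\}$, and the second coordinates $s_{i'+3}$ cover the odd indices $\{3, 5, \ldots, i+1\}$ (here using that $i$ even forces $i+1$ odd). The union is $\{-1, 0, 1, 2, \ldots, i-1, i+1\}$, which is exactly the second claim. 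Note that $s_i$ is absent (since the last even $i' = i-2$ gives second coordinate $s_{i+1}$, not $s_i$), and $s_{i+1}$ is present in the last pair.

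The main ``obstacle''—such as it is—is just making sure the parity of $i$ (even, per the convention announced just before the definition of $\bS_x$) is used consistently so that the endpoint indices line up correctly at both ends of each enumeration; in particular that $i-1$ is odd and thus appears in the second claim, while $i$ itself does not. No properties of the function $f$, the matching $\M$, or the structure of $H$ are needed beyond the literal definitions.
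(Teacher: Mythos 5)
The paper states Proposition~\ref{prop:ek} without proof, treating it as immediate from the definitions of $E_-(i)$ and $E_+(i)$; your direct enumeration is exactly the expected unpacking, and it is correct—for $E_-(i)$ the odd $j$ in $[-1, i)$ together with $j+1$ cover $\{-1, 0, \ldots, i\}$, and for $E_+(i)$ the singleton $(s_{-1}, s_1)$ plus the pairs $(s_{i'}, s_{i'+3})$ for even $i' \in [0, i-2]$ cover $\{-1, 0, \ldots, i-1, i+1\}$ while omitting $i$.
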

\def\bSx{\bS_x}
\section{The Structure of $\protect\bSx$ for Monotonicity} \label{sec:struct}
We now focus on monotonicity, and show that $\bS_x$ is highly structured. (The proof for general Lipschitz
will also follow the same setup, but requires more definitions.) 
The weight of a pair $(x,y)$ is defined to be $f(x)-f(y)$ if $x\prec y$, and is $-\infty$ otherwise. 
We will assume that all function values are distinct. This is without loss of generality although we prove it formally later in \Clm{pert}.
Thus violating pairs have positive weight. 
We choose a maximum weight matching $\M$ of pairs. Note that every pair in $\M$ is a violating pair.
We remind the reader
that for even $k$, $(s_k,s_{k+1}) \in H$ and for odd $k$, $(s_k,s_{k+1})\in \st{\M}{H}$. 
\subsection{Preliminary observations} 
\begin{proposition} \label{prop:cross} For all $x,y \in L(H)$ (or $U(H)$),
$x \prec y$ iff $H(x) \prec H(y)$.
Consider pair $(x,y) \in \cross{\M}{H}$ such that $x \prec y$.
Then $H(x) \prec y$ and $x \prec H(y)$.
\end{proposition}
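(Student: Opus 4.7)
The proof is a direct coordinate-tracking argument built on a single structural fact about $H$: since $H$ is a matching along a fixed dimension $r$ with common $H$-distance $d$, every pair in $H$ has the form $(u,u+d\,e_r)$, so the operator $H$ acts as translation by $+d\,e_r$ on all of $L(H)$ and by $-d\,e_r$ on all of $U(H)$. In the concrete matchings used in the paper, the blocks of length $2d$ split into a length-$d$ lower half and a length-$d$ upper half, so the $r$-coordinates of $L(H)$ and those of $U(H)$ are disjoint sets; this disjointness is the only extra ingredient I will invoke beyond the raw translation property.

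For the first assertion, take $x,y\in L(H)$ (the $U(H)$ case is identical). Then $H(x)=x+d\,e_r$ and $H(y)=y+d\,e_r$, so $H(x)-H(y)=x-y$ coordinatewise. Since the coordinate-wise partial order is preserved under a common translation, $x\prec y$ iff $H(x)\prec H(y)$.

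For the second assertion, let $(x,y)\in \cross{\M}{H}$ with $x\prec y$, so $x\in L(H)$, $y\in U(H)$, and $d$ divides the non-negative integer $y[r]-x[r]$. I will prove $H(x)\prec y$; the companion statement $x\prec H(y)$ follows by the symmetric argument using $H(y)=y-d\,e_r$. At every coordinate $a\neq r$ we have $H(x)[a]=x[a]\leq y[a]$ from $x\prec y$. At coordinate $r$, the disjointness of $r$-values noted above forces $y[r]>x[r]$ strictly, and combined with $d\mid y[r]-x[r]$ this upgrades to $y[r]-x[r]\geq d$, i.e.\ $H(x)[r]=x[r]+d\leq y[r]$. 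So $H(x)\preceq y$. Finally, $(x,y)\in\M\setminus H$ rules out $y=H(x)$, which makes the inequality strict, giving $H(x)\prec y$.

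The argument has essentially no obstacle: it is a coordinate computation resting on the translation-by-$d\,e_r$ description of $H$ and on the disjointness of the $r$-coordinates of $L(H)$ and $U(H)$. The latter is where one could imagine trouble if $H$ were an arbitrary matching along dimension~$r$, but it is a built-in feature of the period-$2d$ aligned matchings used in both the hypercube and hypergrid constructions of \Sec{altpaths}, so it may be invoked freely.
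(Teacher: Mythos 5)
Your proof is correct and follows essentially the same route as the paper's: the first part is the translation-by-$d\,e_r$ observation, and the second part combines $x\prec y$, divisibility by the $H$-distance, and the disjointness of $L(H)$ and $U(H)$ to conclude $y[r]-x[r]\geq d$. You are a bit more explicit than the paper about why $y[r]>x[r]$ strictly and why the final comparison $H(x)\prec y$ is strict (via $(x,y)\notin H$), but these are details the paper leaves implicit rather than a genuinely different argument.
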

\begin{proof} For any point in $x \in L(H)$, $H(x)$ is obtained by adding the $H$-distance
to a specific coordinate. This proves the first part.
The $H$-distance divides $|[y[r] - x[r]|$ (where $H$ is aligned in dimension $r$) and $(x,y), x \prec y$ is a cross pair.
Hence $y[r] - x[r]$ is at least the $H$-distance. Note that $H(x)$
is obtained by simply adding this distance to the $r$ coordinate of $x$, so $H(x) \prec y$.
\end{proof}
\begin{proposition} \label{prop:ek-comp} All pairs in $E_-(i)$ and $E_+(i)$ are comparable.
Furthermore, $s_{1} \prec s_{-1}$ and for all even $0 \leq k \leq i-2$, $s_{k} \prec s_{k+3}$ iff $s_{k+1} \prec s_{k+2}$.
\end{proposition}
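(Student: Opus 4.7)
The plan is to reduce everything to Proposition \ref{prop:cross}, which says that $H$ is an order-preserving bijection between $L(H)$ and $U(H)$ and that the endpoints of a cross pair have the expected comparability with each other's $H$-images. Propositions \ref{prop:sub} and \ref{prop:cross} already pin down, for each index $k$, the half of $H$ in which $s_k$ lies, so the work is largely bookkeeping on parities.

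First I would dispatch $E_-(i)$: the pair $(s_{-1},s_0)$ is the starting pair in $\cross{\M}{H}$, and for odd $j\geq 1$ the pair $(s_j,s_{j+1})$ lies in $\st{\M}{H}$ by definition of $\bS_x$. Every pair in $\M$ is a violating pair (since $\M$ is a maximum weight matching and violations have positive weight, non-violations being $-\infty$), so each such pair is comparable. For $s_1\prec s_{-1}$, I would note that $s_0\in L(H)$ and $s_1=H(s_0)\in U(H)$, and apply the second assertion of Proposition \ref{prop:cross} to the cross pair $(s_0,s_{-1})$ (with $s_0\prec s_{-1}$): it gives exactly $H(s_0)\prec s_{-1}$.

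For the equivalence at an even $0\le k\le i-2$, I would check via Proposition \ref{prop:sub} that $s_k$ and $s_{k+3}$ lie in the \emph{same} half of $H$ while $s_{k+1},s_{k+2}$ lie in the \emph{opposite} half; concretely, if $k\equiv 0\pmod 4$ then $s_k,s_{k+3}\in L(H)$ and $s_{k+1},s_{k+2}\in U(H)$, and if $k\equiv 2\pmod 4$ the roles swap. In either case the $H$-edges $(s_k,s_{k+1})$ and $(s_{k+2},s_{k+3})$ tell us that $\{s_{k+1},s_{k+2}\}=\{H(s_k),H(s_{k+3})\}$ (with $H$ interpreted on the appropriate half). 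The first assertion of Proposition \ref{prop:cross} then gives $s_k\prec s_{k+3}\iff s_{k+1}\prec s_{k+2}$. Comparability of the $E_+(i)$-pair $(s_k,s_{k+3})$ is a free consequence: $(s_{k+1},s_{k+2})\in\st{\M}{H}\subseteq\M$ is a violating pair and hence comparable, so the iff transports comparability to $(s_k,s_{k+3})$.

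I expect the only real obstacle to be notational: one has to be careful that ``$H(\cdot)$'' in Proposition \ref{prop:cross} refers to the matching partner of a point, so depending on whether the common half is $L(H)$ or $U(H)$ one is applying the bijection in one direction or the other. This is handled cleanly by splitting on $k\bmod 4$, or equivalently by observing that $H$ being an involution makes the relation ``$H(u)\prec H(v)\iff u\prec v$'' symmetric in whether $u,v$ live in $L(H)$ or $U(H)$. No rewiring of $\M$ is needed here — that machinery is saved for the later lemmas which actually produce a violated $H$-pair in $\bS_x$.
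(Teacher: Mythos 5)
Your proof is correct and follows essentially the same route as the paper: comparability of $E_-(i)$ from membership in $\M$, the relation $s_1\prec s_{-1}$ from the second part of \Prop{cross} applied to the cross pair $(s_0,s_{-1})$, and the iff for $(s_k,s_{k+3})$ vs.\ $(s_{k+1},s_{k+2})$ from the first part of \Prop{cross} after observing $\{s_{k+1},s_{k+2}\}=\{H(s_k),H(s_{k+3})\}$. Your write-up is actually a touch more careful on the parity bookkeeping and on which pair is the cross pair, but the argument is the same.
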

\begin{proof}
All pairs in $E_-(k)$ are in $\M$, and hence comparable. Consider pair $(s_{-1},s_1) \in E_+(k)$.
Since $s_1 = H(s_0)$ and $(s_0,s_1)$ is a cross-pair, by \Prop{cross}, $s_1 \prec s_{-1}$.
Consider pair $(s_k, s_{k+3})$,
where $k$ is even. (Refer to \Fig{lemma4point3}.)
The pair $(H(s_k), H(s_{k+3})) = (s_{k+1}, s_{k+2})$ is in $\st{\M}{H}$. Hence, the points
are comparable and both lie in $L(H)$ or $U(H)$. By \Prop{cross},
$s_k, s_{k+3}$ inherit their comparability from $s_{k+1}, s_{k+2}$.
\end{proof}
For some even $i$, suppose $(s_{i},s_{i+1})$ is a not a violation.
\Cor{sub} implies
\begin{align} 
\textrm{If} \ i\equiv 0\md 4, f(s_{i+1}) - f(s_{i}) > 0.\nonumber\\
\textrm{If} \ i\equiv 2\md 4, f(s_{i}) - f(s_{i+1}) > 0. \label{eq:*} \tag{$*$}
\end{align}
We will also state an \emph{ordering} condition on the sequence. 
\begin{align} 
\textrm{If} \ i\equiv 0\md 4, \ s_i \prec s_{i-1}. \nonumber\\
\textrm{If} \ i\equiv 2\md 4, \ s_i \succ s_{i-1}. \label{eq:**} \tag{$**$}
\end{align}
Remember these conditions and \Cor{sub} together as follows. If $i \equiv 0 \md 4$,
$s_i$ is on smaller side, otherwise it is on the larger side.
In other words, if $i \equiv 0 \md4$,
$s_i$ is smaller than its ``neighbors" in $\bS_x$. For $i \equiv 2 \md 4$, it is bigger.
For condition \eqref{eq:*},  if $i \equiv 0 \md 4$, $f(s_i) < f(s_{i-1})$.
\subsection{The structure lemmas}
We will prove a series of lemmas that prove structural properties of $\bS_x$ that are intimately
connected to conditions \eqref{eq:*} and \eqref{eq:**}. These proofs are where much of the insight lies.
\begin{lemma}\label{lem:cons-mon}
Consider some even index $i$ such that $s_i$ exists.
Suppose conditions \eqref{eq:*} and \eqref{eq:**} held for all even indices $\leq i$.
Then, $s_{i+1}$ is $\M$-matched.
\end{lemma}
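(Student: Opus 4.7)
\medskip
\noindent
\textbf{Proof sketch.} The plan is to argue by contradiction against the maximality of $\M$. Suppose $s_{i+1}$ were $\M$-unmatched. The candidate rewiring
\[
\M' \;=\; (\M \setminus E_-(i)) \cup E_+(i)
\]
replaces the $\M$-edges along the initial segment of $\bS_x$ by the ``shifted'' pairs of $E_+(i)$. I would show that $\M'$ (after discarding any non-violations) is a matching in the violation graph of strictly greater weight than $\M$, which is the contradiction.

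First I would verify that $\M'$ is a legitimate matching. By \Prop{ek}, $E_-(i)$ covers exactly $\{s_{-1}, s_0, \ldots, s_i\}$, so removing it from $\M$ frees these vertices, while $E_+(i)$ covers $\{s_{-1}, s_0, \ldots, s_{i-1}, s_{i+1}\}$. Since $s_{i+1}$ is assumed $\M$-unmatched, no vertex is doubly covered. Moreover, \Prop{ek-comp} guarantees that every pair of $E_+(i)$ is $\prec$-comparable, so each of its weights is a finite real number.

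The heart of the proof is the telescoping identity
\[
D(i) \;:=\; w(E_+(i)) - w(E_-(i)) \;=\; \begin{cases} f(s_{i+1}) - f(s_i), & i \equiv 0 \md{4},\\ f(s_i) - f(s_{i+1}), & i \equiv 2 \md{4}, \end{cases}
\]
which, by condition $(*)$ at index $i$, is strictly positive. I would prove the identity by induction on even $i$. The base case is a direct expansion (for $i=0$ one just checks $w(s_{-1},s_1) - w(s_{-1},s_0) = f(s_1) - f(s_0)$). For the inductive step, observe that $E_-(i+2) = E_-(i) \cup \{(s_{i+1}, s_{i+2})\}$ and $E_+(i+2) = E_+(i) \cup \{(s_i, s_{i+3})\}$, giving
\[
D(i+2) \;=\; D(i) + w(s_i, s_{i+3}) - w(s_{i+1}, s_{i+2}).
\]
The orientation of the straight $\M$-pair $(s_{i+1}, s_{i+2})$ is read off from $(**)$ at $i+2$, and the orientation of the $E_+$-pair $(s_i, s_{i+3})$ is then supplied by \Prop{ek-comp}, which ties it to that of $(s_{i+1}, s_{i+2})$. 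Plugging these two weights in, together with the inductive value of $D(i)$ (and $(*)$ at $i$), yields the claimed form of $D(i+2)$ after cancellation.

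The main obstacle is the bookkeeping of signs across the four residue classes modulo $4$: one must know which endpoint of each pair is $\prec$-smaller before any weight can be written as a concrete function-value difference, and this information comes only through combining $(**)$ with \Prop{ek-comp}. Once $D(i)>0$ is established, the contradiction is immediate: some pairs of $E_+(i)$ may themselves fail to be violations, but dropping any such pair from $\M'$ only raises the total weight, so the resulting sub-matching $\M''$ of the violation graph satisfies $w(\M'') \geq w(\M') = w(\M) + D(i) > w(\M)$, contradicting the maximality of $\M$.
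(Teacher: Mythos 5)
Your proposal is correct and follows essentially the same approach as the paper's proof: the same rewiring $\M' = (\M \setminus E_-(i)) \cup E_+(i)$, the same reliance on \Prop{ek} and \Prop{ek-comp} for validity and sign bookkeeping, and the same conclusion that the weight difference collapses to the single term controlled by condition \eqref{eq:*}. The only stylistic difference is that you establish the telescoping identity $D(i)$ by induction on even $i$ rather than by directly expanding $w(E_-(i))$ and $w(E_+(i))$ and cancelling common terms as the paper does.
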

\begin{proof} The proof is by contradiction, so assume that $\M(s_{i+1})$ does not exist. 
Assume $i\equiv 0\md 4$. (The proof for the case $i\equiv 2\md 4$ is similar and omitted.)
Consider sets $E_-(i)$ and $E_+(i)$. Note that $s_{-1}, s_0, s_1, \ldots, s_{i+1}$ are all distinct.
By \Prop{ek}, $\M' = \M - E_-(i) + E_+(i)$ is a valid matching. We will
argue that $w(\M') > w(\M)$, a contradiction.
By condition \eqref{eq:**},
\begin{eqnarray}
w(E_-(i)) & = & [f(s_{0}) - f(s_{-1})] + [f(s_1) - f(s_2)] + [f(s_4) - f(s_3)] + \cdots \nonumber \\
&& \cdots + [f(s_{i-3}) - f(s_{i-2})] + [f(s_{i}) - f(s_{i-1})] \label{eq:we-}
\end{eqnarray}
By the second part of \Prop{ek-comp} (for even $k$, $s_{k} \prec s_{k+3}$ iff $s_{k+1} \prec s_{k+2}$) and condition \eqref{eq:**}, we know the comparisons for all pairs in $E_+(i)$. 
\begin{eqnarray}
w(E_+(i+2)) & = & [f(s_{1}) - f(s_{-1})] + [f(s_0) - f(s_3)] + [f(s_5) - f(s_2)] + \cdots \nonumber \\
& & \cdots +  [f(s_{i-4}) - f(s_{i-1})] + [f(s_{i+1}) - f(s_{i-2})] \label{eq:we+}
\end{eqnarray}
Note that the coefficients of common terms in $w(E_+(i))$ and $w(E_-(i))$ are identical.
The only terms not involves (by \Prop{ek}) are $f(s_{i+1})$ in $w(E_+(i))$ and $f(s_i)$ in $w(E_-(i))$.
The weight of the new matching is precisely $w(\M) - W_- + W_+ = w(\M) + f(s_{i+1}) - f(s_{i})$. 
By \eqref{eq:*} for $i$, this is strictly greater than $w(\M)$, contradicting the maximality of $\M$.
\end{proof}
So, under the condition of \Lem{cons-mon}, $s_{i+1}$ is $\M$-matched. We can also specify
the comparison relation of $s_{i+1}$, $\M(s_{i+1})$ (as condition \eqref{eq:**}) using an almost identical argument.
Abusing notation, we will denote $\M(s_{i+1})$ as $s_{i+2}$. (This is no abuse if $(s_{i+1}, \M(s_{i+1}))$
is a straight pair.)
\begin{lemma}\label{lem:cons-mon2}
Consider some even index $i$ such that $s_i$ exists.
Suppose conditions \eqref{eq:*} and \eqref{eq:**} held for all even indices $\leq i$.
Then, condition \eqref{eq:**} holds for $i+2$.
\end{lemma}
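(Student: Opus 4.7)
The plan is to proceed by contradiction, exactly in the spirit of \Lem{cons-mon}. For definiteness, I would consider the case $i \equiv 0 \md{4}$ (the other parity is handled symmetrically). Condition \eqref{eq:**} at $i+2$ then asserts $s_{i+2} \succ s_{i+1}$, so I would assume instead that $s_{i+2} \prec s_{i+1}$ and construct a matching $\M'$ of strictly larger weight than $\M$, contradicting its maximality.

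The natural candidate extends the swap used in \Lem{cons-mon} to absorb the newly identified $\M$-pair $(s_{i+1}, s_{i+2})$:
\[
\M' \;=\; \bigl(\M - E_-(i) - \{(s_{i+1}, s_{i+2})\}\bigr) \;+\; E_+(i) \;+\; \{(s_i, s_{i+2})\}.
\]
By \Prop{ek}, the vertices removed from $\M$ (namely $s_{-1}, \dots, s_i$ via $E_-(i)$ together with $s_{i+1}, s_{i+2}$) exactly match the vertices added back ($s_{-1}, \dots, s_{i-1}, s_{i+1}$ via $E_+(i)$ together with $s_i, s_{i+2}$), so $\M'$ is a valid matching provided $(s_i, s_{i+2})$ is comparable.

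The crux will be to show $s_{i+2} \prec s_i$. Since $s_{i+1} = H(s_i) \in U(H)$, the $\M$-pair $(s_{i+1}, s_{i+2})$ is either a cross/skew pair (with $s_{i+2} \in L(H)$) or a straight pair (with $s_{i+2} \in U(H)$). In the cross case, applying the second part of \Prop{cross} to $(s_{i+2}, s_{i+1}) \in \cross{\M}{H}$ with $s_{i+2} \prec s_{i+1}$ gives immediately $s_{i+2} \prec H(s_{i+1}) = s_i$. Reusing the identity $w(E_+(i)) - w(E_-(i)) = f(s_{i+1}) - f(s_i)$ from the proof of \Lem{cons-mon}, the weight change then comes out to
\begin{align*}
w(\M') - w(\M) &= \bigl[f(s_{i+1}) - f(s_i)\bigr] - \bigl[f(s_{i+2}) - f(s_{i+1})\bigr] + \bigl[f(s_{i+2}) - f(s_i)\bigr] \\
&= 2\bigl(f(s_{i+1}) - f(s_i)\bigr),
\end{align*}
which is strictly positive by condition \eqref{eq:*} at $i$, delivering the contradiction.

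The main obstacle I anticipate is the straight case, where $s_i \in L(H)$ and $s_{i+2} \in U(H)$ need not be comparable at all, so the rearrangement above breaks down. The plan there is to exploit the fact that $(s_{i+1}, s_{i+2}) \in \st{\M}{H}$ is exactly the condition under which the sequence extends to $s_{i+3} = H(s_{i+2}) \in L(H)$, and to use \Prop{cross} applied to the two points of the straight pair in $U(H)$ to convert $s_{i+2} \prec s_{i+1}$ into $s_{i+3} \prec s_i$; one then swaps in the pair $(s_i, s_{i+3})$ in place of $(s_i, s_{i+2})$, and the algebra should again collapse to a net gain of $2(f(s_{i+1}) - f(s_i)) > 0$. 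Skew pairs do not arise on the hypercube (where the $H$-distance is $1$), and on the hypergrid they will be excluded by the potential-maximizing choice of $\M$ previewed in \Sec{min}.
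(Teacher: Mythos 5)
Your cross-case argument is sound, but the overall plan has a genuine gap, and it stems from choosing $s_i$ rather than $s_{i-2}$ as the new partner for $s_{i+2}$. The paper's proof removes $E_-(i+2)$ and adds $E_+(i-2)\cup\{(s_{i-2},s_{i+2})\}$, leaving $s_i,s_{i+1}$ unmatched; the required comparison $s_{i+2}\prec s_{i-2}$ follows purely from \Prop{ek-comp} together with condition \eqref{eq:**} at $i$ and the contradiction hypothesis $s_{i+2}\prec s_{i+1}$. That derivation does not care whether $(s_{i+1},s_{i+2})$ is straight, cross, or skew, so the case split you are fighting with never arises, and the weight gain collapses cleanly to $f(s_{i+1})-f(s_i)>0$.

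The straight-pair branch of your plan does not close. You propose swapping in $(s_i,s_{i+3})$, but $s_{i+3}=H(s_{i+2})$ was never removed from $\M$ — if the alternating path continues, $s_{i+3}$ is $\st{\M}{H}$-matched, so $\M'$ is not a matching. Even setting aside validity, the arithmetic does not ``collapse to $2(f(s_{i+1})-f(s_i))$'': with the extra pair $(s_i,s_{i+3})$ the gain is $2\bigl(f(s_{i+1})-f(s_i)\bigr)+\bigl(f(s_{i+3})-f(s_{i+2})\bigr)$, and there is no control on the sign of the last term (the hypotheses say nothing about the $H$-edge $(s_{i+2},s_{i+3})$). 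Note also that the straight case is the generic one — it is precisely the case in which the alternating path continues — so it cannot be treated as a corner case.

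The skew case cannot be deferred to the potential $\Phi$ either. The dependency order is $\Lem{cons-mon2}\Rightarrow\Lem{last}\Rightarrow\Lem{skew}$; when \Lem{last} invokes \Lem{cons-mon2} at the terminal index $i=j-1$, the pair $(s_{i+1},\M(s_{i+1}))$ is exactly the cross-or-skew pair we are trying to classify, and the potential-maximizing property of $\M$ has not yet been brought to bear. Your argument must therefore produce the contradiction even when $(s_{i+1},s_{i+2})$ is skew, where \Prop{cross} is unavailable. Finally, a minor omission: the rewiring is only valid if $s_{-1},\dots,s_{i+2}$ are distinct, which is what \Clm{rep} guarantees under the stated hypotheses and should be cited.
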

Before we prove this lemma, we need the following distinctness claim.
\begin{claim} \label{clm:rep} Consider some odd $j$ such that $s_j$ and $\M(s_{j})$ exist.
Suppose condition \eqref{eq:*} and \eqref{eq:**} held for all even $i < j$.
Then the sequence $s_{-1}, s_0, s_1, \ldots, s_j, \M(s_{j})$ are distinct.
\end{claim}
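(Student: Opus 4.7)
The plan is to establish distinctness in three stages, exploiting the matching structure of $H$, $\st{\M}{H}$, and $\M$ together with the $L(H)/U(H)$ dichotomy from \Prop{sub}.

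First I would show that the walk vertices $s_0, s_1, \ldots, s_j$ are pairwise distinct. The graph $H \cup \st{\M}{H}$ has maximum degree two, so its connected components are simple paths and even cycles. Since $s_0 \in X \subseteq \cross{\M}{H} \setminus H$, its unique $\M$-partner $s_{-1}$ lies in a cross pair, so $s_0$ is not $\st{\M}{H}$-matched; hence $s_0$ is an endpoint of its component, and $\bS_x$ traces out a simple path from that endpoint, giving the required distinctness along the walk.

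Next I would handle $s_{-1}$. Because $(s_0, s_{-1})$ is a cross pair with $s_0 \in L(H)$, we have $s_{-1} \in U(H)$; by \Prop{sub} the only candidate collisions are with $s_k$ for $k \equiv 1, 2 \md 4$. The inductive application of \Lem{cons-mon} (justified by the hypothesis that \eqref{eq:*} and \eqref{eq:**} hold for all even indices below $j$) pins down the $\M$-partner of every intermediate $s_k$ inside the walk: for odd $k < j$ it is $s_{k+1}$, and for even $k$ with $0 < k < j$ it is $s_{k-1}$. A putative collision $s_{-1} = s_k$ would then force $s_0 = \M(s_{-1}) = \M(s_k)$ to coincide with that partner, contradicting the simplicity just proved. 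The edge case $k = 1$ is handled separately by the observation that $(s_0,s_{-1}) \notin H$ (since $s_0 \in X$), so $s_{-1} \neq H(s_0) = s_1$. For $\M(s_j)$ itself, the previous stages show that $\M$ already perfectly pairs $\{s_{-1}, s_0, \ldots, s_{j-1}\}$ through the pairs $(s_{-1},s_0), (s_1,s_2), \ldots, (s_{j-2},s_{j-1})$, so by the matching property $\M(s_j)$ cannot land in this set, and trivially $\M(s_j) \neq s_j$.

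The main obstacle I anticipate is the boundary case $k = j$ with $j \equiv 1 \md 4$: the hypothesis guarantees $\M(s_j)$ exists but does not force $s_j$ to be $\st{\M}{H}$-matched, so the clean matching contradiction used for $k < j$ no longer applies directly. My plan is to assume $s_{-1} = s_j$, note that this forces $(s_j, \M(s_j)) = (s_{-1}, s_0)$ to coincide with the original cross $\M$-pair, and then produce a strictly heavier alternative matching by the same $E_-(\cdot) \to E_+(\cdot)$ rewiring template that drives \Lem{cons-mon}, using conditions \eqref{eq:*} and \eqref{eq:**} to supply the weight inequalities telescopically. The resulting contradiction with the maximality of the weight of $\M$ rules out the collision and closes the claim.
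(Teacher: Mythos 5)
Your stages 1 and 2 follow essentially the same reasoning the paper uses (though it compresses them into a single sentence). Stage 1 is fine, and your matching-based argument for ruling out $s_{-1}=s_k$ for $0<k<j$ and for placing $\M(s_j)$ outside $\{s_{-1},\dots,s_{j-1}\}$ is correct, modulo the dependence you rightly flag: the latter step implicitly needs $s_j\neq s_{-1}$, which is exactly your stage 3.

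Stage 3 is where there is a genuine gap, and I think your plan as written will not go through. If $s_{-1}=s_j$ with $j\equiv 1\md 4$, the set of points $s_{-1},s_0,\dots,s_{j-1}$ together with the two types of edges forms an alternating \emph{cycle} (the cross pair $(s_{-1},s_0)$ closes it up). In that situation the set $E_+(j-1)$ is not a matching: it contains both $(s_{-1},s_1)$ and $(s_{j-3},s_j)=(s_{j-3},s_{-1})$, so $s_{-1}$ is doubly covered, and $\M-E_-(j-1)+E_+(j-1)$ is simply not a valid candidate to compare against $\M$. Notice this is not a technicality you can wave away: the validity of the $E_-\to E_+$ swap in \Lem{cons-mon} rests precisely on the distinctness of $s_{-1},s_0,\dots,s_{i+1}$, which is the content of \Clm{rep} itself, so invoking that same template to prove the claim is circular in the very case that matters. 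A natural salvage would be a truncated swap $E_-(j-1)\to E_+(j-3)$, but the telescoping there yields a weight change of $f(s_{j-3})-f(s_{j-1})$, whose sign the hypotheses do not control, so it does not give a contradiction.

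What the paper does instead is qualitatively different from the $E_-/E_+$ template: it replaces the \emph{two} pairs $(x,y)$ and $(s_{j-2},s_{j-1})$ by the \emph{single} pair $(x,s_{j-2})$, strictly decreasing the cardinality of the matching. Writing $w(\{(x,y),(s_{j-2},s_{j-1})\}) = [f(x)-f(s_{j-2})] - [f(y)-f(s_{j-1})] = w(x,s_{j-2}) - [f(s_j)-f(s_{j-1})]$ and applying \eqref{eq:*} at index $j-1\equiv 0\md 4$ (which gives $f(s_j)>f(s_{j-1})$) yields a strictly heavier matching, contradicting the choice of $\M$. Comparability $x\prec s_{j-2}$ is supplied by \eqref{eq:**} at $j-1$ together with \Prop{cross}. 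This ``merge two pairs into one'' move is the key idea that resolves the cycle case, and it is not recoverable from the weight-preserving $E_-/E_+$ rewiring you propose.
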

\begin{proof} (If $(s_j, \M(s_{j})) \in \st{\M}{H}$, this is obviously true. The 
challenge is when $\bS_x$ terminates at $s_j$.) The sequence from $s_0$ to $s_{j}$ is an alternating path, so
all terms are distinct. If $s_j \neq y$, then the claim holds. 
Suppose $s_j = y$. Note that $j > 1$, since $(x,y) \notin H$. Since $y \in U(H)$, by \Prop{sub}, $j \equiv 1 \md 4$. Condition \eqref{eq:**}
holds for $j-1$, so $s_{j-1} \prec s_{j} = y$ and by \Cor{sub}, $s_{j-1} \prec s_{j-2}$. Note that $(s_{j-1}, s_{j}) \in H$
and $(x,s_{j})$ is a cross pair. By \Prop{cross}, $x \prec s_{j-1}$ and thus $x \prec s_{j-2}$.
We replace pairs $A = \{(x,y), (s_{j-2},s_{j-1})\} \in \M$ with $(x,s_{j-2})$, and argue that the weight has increased.
We have $w(A) = [f(x) - f(y)] + [f(s_{j-1}) - f(s_{j-2})] = [f(x) - f(s_{j-2})] - [f(y) - f(s_{j-1})]$.
By condition \eqref{eq:*} on $i$, $f(y) = f(s_{j}) > f(s_{j-1})$,  contradicting the maximality of $\M$.
\end{proof}
\begin{proof} (of \Lem{cons-mon2}) By \Lem{cons-mon}, $\M(s_{i+1})$ exists.
Assume $i \equiv 0 \md 4$ (the other case is analogous and omitted). The proof is again by contradiction,
so we assume condition \eqref{eq:**} does not hold for $i+2$. This means $s_{i+2} = \M(s_{i+1}) \prec s_{i+1}$.
Consider sets $E_-(i+2)$ and $E' = E_+(i-2) \cup (s_{i-2},s_{i+2})$. By \Clm{rep}, $s_{-1}, s_0, s_1, \ldots, s_{i+2}$ are distinct.
So $\M' = \M - E_-(i) + E'$ is a valid matching and we argue that $w(\M') > w(\M)$.
By condition \eqref{eq:**} for even $i' < i+2$ and the assumption $s_{i+2} \prec s_{i+1}$.
\begin{eqnarray}
w(E_-(i+2)) & = & [f(s_{0}) - f(s_{-1})] + [f(s_1) - f(s_2)] + [f(s_4) - f(s_3)] + \cdots \nonumber \\
&& \cdots + [f(s_{i-3}) - f(s_{i-2})] + [f(s_{i}) - f(s_{i-1})] + [f(s_{i+2}) - f(s_{i+1})] \nonumber
\end{eqnarray}
Observe how the last term in the summation differs from the trend.
All comparisons in $E_+(i-2)$ are determined by \Prop{ek}, just as we argued
in the proof of \Lem{cons-mon}. The expression for $w(E_+(i-2))$ is basically
given in \eqref{eq:we+}. It remains to deal with $(s_{i-2},s_{i+2})$.
By condition \eqref{eq:**} for $i$, $s_i \prec s_{i-1}$.
Thus, by \Prop{ek}, $s_{i+1} \prec s_{i-2}$.
Combining with the assumption of $s_{i+2} \prec s_{i+1}$, we deduce $s_{i+2} \prec s_{i-2}$.
\begin{eqnarray}
w(E_+(i+2)) & = & [f(s_{1}) - f(s_{-1})] + [f(s_0) - f(s_3)] + [f(s_5) - f(s_2)] + \cdots \nonumber \\
& & \cdots +  [f(s_{i-3}) - f(s_{i-6})] + [f(s_{i-4}) - f(s_{i-1})] + [f(s_{i+2}) - f(s_{i-2})] \nonumber
\end{eqnarray}
The coefficients are identical, except that $f(s_i)$ and $f(s_{i+1})$ do not appear in $w(E_+(i+2))$.
We get $w(\M) - W_- + W_+ = w(\M) + f(s_{i+1}) - f(s_{i})$. By \eqref{eq:*} for $i$, we contradict the maximality of $\M$.
\end{proof}
A direct combination of the above statements yields the main structure lemma.
\begin{lemma} \label{lem:last} Suppose $\bS_x$ contains no violated $H$-pair. Let the last
term by $s_j$ ($j$ is odd). For every even $i \leq j+1$, condition \eqref{eq:**} holds, 
and $s_j$ belongs to a pair in $\sk{\M}{H}$.
\end{lemma}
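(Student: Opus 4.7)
The plan is to prove both halves of the lemma together by a strong induction on even indices that threads \Lem{cons-mon} and \Lem{cons-mon2} iteratively, and then dispatch the membership question by a short parity analysis at the final term.

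For the base case of \eqref{eq:**} at $i=0$, note that $x \in X$ means $s_0 = x$ is an $L(H)$-endpoint of the cross pair $(s_{-1}, s_0) \in \cross{\M}{H}$, and by the definition of $\cross{\M}{H}$ the $\prec$-smaller endpoint of a cross pair is forced into $L(H)$ (otherwise the pair would be classified as skew). Hence $s_0 \prec s_{-1}$, matching \eqref{eq:**} since $0 \equiv 0 \md{4}$.

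For the inductive step, fix an even index $i+2 \leq j+1$. The assumption that $\bS_x$ contains no violated $H$-pair ensures that \eqref{eq:*} holds at every even index $\leq i$, since each $(s_i, s_{i+1})$ lies in $H$. The inductive hypothesis gives \eqref{eq:**} at every even index $\leq i$. Applying \Lem{cons-mon2} then delivers \eqref{eq:**} at $i+2$. Iterating up to $i+2 = j+1$ completes the first half.

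For the second half, I would first apply \Lem{cons-mon} at the even index $j-1$ (using the \eqref{eq:*} and \eqref{eq:**} data just established) to conclude that $s_j$ is $\M$-matched. Termination of $\bS_x$ at $s_j$ rules out $(s_j, \M(s_j)) \in \st{\M}{H}$, so it only remains to exclude $\cross{\M}{H}$. Suppose for contradiction the pair were cross. By \Prop{sub}, $s_j \in U(H)$ when $j \equiv 1 \md{4}$ and $s_j \in L(H)$ when $j \equiv 3 \md{4}$; the cross-pair orientation (the $\prec$-smaller endpoint sits in $L(H)$) forces $\M(s_j) \prec s_j$ in the first case and $\M(s_j) \succ s_j$ in the second. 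But \eqref{eq:**} at $j+1$ (from the first half) asserts exactly the opposite in each case: $s_{j+1} \succ s_j$ when $j+1 \equiv 2 \md{4}$ and $s_{j+1} \prec s_j$ when $j+1 \equiv 0 \md{4}$. This contradiction forces $(s_j, \M(s_j)) \in \sk{\M}{H}$.

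The main obstacle is conceptual rather than computational: one must recognize that the alternating ordering pattern \eqref{eq:**}, which is produced by the weight-increasing rearrangements inside \Lem{cons-mon2}, is precisely opposite to the fixed $\prec$-orientation any cross pair must carry. Once the induction carries \eqref{eq:**} past $s_j$ to $s_{j+1}$, the cross-pair exclusion requires no new matching surgery; it is forced entirely by this parity mismatch, and skewness is then the only remaining possibility by definition.
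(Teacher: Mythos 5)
Your proposal is correct and takes essentially the same route as the paper. You recast the paper's ``smallest counterexample'' contradiction for the first half as a direct induction (cosmetic), you supply an explicit justification for the base case \eqref{eq:**} at $i=0$ that the paper merely asserts, and your parity argument ruling out a cross pair at $s_j$ (via \Prop{sub} and \eqref{eq:**} at $j+1$) is the same one the paper gives.
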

\begin{proof} We prove the first statement by contradiction. Consider the smallest even $i \leq j+1$
where condition \eqref{eq:**} does not hold. Note that for $i=0$, the condition does hold,
so $i \geq 2$. We can apply \Lem{cons-mon2} for $i-2$, since all even indices at most $i-2$
satisfy \eqref{eq:*} and \eqref{eq:**}. But condition \eqref{eq:**} holds for $i$, completing the proof.
Now apply \Lem{cons-mon} and \Lem{cons-mon2} for $j-1$. Conditions \eqref{eq:*} and \eqref{eq:**} hold for all relevant even indices.
Hence, $s_{j}$ must be $\M$-matched and condition \eqref{eq:**} holds for $j+1$. Since $\bS_x$ terminates
at $s_j$, $s_j$ cannot be $\st{\M}{H}$-matched. Suppose $s_j$ was $\cross{\M}{H}$ matched. Let $j \equiv 1 \md 4$. By \Prop{sub}, $s_j \in U(H)$,
so $s_{j+1} = \M(s_j) \prec s_j$, violating condition \eqref{eq:**}. A similar argument holds when $j \equiv 3 \md 4$.
Hence, $s_j$ must be $\sk{\M}{H}$-matched.
\end{proof}
\ignore{
by induction on $i$. Assume the claim is true for all odd $j < i$, for some $i\equiv 1\md 4$. The proof for the case $i\equiv 3\md 4$ is similar.  By construction, the base case $(i=-1)$ can be checked to hold true.
Suppose for contradiction, $s_i \succ s_{i+1}$. We now construct a matching $\M'$ of larger weight than $\M$ as follows. 
Delete the set of $\M$-edges $E_{-} := \{(s_j,s_{j+1}): j \textrm{ odd }, -1\le j \le i \}$, and add the set of edges $E_+ :=$
\begin{align*}
(s_{-1},s_1) \cup \{(s_{j-1},s_{j+2}):  j \textrm{ odd }, 1\le j \le i-4 \} \cup (s_{i-3},s_{i+1})
\end{align*}
\noindent
\Fig{lemma4point3} provides an illustration when $k=10$.
Note  that $\M-E_{-}+E_+$ is a valid matching which leaves $s_i, s_{i-1}$ unmatched. Now we consider the weights. 
The weight of $E_-$, by induction, is 
\begin{eqnarray*}
W_- & = & [f(s_{0}) - f(s_{-1})] + [f(s_1) - f(s_2)] + [f(s_4) - f(s_3)] + \cdots \\
&& \cdots + [f(s_{i-4}) - f(s_{i-3})] + [f(s_{i-1}) - f(s_{i-2})] + [f(s_{i+1}) - f(s_i)]
\end{eqnarray*}
\noindent
Observe the signs changing from term to term due to induction hypothesis, except for the last term which is assumed for the sake of contradiction. Also by induction, note that $(s_{j-1},s_{j+2}) = (s_j \oplus e_r, s_{j+1}\oplus e_r)$, where $e_r$ is a vector with either $+1$ or $-1$ on the $r$th coordinate and $0$ everywhere else, and $\oplus$ is the coordinate wise XOR operator.
This is because $(s_{j-1},s_j)$ and $(s_{j+1},s_{j+2})$ are both in $H$, and it suffices to show that the $r$th coordinates of $s_{j-1}$ and $s_{j+1}$ are different. This follows from \Prop{sub}. Therefore, when $j\equiv3\md 4$, and therefore by induction, $s_j \succ s_{j+1}$, we have $s_{j-1}\succ s_{j+2}$ Similarly, when $j\equiv1\md 4$, $s_{j-1}\prec s_{j+2}$.
We get that whenever $1\le j\equiv 1\md 4$, 
$s_{j+2}\succ s_{j-1}$ and whenever $(i-2)\ge j \equiv 3\md 4$, $s_{j-1}\succ s_{j+2}$. By the assumption, we get $s_{i-3} \succ s_i \succ s_{i+1}$. In particular, $s_{i-3} \succ s_i$ which in turn, for the sake of contradiction, we have assumed $\succ s_{i-1}$. Using this, we get the weight of $E_+$ is precisely 
\begin{eqnarray*}
W_+ & = & [f(s_{1}) - f(s_{-1})] + [f(s_0) - f(s_3)] + [f(s_5) - f(s_2)] + \cdots \\
& & \cdots +  [f(s_{i-4}) - f(s_{i-7})] + [f(s_{i-5}) - f(s_{i-2})] + [f(s_{i+1}) - f(s_{i-3})] 
\end{eqnarray*}
\noindent
Thus, we get the weight of the new matching is precisely $w(\M) - W_- + W_+ = w(\M) + f(s_i) - f(s_{i-1})$. By \eqref{eq:*}, we get that 
$f(s_i) > f(s_{i-1})$ contradicting the maximality of $\M$.
\end{proof}
\noindent
Armed with this handle on the ancestor-descendant relationships, we can prove that $\bS_x$ cannot terminate.
\begin{lemma}
For odd $i$, $s_i$ is $(\M \setminus M)$-matched.
\end{lemma}
\begin{proof} We prove by contradiction. Let $i$ be the smallest index where this is not true. 
(Assume $i\equiv 1\md 4$. The proof for the case $i\equiv 3\md 4$ is similar.)
This can be either
because $s_i$ is $\M$-unmatched, or $s_i$ is $M$-matched. For convenience, we will set $s_{i+1} := s_i$ in the former case,
and $s_{i+1} = M(s_i)$ in the latter. Again, we argue that $\M' = \M - E_-(i+1) + E_+(i+1)$ has a larger weight than $\M$.
The comparisons of all pairs in $E_-(i+1)$ and $E_+(i+1)$, except for the last pairs, are decided by \Lem{cons-mon}
and \Prop{ek}. 
Suppose it is. Then as in the proof of the previous lemma we can find a better matching. Once again, assume $i\equiv 1\md 4$. We delete the set of edges $E_{-} := \{(s_j,s_{j+1}): j \textrm{ odd }, -1\le j \le i-2 \}$ and add the set of edges 
$E_+ = (s_{-1},s_1) \cup \{(s_{j-1},s_{j+2}):  j \textrm{ odd }, 1\le j \le i-2\}$. \Lem{cons-mon} shows that $\M-E_-+E_+$ is a valid matching whose weight is, as before, $w(\M) + f(s_i) - f(s_{i-1}) > w(\M)$ by \eqref{eq:*}.
\end{proof}
\begin{lemma}
For odd $i$, $s_i \notin X$.
\end{lemma}
\begin{proof} This is really just a corollary of \Lem{cons-mon}. Suppose $i \equiv 1 \md{4}$.
Then, by \Prop{sub}, $s_i[r]=1$. By \Lem{cons-mon}, $\M(s_i) = s_{i+1} \succ s_i$,
and so $s_{i+1}[r]=1$. Hence, $(s_i, s_{i+1}) \notin M$, and thus $s_i \notin X_r$.
If $i \equiv 3 \md{4}$, then $s_i[r]=0$ and $\M(s_i) \prec s_i$. Again, $s_i \notin X_r$.
}
\section{Monotonicity on  Boolean Hypercube}\label{sec:mono-hc}
We prove \Thm{mono-hc}. Since $\M$ is also is a maximal family of disjoint violating pairs, and therefore, $|\M| \geq \frac{1}{2}\eps_f\cdot 2^n$. 
We denote the set of all edges of the hypercube as $\H$. We partition $\H$ into $H_1,\ldots,H_n$ where $H_r$ is the collection of hypercube edges which differ in the $r$th coordinate. Each $H_r$ is a perfect matching and is adequate. Note that $\st{\M}{H_r}$ is the set of $\M$-pairs which do not differ in the $r$th coordinate.
The $H$-distance is trivially $1$, so $\cross{\M}{H_r}$ is the set of $\M$-pairs that differ in the $r$th coordinate. 
Importantly, $\sk{\M}{H_r} = \emptyset$.
\begin{lemma}\label{lem:mono-hc}
For all $1\leq r\leq n$, the number of violating $H_r$-edges is at least $\cross{\M}{H_r}/2$.
\end{lemma}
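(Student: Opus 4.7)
The plan is to partition $\cross{\M}{H_r}$ into the pairs that are themselves $H_r$-edges and those that are not, and to handle each part separately. Pairs in $\cross{\M}{H_r} \cap H_r$ are $\M$-pairs (hence violations) that happen to coincide with $H_r$-edges, so each contributes one violated $H_r$-edge directly. For the pairs in $\cross{\M}{H_r} \setminus H_r$, let $X$ be the set of lower endpoints, matching the definition in \Sec{altpaths}, so $|X| = |\cross{\M}{H_r} \setminus H_r|$. We then launch the alternating sequence $\bS_x$ for each $x \in X$.

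Next, I claim that every $\bS_x$ contains at least one violated $H_r$-edge. In the hypercube, every pair $(u,v) \in \M$ either agrees on coordinate $r$ (making it straight) or differs on coordinate $r$ by exactly $1$, which matches the $H_r$-distance (making it cross). Hence $\sk{\M}{H_r} = \emptyset$. By \Lem{last}, if $\bS_x$ contained no violated $H_r$-pair, its last term would have to lie in an $\sk{\M}{H_r}$-pair, which is impossible. Define $e(x)$ to be the first violated $H_r$-edge encountered along $\bS_x$.

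Finally, I will show that the mapping $x \mapsto e(x)$ is at most two-to-one and avoids the edges counted from $\cross{\M}{H_r} \cap H_r$. Consider the auxiliary graph on $\{0,1\}^n$ with edge set $H_r \cup \st{\M}{H_r}$; as a union of two matchings it has maximum degree two, so its connected components are disjoint paths and cycles. Each $x \in X$ is cross-matched, so $x$ has no $\st{\M}{H_r}$-pair and is therefore a path endpoint, and $\bS_x$ traces out the whole maximal alternating path emanating from $x$. Any $H_r$-edge lies in a unique component, so if $e(x_1) = e(x_2)$ for distinct $x_1, x_2 \in X$, both sequences traverse the same path from its two endpoints, forcing $x_1$ and $x_2$ to be precisely those two endpoints; hence at most two elements of $X$ map to any given violated $H_r$-edge. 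Furthermore, for $(u,v) \in \cross{\M}{H_r} \cap H_r$ the endpoints $u,v$ are matched to each other (a cross pair), so neither lies in $\st{\M}{H_r}$ and the edge $(u,v)$ is an isolated component of the auxiliary graph, hence not visited by any $\bS_x$ with $x \in X$. Combining the disjoint counts gives at least $|\cross{\M}{H_r} \cap H_r| + |X|/2 \geq |\cross{\M}{H_r}|/2$ violated $H_r$-edges. The main obstacle is the at-most-two-to-one step, which is resolved cleanly by the degree-two structure of $H_r \cup \st{\M}{H_r}$.
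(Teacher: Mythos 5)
Your proof is correct and uses the same core framework as the paper (feed $\M,H_r,X$ into the alternating-path machinery, invoke $\sk{\M}{H_r}=\emptyset$ and \Lem{last} to get a violated $H_r$-edge in every $\bS_x$), but you explicitly supply two bookkeeping steps the paper's terse proof elides: the at-most-two-to-one bound on $x\mapsto e(x)$ via the degree-two structure of $H_r\cup\st{\M}{H_r}$, and the disjointness of the edges from $\cross{\M}{H_r}\cap H_r$ (each an isolated component of that graph) from the edges discovered along the alternating paths. These are exactly what justify the factor $1/2$; the paper's line ``$|X|=|\cross{\M}{H_r}\setminus H_r|/2$'' reads like an abbreviation (or slip) for this multiplicity argument, since $X$ as literally defined has one lower endpoint per cross pair, and your careful count $|\cross{\M}{H_r}\cap H_r|+|X|/2\geq|\cross{\M}{H_r}|/2$ is what makes the statement rigorous.
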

\begin{proof} Feed in $\M$ and $H_r$ to the alternating path machinery. Set $X$ to be the set
of all lower endpoints of $\cross{\M}{H_r} \setminus H_r$, so $|X| = |\cross{\M}{H_r} \setminus H_r|/2$. Since $\sk{\M}{H_r} = \emptyset$,
by \Lem{last}, all sequences $\bS_x$ must contain a violated $H_r$-edge. The total number
of violated $H_r$-edges is at least $|X| + |\cross{\M}{H_r} \cap H_r|$.
\end{proof}
The above lemma proves  \Thm{mono-hc}. 
Observe that every pair in $\M$ belongs to some set $\cross{\M}{H_r}$.
The edge tester only requires $O(n/\eps)$ queries, since
the success probability of a single test is at least $$\frac{1}{|\H|}\sum_{r=1}^n \cross{\M}{H_r}/2 \geq |\M|/(n2^{n-2})  \geq \eps/2n.$$ 
\section{Setting up for Hypergrids} \label{sec:setup}
\def\msd{{\tt msd}}
We setup the framework for hypergrid domains. The arguments here are property independent.
Consider domain $[k]^n$ and set  $\ell = \lceil \lg k \rceil$.
We define $\H$ to be pairs that differ in exactly one coordinate, and furthermore, the difference is a power of $2$. 
The tester chooses a pair in $\H$ uniformly at random, and checks the property on this pair.
We partition $\H$ into $n(\ell + 1)$ sets $H_{a,b}$, $1\leq a\leq n$, $0\leq b\leq \ell$.
$H_{a,b}$ consists of pairs $(x,y)$ which differ only in the $a$th coordinate, and furthermore $|y[a] - x[a]| = 2^b$. 
Unfortunately, $H_{a,b}$ is not a matching, since each point can participate in potentially two
pairs in $H_{a,b}$.
To remedy this, we further partition $H_{a,b}$ into $H^0_{a,b}$ and $H^1_{a,b}$.
For any pair $(x,y)\in H_{a,b}$, exactly one among $x[a]\md{2^{b+1}}$\footnote{We abuse notation and define $p \md{2^{b+1}}$ to be $2^{b+1}$ (instead of $0$) if $2^{b+1}\mid p$.} and $y[a]\md{2^{b+1}}$ is $>2^b$ and one is $\leq 2^b$. 
We put $(x,y)\in H_{a,b}$ with $x\prec y$ in $H^0_{a,b}$ if $y[a]\md{2^{b+1}} > 2^b$, and in the set $H^1_{a,b}$ if $1\leq y[a]\md{2^{b+1}} \leq 2^b$.
For example, $H_{1,0}$ has all pairs that only differ by $2^0 = 1$ in the first coordinate.
We partition these pairs depending on whether the higher endpoint has even or odd first coordinate.
Note that each $H^0_{a,b}$ and $H^1_{a,b}$ are matchings. 
We have $L(H^0_{a,b}) = \{x | x[a]\md{2^{b+1}} \leq 2^b\}$ and $U(H^0_{a,b}) = \{y | y[a]\md{2^{b+1}} > 2^b\}$.
The sets are exactly switched for $H^1_{a,b}$.
Because of the matchings are not perfect, we are forced to introduce the notion of adequacy of matchings.
A matching $H$ is adequate if for every violation $(x,y)$, both $x$ and $y$ participate in the matching $H$ (\Def{adeq}).
We will eventually prove the following theorem. 
\begin{theorem} \label{thm:adeq} Let $k$ be a power of $2$. Suppose for every
violation $(x,y)$ and every coordinate $a$, $|y[a] - x[a]| \leq 2^c$ (for some $c$).
Furthermore, suppose that for $b \leq c$, all matchings $H^0_{a,b}, H^1_{a,b}$ are adequate. Then there exists a maximal matching $\M$ of the violation
graph such that the number of violating pairs in $\H$ is at least $|\M|/2$.
\end{theorem}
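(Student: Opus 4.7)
The plan is to take $\M$ to be a maximum weight matching of the violation graph that, among all such matchings, additionally maximizes the potential $\Phi(\M) = \sum_{(x,y) \in \M}\sum_{a=1}^n \msd(|y[a] - x[a]|)$, where $\msd(d)$ denotes the largest power of $2$ dividing $d$. Any maximum weight matching is automatically maximal in the violation graph, so \Thm{vc} gives $|\M| \geq \frac{1}{2}\eps_f k^n$, and every pair of $\M$ is itself a violation.

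I would first verify that every pair $(x,y)\in \M$ with $x\prec y$ lies in $\cross{\M}{H^s_{a,b}}$ for at least one triple $(a,b,s)$ with $b\leq c$. Pick any coordinate $a$ with $y[a]>x[a]$ and let $2^b = \msd(y[a]-x[a])$; the violation-size hypothesis forces $b\leq c$. Because $2^b$ divides $y[a]-x[a]$ but $2^{b+1}$ does not, we have $y[a]\equiv x[a]+2^b\md{2^{b+1}}$, so exactly one of the residue conditions defining $H^0_{a,b}$ or $H^1_{a,b}$ is met, and the $H$-distance $2^b$ divides $|y[a]-x[a]|$ by construction. Summing over all pairs yields $\sum_{a,b,s}|\cross{\M}{H^s_{a,b}}| \geq |\M|$.

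Next, for each triple $(a,b,s)$ with $b\leq c$ I would apply the alternating-paths framework of \Sec{altpaths} to $\M$ and the adequate matching $H = H^s_{a,b}$, taking $X$ to be the lower endpoints of $\cross{\M}{H}\setminus H$. The central task is a hypergrid analog of \Lem{last}: every sequence $\bS_x$ contains a violated $H$-pair. Granting this, the number of violated $H$-pairs is at least $|X| + |\cross{\M}{H}\cap H| \geq \tfrac{1}{2}|\cross{\M}{H}|$. Summing over all triples $(a,b,s)$ and using that the $H^s_{a,b}$ partition $\H$, the total number of violated pairs in $\H$ is at least $\tfrac{1}{2}\sum_{a,b,s}|\cross{\M}{H^s_{a,b}}| \geq \tfrac{1}{2}|\M|$, as required.

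The hard part will be the hypergrid analog of \Lem{last}. In the hypercube proof of \Sec{struct} one had $\sk{\M}{H_r}=\emptyset$, so the alternating-path argument always ends either at a violated $H$-pair or at a rewiring that strictly increases the matching weight. Over hypergrids, $\sk{\M}{H^s_{a,b}}$ can be non-empty, and a sequence $\bS_x$ that avoids violations may terminate at a skew pair whose rewiring leaves the total weight unchanged, so maximality of the weight alone does not suffice. This is exactly where the tie-breaking potential $\Phi$ comes in: one would argue that any such weight-preserving rewiring replaces pairs whose coordinate-$a$ differences are divisible by $2^b$ with pairs whose differences are divisible by a strictly higher power of $2$, thereby strictly increasing $\Phi$ and contradicting the choice of $\M$. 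Implementing this reuses the $E_-(i), E_+(i)$ calculus already developed in \Lem{cons-mon} and \Lem{cons-mon2}, augmented with bookkeeping on $\msd$ values of the rewired pairs.
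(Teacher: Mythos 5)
Your proposal follows the paper's argument for the monotonicity case quite closely. The three key moves you identify---take a maximum-weight matching that among ties maximizes $\Phi$; observe that every $\M$-pair is a cross-pair for some $H^s_{a,b}$ with $b\le c$; and, for an alternating path $\bS_x$ that would otherwise terminate at a skew pair, rewire so that weight is preserved but $\Phi$ strictly increases, contradicting the choice of $\M$---are precisely the content of \Lem{mono-hg} together with \Lem{last}, \Lem{skew}, and \Clm{pot}. Your count $|X|+|\cross{\M}{H}\cap H|\ge |\cross{\M}{H}|/2$ summed over $(a,b,s)$ is likewise the paper's calculation.

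What you have not accounted for is that \Thm{adeq} is a property-independent statement which the paper also proves for the full $(\alpha,\beta)$-Lipschitz family (Section 9, via \Lem{viol-gen}). Your writeup is pinned to the monotone setting in two places: you assume every $\M$-pair $(x,y)$ satisfies $x\prec y$ when classifying it as a cross-pair (false in general, since Lipschitz violations can occur between incomparable points), and you propose to reuse \Lem{cons-mon} and \Lem{cons-mon2}, which are monotonicity-specific and compare $f$-values directly. In the general case the paper introduces the pseudo-distance $\d$ (\Def{d}, \Clm{d}), replaces $f(x)-f(y)$ by $w(x,y)=\max\bigl(f(x)-f(y)-\d(x,y),\; f(y)-f(x)-\d(y,x)\bigr)$, reproves the structure lemmas in that language (\Lem{lip-cons}, \Lem{lip-cons2}, \Lem{lip-skew}), and---because a Lipschitz violation has no canonical orientation---runs the alternating-path argument from both endpoints of every cross-pair and under two symmetry transformations (negating $f$, reversing the hypergrid), the $X/X'/Y/Y'$ split of \Lem{viol-gen}. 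Your plan captures the right mechanism, in particular the role of $\Phi$ in eliminating skew-pair terminations, but as written it is a proof of the monotonicity special case rather than of \Thm{adeq} in the generality the paper needs.
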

We reduce to this special case using a simple padding argument. 
The following theorem implies \Thm{mono-hg}.
\begin{theorem} \label{thm:gen-k} Consider any function $f:[k]^n \mapsto R$. 
At least an $\eps_f/(4n(\lceil \log k \rceil + 1)$-fraction of pairs in $\H$ are violations.
\end{theorem}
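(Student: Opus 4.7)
The plan is to reduce \Thm{gen-k} to the special case proven in \Thm{adeq} by a padding argument. Let $\ell = \lceil \log k \rceil$ and set $K = 2^{\ell+2}$, so the padded side length is a power of $2$ comfortably larger than $k$. Embed $[k]^n$ into $[K]^n$ as the sub-hypergrid $\cE := \{2^{\ell+1}+1,\ldots,2^{\ell+1}+k\}^n$, sitting deep in the interior. The point of this specific embedding is that for every $b \le \ell$, the boundary strips $\{1,\ldots,2^b\} \cup \{K-2^b+1,\ldots,K\}$ missed by the matching $H^s_{a,b}$ on $[K]^n$ lie entirely outside $\cE$.

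I extend $f$ to $\tilde f : [K]^n \to \R$ by
\[
\tilde f(x) \;=\; f(\phi(x)) \;+\; \gamma \sum_{a=1}^n \bigl(x[a] - \phi(x)[a]\bigr),
\]
where $\phi$ is coordinate-wise clamping onto $\cE$ and $\gamma \in [\alpha,\beta]$ is a suitably chosen slope (for monotonicity take $\gamma$ very large; for general $(\alpha,\beta)$-Lipschitz, $\gamma = (\alpha+\beta)/2$ suffices). A case analysis on pairs $(x,y) \in \tilde\H$ at coordinate distance $2^b$ then shows that every violation of $\tilde f$ lies strictly inside $\cE$: within $\cE$, $\tilde f$ agrees with $f$, while outside or across the boundary the linear slope $\gamma$ exactly absorbs the clamp-induced shift. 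Moreover, $\eps_{\tilde f}\, K^n \ge \eps_f\, k^n$, since any $(\alpha,\beta)$-Lipschitz function on $[K]^n$ restricts to one on $\cE$, so must differ from $f|_\cE$ on at least $\eps_f k^n$ points.

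With these two facts, the hypotheses of \Thm{adeq} hold on $[K]^n$ with $c = \ell$: violations have coordinate span at most $k-1 \le 2^\ell$, and each $H^s_{a,b}$ with $b \le c$ is adequate since all violations live in $\cE$ while the missed strips lie outside it. Invoking \Thm{adeq} yields a maximal matching $\tilde \M$ of $\tilde f$'s violation graph together with at least $|\tilde \M|/2$ violating pairs in $\tilde \H$. By \Thm{vc}, $|\tilde \M| \ge \eps_{\tilde f} K^n/2 \ge \eps_f k^n/2$, so there are at least $\eps_f k^n/4$ violating pairs in $\tilde \H$. Because every violation (and hence every such pair) lies inside $\cE$, under the embedding these correspond bijectively to violating pairs in $\H$ on $[k]^n$. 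Dividing by $|\H| \le n(\ell+1)k^n$ produces the claimed fraction $\eps_f/(4n(\ell+1))$.

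The main obstacle is verifying that $\tilde f$ has no violations outside or crossing the boundary of $\cE$. For monotonicity this is immediate once $\gamma$ is taken very large. For general $(\alpha,\beta)$-Lipschitz the boundary analysis is delicate: one must show that for any non-edge $\tilde \H$-pair $(x,y)$ at distance $2^b$ straddling $\cE$'s boundary, $\tilde f(y)-\tilde f(x) \in [\alpha\cdot 2^b,\, \beta\cdot 2^b]$, which requires the slope $\gamma$ to precisely compensate for the partial clamp of $\phi$. If this check fails for some configuration, the remedy is to widen the buffer (take $K$ larger) or tweak the extension in a thin collar around $\cE$ to match an $(\alpha,\beta)$-Lipschitz template exactly.
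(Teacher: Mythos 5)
Your high-level plan matches the paper's: pad the hypergrid to a power-of-two side length $K$, embed $[k]^n$ in the interior, check the hypotheses of \Thm{adeq}, and pull back. But your choice of extension is different from the paper's (which sets $\hat f$ to $-\infty$ below the embedded domain and $+\infty$ above it), and this is where the proposal has a genuine gap.

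The claim ``every violation of $\tilde f$ lies strictly inside $\cE$'' is false for the clamp-plus-slope extension $\tilde f(x) = f(\phi(x)) + \gamma\sum_a (x[a]-\phi(x)[a])$, even for monotonicity and arbitrarily large $\gamma$. Concretely, suppose $f$ has a violation $(u,v)$ with $u,v$ on the lower boundary hyperplane of $\cE$ in some coordinate $a'$ (e.g. $u[a'] = v[a'] = c := \min \cE$). Pick any $t < c$ and form $x,y$ by replacing the $a'$-coordinate of $u,v$ respectively with $t$. Then $\phi(x)=u$ and $\phi(y)=v$, and the linear-drift terms cancel because $x[a']=y[a']$; so $\tilde f(x)-\tilde f(y)=f(u)-f(v)$ and $(x,y)$ is still a violation. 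Taking $t=1$ places both endpoints in the missed strip $\{1,\dots,2^b\}$ of $H^1_{a',b}$ for every $b\ge 0$, so $H^1_{a',0}$ (in fact all $H^1_{a',b}$ with $b\le \ell$) fails adequacy and \Thm{adeq} cannot be invoked. Widening $K$ does not help, since the bad coordinate is anchored at $1$; and ``tweaking a thin collar'' is precisely the nontrivial content you would need to supply.

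The paper avoids this issue by making the extension constant $\pm\infty$ (equivalently, a huge finite $\pm M$) off the embedded domain. Then below-below and above-above pairs produce equal values (no strict inequality, hence no violation), and an above-below pair can never satisfy the order relation $x\prec y$ — so the violation graph of $\hat f$ literally equals an embedded copy of $\VG_f$. That hard cutoff, rather than a drift that cancels coordinate-wise, is what makes every $H^s_{a,b}$ with $b\le \ell-2$ adequate and keeps the coordinate spans bounded. Your step deducing $\eps_{\tilde f}K^n \ge \eps_f k^n$ and the final division by $|\H|\le nk^n(\lceil\log k\rceil+1)$ are fine, but they rest on the unavailable application of \Thm{adeq}.
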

\begin{proof} Let $\hat{k} = 2^\ell$ be the smallest power of $2$ larger than $4k$. 
Let us construct a function $\hat{f}:[\hat{k}]^n \mapsto \R \cup \{-\infty, +\infty\}$.
Let $\bll$ denote the $n$-dimensional vector all $1$s vector.
For $x$ such that all $x_i \in [\hat{k}/4 + 1, \hat{k}/4 + k-1]$, we set $\hat{f}(x) = f(x - \frac{\hat{k}\cdot\bll}{4})$.
(We will refer to this region as the ``original domain".)
If any coordinate of $x$ is less than $\hat{k}/4$, we set $\hat{f}(x) = -\infty$.
Otherwise, we set $f(x) = +\infty$.
All violations are contained in the original domain.
For any violation $(x,y)$ and coordinate $a$, $|y[a] - x[a]| \leq k < 2^{\ell-2}$.
Let $\hat{\H}$ be the corresponding set of pairs in domain $[\hat{k}]^n$.
For $b \leq \ell-2$ (and every $a$), every point in the original domain participates in 
all matchings in $\hat{\H}$.
So, each of these matchings is adequate. Since every maximal matching
of the violation graph has size at least $\eps_f k^n/2$, by \Thm{adeq},
the number of violating pairs in $\hat{\H}$ is at least $\eps_f k^n/2$.
The matching $\H$ is exactly the set of pairs of $\hat{\H}$
completely contained in the original domain. All violating pairs in $\hat{\H}$
are contained in $\H$. The total size of $\H$ is at most $nk^n(\lceil \log k \rceil + 1)$.
The proof is completed by dividing $\eps_f k^n/4$ by the size of $\H$.
\end{proof}
Henceforth, we will assume that $k = 2^\ell$ and that 
all matchings $H^0_{a,b}, H^1_{a,b}$ are adequate (for $b \leq c$, where
$2^c$ is an upper bound on the coordinate difference for any violation).
\subsection{The potential $\Phi$} \label{sec:phi}
Define $\msd(a)$ of a non-negative integer $a$ to be the largest power of $2$ which divides $a$. That is, $\msd(a) = p$ implies $2^p\divides a$ but $2^{p+1}\notdivides a$. We define $\msd(0) := \ell+1$. For any $x \in \ZZ^n$, define $\Phi(x) = \sum_{c = 1}^n \msd(|x[c]|)$.
Now given a matching $\M$, define the following potential.
\begin{equation}\label{eq:phi}
\Phi(\M) := \sum_{(x,y)\in \M} \Phi(x-y) = \sum_{(x,y) \in \M} \sum_{c=1}^n \msd(|y[c] - x[c]|).
\end{equation}
\noindent
We will choose maximum weighted matchings that also maximize $\Phi(\M)$.
To give some intuition for the potential, note that it is aligned towards picking pairs which differ in as few coordinates as possible (since $\msd(0)$ is large). Furthermore, divisibility by powers of $2$ is favored. 
\medskip
\section{Monotonicity on Hypergrids}\label{sec:mono-hg}
In this section, we prove \Thm{mono-hg}. 
As in the hypercube case, the weight of a pair $(x,y)$ is defined to be $f(x)-f(y)$ if $x\prec y$, and $-\infty$ otherwise. 
We set $\M$ to be a maximum weighted matching that maximizes $\Phi(\M)$.
So $|\M| \geq \eps_f k^n/2$.
Fix $H^r_{a,b}$. 
It is instructive to explicitly see the pairs in $\st{\M}{H^r_{a,b}}$ and $\cross{\M}{H^r_{a,b}}$.
Consider a pair $(x,y)$, $x \prec y$ in these sets. 
\begin{asparaitem}
	\item $\st{\M}{H^r_{a,b}}$: $x[a], y[a] {\md{2^{b+1}}} \leq 2^b$, or $x[a], y[a] {\md{2^{b+1}}} > 2^b$. 
	\item $\cross{\M}{H^r_{a,b}}$: $\msd(|y[a] - x[a]|) = b$, $x \in L(H^r_{a,b})$ (thus $y \in U(H^r_{a,b})$).
\end{asparaitem}
Now we do have skew pairs, and the potential $\Phi$ was designed specifically to handle such pairs.
Note that every pair in $\M$ belongs to some $\cross{\M}{H^r_{a,b}}$.
There exists some $a,b$ such that $\msd(|y[a] - x[a]|) = b$. If $x[a] \md{2^{b+1}} \leq 2^b$,
then $(x,y) \in \cross{\M}{H^0_{a,b}}$, otherwise $(x,y) \in \cross{\M}{H^1_{a,b}}$. 
Therefore, the following lemma directly implies \Thm{adeq}.
\begin{lemma}\label{lem:mono-hg}
For all $r,a,b$, the number of violated $H^r_{a,b}$-pairs is at least $|\cross{\M}{H^r_{a,b}}|/2$.
\end{lemma}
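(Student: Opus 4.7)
The plan is to apply the alternating-paths framework of \Sec{altpaths} with the matching $H := H^r_{a,b}$ and the fixed max-weight, max-$\Phi$ matching $\M$. Set $X$ to be the set of lower endpoints of pairs in $\cross{\M}{H}\setminus H$, so that $|X| = |\cross{\M}{H}\setminus H|$. Since $H$ is adequate by the hypothesis of \Thm{adeq}, every sequence $\bS_x$ is well-defined. The overall goal is to show that every $\bS_x$ contains a violated $H$-pair, and then to count the distinct violated pairs produced this way.

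By \Lem{last}, if some $\bS_x$ contains no violated $H$-pair, then its terminal term $s_j$ must be matched in $\M$ by a pair $(s_j, s_{j+1}')\in \sk{\M}{H}$. I plan to rule out this case by appealing to the maximality of $\Phi(\M)$: concretely, I would construct a matching $\M'$ with $w(\M') = w(\M)$ but $\Phi(\M') > \Phi(\M)$, contradicting the choice of $\M$. The construction is a modification of the $E_+$ rewiring from \Lem{cons-mon2}: remove $E_-(j-1)\cup\{(s_j, s_{j+1}')\}$ from $\M$ and add an $E_+(j-1)$-style set in which the final pair incorporates $s_{j+1}'$ in place of $s_j$, so that the skew pair is effectively replaced by one whose coordinate-$a$ difference is a multiple of $2^b$. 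Under conditions \eqref{eq:*} and \eqref{eq:**}, which \Lem{last} guarantees throughout $\bS_x$, the same telescoping that drives \Lem{cons-mon2} gives $w(\M')=w(\M)$ exactly. For the potential, the removed skew pair contributes $\msd(|s_j[a]-s_{j+1}'[a]|) < b$ on coordinate $a$, while the replacing pair contributes at least $b$ on coordinate $a$; a routine check that no $\msd$-contribution is lost on the remaining coordinates then yields $\Phi(\M')>\Phi(\M)$.

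With every $\bS_x$ ($x\in X$) guaranteed to contain a violated $H$-pair, I count as follows. A violated $H$-pair $(p,q)$ with $p\in L(H)$, $q\in U(H)$ can occur in some $\bS_x$ as $(s_i, s_{i+1})$ only when $s_i = p$ (forcing $i \equiv 0 \md 4$ by \Prop{sub}) or $s_i = q$ (forcing $i \equiv 2 \md 4$). In each case, the backward walk from $p$ (resp.\ $q$) along the alternating structure is deterministic and must terminate at a cross-matched start $x\in X$ after exactly $i$ steps; so at most one $\bS_x$ realizes each of these two configurations. Hence the number of distinct first-violated $H$-pairs across $\{\bS_x : x\in X\}$ is at least $|X|/2$. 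Combined with $\cross{\M}{H}\cap H$, whose pairs are violated $H$-pairs by definition, the total is at least $|X|/2 + |\cross{\M}{H}\cap H| \geq |\cross{\M}{H}|/2$, as required.

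The main obstacle is the $\Phi$-rewiring step in the skew case. Weight preservation follows from the same telescoping identities proved in \Lem{cons-mon} and \Lem{cons-mon2}, but establishing the strict $\Phi$-increase requires pinpointing the right replacing pair (one whose coordinate-$a$ difference is a multiple of $2^b$) and verifying, by a careful $\msd$ bookkeeping, that no contribution is lost on the other coordinates. This is where the particular form of $\Phi$ and the fact that $H$-edges only displace coordinate $a$ by a multiple of $2^b$ are essential.
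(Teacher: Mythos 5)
Your high-level strategy is the same as the paper's: apply the alternating-path machinery with $X$ the lower endpoints of $\cross{\M}{H^r_{a,b}}\setminus H^r_{a,b}$, invoke \Lem{last} to conclude the terminal $s_j$ is skew-matched, rule out that case via the $\Phi$-maximality of $\M$, then count. The counting at the end is also fine. But the crucial rewiring step, which the paper proves separately as \Lem{skew} together with \Clm{pot}, has concrete errors in your sketch.

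First, the rewiring you describe does not produce a valid replacement. You remove $E_-(j+1)$ (which covers $s_{-1},\dots,s_{j+1}$) and add an ``$E_+(j-1)$-style set in which the final pair incorporates $s_{j+1}$ in place of $s_j$.'' The resulting set covers $s_{-1},\dots,s_{j-2},s_{j+1}$, leaving both $s_{j-1}$ and $s_j$ unmatched, and the new final pair $(s_{j-3},s_{j+1})$ is not even guaranteed to be comparable (we only know $s_j\prec s_{j-3}$ and $s_j\prec s_{j+1}$), so its weight could be $-\infty$. The correct rewiring, as in \Lem{skew}, is to add $\hat{E}=E_+(j-1)\cup\{(s_{j-1},s_{j+1})\}$; the pair $(s_{j-1},s_{j+1})$ is comparable since $s_{j-1}\prec s_j\prec s_{j+1}$, covers the two missing vertices, and the telescoping then gives $w(\M')=w(\M)$ exactly.

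Second, the claimed source of the $\Phi$-increase is wrong. You assert the skew pair is ``replaced by one whose coordinate-$a$ difference is a multiple of $2^b$,'' but this is false. A terminal skew pair that is not in $H$ has $\msd(|s_j[a]-s_{j+1}[a]|)=b'<b$; since $s_{j-1}=H(s_j)$ differs from $s_j$ in coordinate $a$ by exactly $\pm 2^b$, and $2^{b'+1}\divides 2^b$, the replacing pair $(s_{j-1},s_{j+1})$ still has $\msd=b'$ on coordinate $a$, not $\geq b$. The replacement of the skew pair therefore contributes a \emph{non-strict} inequality to $\Phi$ (as \Clm{pot} establishes). The strict increase in $\Phi$ actually comes from the opposite end of the path: $\msd(|s_{-1}-s_0|_a)=b$ because $(s_{-1},s_0)\in\cross{\M}{H}$, while $|s_{-1}-s_1|_a=|s_{-1}-s_0|_a-2^b$, so $\msd(|s_{-1}-s_1|_a)>b$. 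Without this, the argument gives only $\Phi(\M')\geq\Phi(\M)$, which is not a contradiction. Since these two statements are exactly \Lem{skew} and \Clm{pot}, you could simply cite them rather than attempting an inline re-derivation.
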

\noindent
\begin{proof} We assume that $H^r_{a,b}$ is adequate.
Feed in $H^r_{a,b}$ and $\M$ to the alternating paths machinery,
with $X$ as the set of lower endpoints in $\cross{\M}{H^r_{a,b}} \setminus H^r_{a,b}$. 
By \Lem{last}, if a sequence $\bS_x$ does not contain a violating $H^r_{a,b}$-pair,
then the last term $s_j$ must belong to $\sk{\M}{H^r_{a,b}}$.
By \Lem{skew}, $\msd(|s_j[a] - \M(s_j)[a]|) > b$. 
But then both $s_j$ and $\M(s_j)$ belong
to $L(H^r_{a,b})$ or $U(H^r_{a,b})$, implying $(s_j, \M(s_j)) \in \st{\M}{H}$. Contradiction.
Every sequence $\bS_x$ contains a violating $H^r_{a,b}$-pair, and the calculation
in \Lem{mono-hc} completes the proof.
\end{proof}
The main technical work is in the proof of \Lem{skew}.
Fix $a,b,r$. For convenience, we lose all superscripts and subscripts. 
\begin{lemma} \label{lem:skew} Suppose $\bS_x$ contains no violated $H$-pair. Let the last term be $s_j$ ($j$ is odd).
Then $\msd(|s_j[a] - \M(s_j)[a]|) > b$.
\end{lemma}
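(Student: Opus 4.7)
The plan is to prove the lemma by contradiction. Assume $\msd(|s_j[a] - s_{j+1}[a]|) = b' \le b$, where $s_{j+1} := \M(s_j)$; this partner exists by iteratively applying \Lem{cons-mon} along $\bS_x$, and \Lem{cons-mon2} ensures conditions \eqref{eq:*} and \eqref{eq:**} hold for all even indices up to and including $j+1$. WLOG take $j \equiv 1 \pmod 4$ (the case $j \equiv 3 \pmod 4$ is symmetric). Then \Prop{sub} gives $s_j \in U(H)$, condition \eqref{eq:**} at $j+1$ gives $s_{j+1} \succ s_j$, and since $\msd \le b$ would make $s_j[a] \not\equiv s_{j+1}[a] \pmod{2^{b+1}}$, we deduce $s_{j+1} \in L(H)$ (otherwise both endpoints lie in $U(H)$ and the pair is $H$-straight, so $\bS_x$ would not have terminated at $s_j$).

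The argument is a matching-rewiring in the spirit of \Lem{cons-mon2}, but now exploiting the potential $\Phi$ in addition to the weight. The candidate matching is
\[
\M' := \M - E_-(j-1) - \{(s_j, s_{j+1})\} + E_+(j-1) + \{(s_{j-1}, s_{j+1})\},
\]
where $s_{j-1} = H(s_j)$ by construction of $\bS_x$. Comparability of $(s_{j-1}, s_{j+1})$ is immediate from $s_{j-1} \prec s_j \prec s_{j+1}$, and comparability of the pairs in $E_+(j-1)$ is given by \Prop{ek-comp}; that $\M'$ is a valid matching follows from \Clm{rep} and \Prop{ek}. That $(s_{j-1}, s_{j+1})$ is a genuine violation ($f(s_{j-1}) > f(s_{j+1})$) will be forced by the weight identity below: if it were not, the single swap $(s_j, s_{j+1}) \leftrightarrow (s_{j-1}, s_{j+1})$ alone would already contradict weight-maximality, using that $f(s_{j-1}) < f(s_j)$ from the non-violation of the $H$-pair $(s_{j-1}, s_j) \in \bS_x$.

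The weight computation telescopes exactly as in \Lem{cons-mon2}: the contribution $w(E_+(j-1)) - w(E_-(j-1)) + w(s_{j-1}, s_{j+1}) - w(s_j, s_{j+1})$ collapses to an expression involving only $f(s_j), f(s_{j-1}), f(s_{j+1})$ and the condition \eqref{eq:*} at $j-1$, and is shown to be $\ge 0$ using the non-violation of $(s_{j-1}, s_j) \in H$. The crux of the contradiction lies with the potential. Outside the pair $(s_{j-1}, s_{j+1})$, the swap leaves every coordinate-wise $\msd$ contribution unchanged (the $E_-$ and $E_+$ pairs share the same multiset of coordinate differences). At coordinate $a$ of the new pair we have $|s_{j-1}[a] - s_{j+1}[a]| = |s_j[a] - 2^b - s_{j+1}[a]|$. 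In the boundary case $b' = b$, writing $s_{j+1}[a] - s_j[a] = 2^b(2k+1)$ gives $|s_{j-1}[a] - s_{j+1}[a]| = 2^b(2k+2) = 2^{b+1}(k+1)$, so its msd is $\ge b+1$, a strict gain; hence $\Phi(\M') > \Phi(\M)$ at equal weight, contradicting the joint maximality of $\M$.

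The main obstacle is the sub-case $b' < b$: the coord-$a$ swap does not lift the msd, so $\Phi$ is only non-decreasing. I expect to handle this either by descending induction on $b'$ (running the same rewiring on $H_{a,b'}$ in place of $H_{a,b}$ and observing that the pair $(s_j, s_{j+1})$ is then $H_{a,b'}$-cross, so the alternating-path machinery produces a violated $H_{a,b'}$-edge, which contradicts some already-established invariant), or by a finer choice of $\hat s$ that changes $s_{j+1}[a]$ rather than $s_j[a]$ at the level $b'$. Either route resolves the case without introducing conceptually new ingredients beyond the interplay of weight-maximality and $\Phi$-maximality that is already at the heart of the $b' = b$ argument.
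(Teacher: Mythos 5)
Your rewiring ($\M' = \M - E_-(j+1) + \hat E$ with $\hat E = E_+(j-1) \cup \{(s_{j-1},s_{j+1})\}$) and the weight-preservation argument are the same as the paper's, but there is a genuine gap in your analysis of the potential, and the obstacle you flag at the end is a symptom of it. Your claim that ``outside the pair $(s_{j-1},s_{j+1})$, the swap leaves every coordinate-wise $\msd$ contribution unchanged'' is false: the pair $(s_{-1},s_0) \in E_-(j+1)$ gets replaced by $(s_{-1},s_1) \in \hat E$, and since $s_1 = H(s_0)$, the $a$-coordinate difference of this pair shifts by exactly $2^b$. It is true that for the internal straight pairs $(s_{j'},s_{j'+1}) \mapsto (s_{j'-1},s_{j'+2})$ the coordinate differences are preserved (both endpoints are moved by $H$), but both the first pair and the last pair change, and you only accounted for the last.

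This misattribution is precisely why you get stuck in the $b'<b$ subcase. In the paper, the strict increase in $\Phi$ does not come from the $(s_j,s_{j+1}) \to (s_{j-1},s_{j+1})$ replacement at all; it comes from $(s_{-1},s_0) \to (s_{-1},s_1)$. Because $(s_0,s_{-1})$ is a cross pair of $H^r_{a,b}$, we have $\msd(|s_{-1}[a]-s_0[a]|)=b$ \emph{exactly}, and replacing $s_0$ by $s_1$ changes this difference by $2^b$, which makes the $\msd$ strictly larger than $b$. For the last pair one only needs the non-strict inequality $\msd(|s_{j+1}[a]-s_{j-1}[a]|) \ge \msd(|s_{j+1}[a]-s_j[a]|)$, and this holds whenever $\msd(|s_{j+1}[a]-s_j[a]|) \le b$, covering both $b'=b$ and $b'<b$ uniformly. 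Your $b'=b$ computation is correct but unnecessary for the strict gain, and the ``descending induction on $b'$'' / ``finer choice of $\hat s$'' ideas are not needed and do not appear to lead anywhere; the fix is simply to track the change at the $(s_{-1},s_0)$ end of the rewiring.
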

\begin{proof} For convenience, we denote $s_{j+1} = \M(s_j)$.
We prove by contradiction, so $\msd(|s_{j}[a] - s_{j+1}[a]|) \leq b$.
By \Lem{last}, for all even $i \leq j+1$, condition \eqref{eq:**} holds and
$s_j$ belongs to an $H$-skew pair. We will rewire $\M$ to $\M'$ such that weight remains the same
but the potential increases. We will remove the set $E_-(j+1)$ from $\M$ and add the set 
$\hat{E} = E_+(j-1) \cup (s_{j-1},s_{j+1})$. 
Observe that both $E_-(j+1)$ and $\hat{E}$ involve all terms in $s_{-1},\ldots,s_{j+1}$.
We will assume that $j \equiv 1 \md 4$ (the other case is analogous and omitted).
By \eqref{eq:**}, 
$$ w(E_-(j+1)) = [f(s_{0}) - f(s_{-1})] + [f(s_1) - f(s_2)] + [f(s_4) - f(s_3)] + \cdots + [f(s_{j-1}) - f(s_{j-2})] + [f(s_{j}) - f(s_{j+1})]$$
Now for $w(\hat{E})$, all pairs other than $(s_{j-1},s_{j+1})$ have their order decided by \Prop{ek}.
By \eqref{eq:**} for $j-1$ and \Cor{sub} for $j+1$, $s_{j-1} \prec s_j \prec s_{j+1}$.
$$ w(\hat{E}) = [f(s_{1}) - f(s_{-1})] + [f(s_0) - f(s_3)] + [f(s_5) - f(s_2)] + \cdots + [f(s_{j}) - f(s_{j-3})] + [f(s_{j-1}) - f(s_{j+1})]$$
We get $w(E_-(j+1)) = w(\hat{E})$, so the weight stays the same.
It remains the argue that the potential has increased, as argued in \Clm{pot}
\end{proof}
\begin{claim} \label{clm:pot} Suppose $\msd(|s_{j}[a] - s_{j+1}[a]|) \leq b$. Then $\Phi(\hat{E}) > \Phi(E_-(j+1))$.
\end{claim}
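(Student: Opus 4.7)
My plan is to decompose $\Phi$ coordinate by coordinate: for every coordinate $c \neq a$, the contributions of $E_-(j+1)$ and $\hat{E}$ agree, while coordinate $a$ supplies a strict gain of at least $1$.

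For $c \neq a$, every $H$-pair $(s_k, s_{k+1})$ (even $k$) agrees on coordinate $c$, since $H$ only changes coordinate $a$. Telescoping through the alternating path identifies each $c$-difference of an $\hat{E}$-pair with the corresponding $c$-difference of an $E_-(j+1)$-pair; for instance $|s_1[c] - s_{-1}[c]| = |s_0[c] - s_{-1}[c]|$ matches the first $E_-(j+1)$-pair, the middle pair $(s_{k-1}, s_{k+2})$ has $|s_{k+2}[c] - s_{k-1}[c]| = |s_{k+1}[c] - s_k[c]|$, and the last pair matches $(s_j, s_{j+1})$. Thus the multisets of non-$a$ differences coincide and their $\msd$-contributions cancel.

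For coordinate $a$, set $\sigma = s_0[a] - s_{-1}[a]$, $\tau_k = s_{k+1}[a] - s_k[a]$ for odd $k$, and $\eta_k = s_{k+1}[a] - s_k[a]$ for even $k$. The structural facts I would use are: $|\sigma| = 2^b \cdot (\text{positive odd})$ since $(s_{-1}, s_0) \in \cross{\M}{H}$; $|\eta_k| = 2^b$ since $(s_k, s_{k+1}) \in H$; $\msd|\tau_k| \neq b$ for odd $k < j$ because a straight pair with $\msd = b$ would straddle $L$ and $U$; and $\msd|\tau_j| \leq b$ by hypothesis. The first $\hat{E}$-pair gives $\sigma + \eta_0 = 2^b \cdot (\text{odd} \pm 1) = 2^b \cdot (\text{even})$, so $\msd|\sigma + \eta_0| \geq b + 1 > \msd|\sigma| = b$; this produces the strict gain.

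For a middle $\hat{E}$-pair $(s_{k-1}, s_{k+2})$ (odd $k \in \{1, \ldots, j-2\}$), the $a$-difference is $\eta_{k-1} + \tau_k + \eta_{k+1}$. By \Prop{sub}, the $L/U$ memberships of $s_{k-1}, s_k, s_{k+1}, s_{k+2}$ depend only on $k \md 4$, and a short case check shows $\eta_{k-1}$ and $\eta_{k+1}$ always have opposite signs. Hence $\eta_{k-1} + \eta_{k+1} = 0$, the $a$-difference equals $\tau_k$, and $\msd$ is unchanged. For the last $\hat{E}$-pair $(s_{j-1}, s_{j+1})$, the $a$-difference is $\tau_j + \eta_{j-1}$; if $\msd|\tau_j| < b$ then adding $\pm 2^b$ preserves $\msd$, and if $\msd|\tau_j| = b$ the same parity argument gives $\msd \geq b + 1$. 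Neither case decreases. Summing, $\Phi(\hat{E}) - \Phi(E_-(j+1)) \geq 1 > 0$.

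The main obstacle is the cancellation $\eta_{k-1} + \eta_{k+1} = 0$ in the middle. Without it, a straight $\tau_k$ with $\msd|\tau_k| > b$ (a multiple of $2^{b+1}$) could be perturbed by a nonzero multiple of $2^{b+1}$ and its $\msd$ could drop; the precise cancellation, forced by the $L/U$ alternation of \Prop{sub}, keeps the middle terms neutral and lets the single first-term gain carry the bound. This is also the structural reason the framework must split $H_{a,b}$ into $H^0_{a,b}$ and $H^1_{a,b}$.
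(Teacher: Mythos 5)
Your proof is correct and takes essentially the same route as the paper's: the exact cancellation of the middle $\hat{E}$-pairs (your $\eta_{k-1}+\eta_{k+1}=0$) is precisely the paper's observation that $\Phi(s_{j'}-s_{j'+1}) = \Phi(s_{j'-1}-s_{j'+2})$ since $H$ shifts both endpoints of a straight pair by the same $\pm 2^b$, and your endpoint bounds ($\msd(|\sigma+\eta_0|) > b$ from the odd-multiple structure of a cross pair, and $\msd(|\tau_j+\eta_{j-1}|) \geq \msd(|\tau_j|)$ under $\msd(|\tau_j|) \leq b$) are exactly the paper's strict-gain and non-decrease arguments for the first and last terms.
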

\begin{proof} Consider $(s_{j'},s_{j'+1})$ for odd $-1 < j' < j$. 
Both these terms are either in $L(H)$ or $U(H)$. Hence, $\Phi(s_{j'} - s_{j'+1}) = \Phi(H(s_{j'}) - H(s_{j'+1}))
= \Phi(s_{j'-1} - s_{j'+2})$.
So most quantities in $\Phi(E_-(j+1))$ and $\Phi(\hat{E})$ are
identical. 
$$ \Phi(\hat{E}) - \Phi(E_-(j+1)) = \Phi(s_{-1} - s_1) + \Phi(s_{j+1} - s_{j-1}) - [\Phi(s_{-1} - s_0) + \Phi(s_{j} - s_{j+1})] $$
Since $s_1 = H(s_0)$, the points $s_{-1} - s_1$ and $s_{-1} - s_0$ only differ in the $a$th coordinate.
A similar argument works for the remaining terms.
Using $|\cdot|_a$ to denote the absolute value of the $a$th coordinate,
$$ \Phi(\hat{E}) - \Phi(E_-(j+1)) = \msd(|s_{-1} - s_1|_a) + \msd(|s_{j+1} - s_{j-1}|_a) - [\msd(|s_{-1} - s_0|_a) + \msd(|s_{j} - s_{j+1})|_a] $$
Note that $\msd(|s_{-1} - s_0|_a) = b$, by definition, since it lies in $\cross{\M}{H^0_{a,b}}$.
Furthermore $|s_{-1} - s_1|_a = |s_{-1} - H(s_0)|_a = |s_{-1} - s_0|_a - 2^b$, so $\msd(|s_{-1} - s_{1}|_a) > b$.
(Note the strict inequality.)
It suffices to show that $\msd(|s_{j+1} - s_{j-1}|_a) \geq \msd(|s_{j} - s_{j+1}|_a)$. 
Because $s_{j-1} = H(s_j)$, $|s_{j+1} - s_{j-1}|_a$ is either $|2^b + |s_{j} - s_{j+1}|_a|$ or $|2^b - |s_{j} - s_{j+1}|_a|$.
In either case, the assumption $\msd(|s_{j} - s_{j+1}|_a) \leq b$ implies $\msd(|s_{j+1} - s_{j-1}|_a) \geq \msd(|s_{j} - s_{j+1}|_a)$. 
\end{proof}
\ignore{
As before, $s_t$ denotes $\bS_x(t)$ and $s_{-1}$ denotes $y$. 
Recall that for even $i$, $(s_i,s_{i+1})\in H$ while for odd $i$, $(s_i,s_{i+1})\in \M$.
As before, we would like to assume that $(s_{i-1},s_{i})\notin C$ for all odd $i$. 
Unfortunately, we do not have \Prop{sub} and \Cor{sub} to convert this condition
into a linear inequality, as in condition \eqref{eq:*}. We have to prove \Cor{sub} using
alternating paths and the ``$\Phi$-maximizing'' property of $\M$. But beyond this, the basic structure of
the proof is the same as before. We state inequality and ordering conditions as before.
\begin{align} 
f(s_{i}) - f(s_{i-1}) > 0, \ \forall i\equiv 1\md 4,
\full{\ \ \ \ \ \ \ \ } \ f(s_{i-1}) - f(s_{i}) > 0, \ \forall i\equiv 3\md 4. \label{eq:circ} \tag{$\circ$} \\
\textrm{If} \ i\equiv 1\md 4, \ s_{i-1} \prec s_i \prec s_{i+1}. 
\full{\ \ \ \ \ \ \ \ } \textrm{If} \ i\equiv 3\md 4, \ s_{i+1} \prec s_i \prec s_{i-1}. \label{eq:dcirc} \tag{$\circ\circ$}
\end{align}
\begin{lemma}\label{lem:cons-hyp}
Suppose condition \eqref{eq:*} held for all odd $i$.
Let $i > 0$ be odd. Suppose for all odd $i' < i$, conditions \eqref{eq:**} held.
Then, $s_i$ is $(\M \setminus M)$-matched and condition \eqref{eq:**} holds for $i$.
\end{lemma}
Let us first complete the proof of \Lem{mono-hg}.
Condition \eqref{eq:**} holds trivially for $i = -1$.
So if condition \eqref{eq:*} was true, $\bS_x$ cannot terminate.
Therefore, each $\bS_x$ must have a violated $H$-edge. The number of such sequences is at least $|M|$, proving \Lem{mono-hc}. 
We will show that $\bS_x$ cannot terminate in that case. The following lemma captures the structure of the neighboring pairs in $\bS_x$ if there are no violating pairs. We would like to point out that the lemma below is more involved than \Lem{cons-mon}. The reason is that there is no easy analog to \Prop{sub}. This relates to what we mentioned in the introduction, that is, in the hypercube, if $(x,x')$ is an edge across the $r$th dimension, then $x_r = 0$ implies $x'_r=1$. For hypergrids that is not true. In fact, we will need the extra ``$\Phi$-maximizing'' property of the matchings for the lemma to go through.
\noindent
\full{
\begin{lemma}\label{lem:consistency}
\begin{minipage}[t]{0.45\linewidth}
If $i\equiv 1\md4$, 
\begin{asparaitem}
	\item[{\bf (i)}] $s_i \succ s_{i-1}$.
	\item[{\bf (ii)}] $s_{i+1} \succ s_i$.
	\item[{\bf (iii)}] $s_{i} \md{2^{b+1}} > 2^b$.
	\item[{\bf (iv)}] $s_{i+1}(\operatorname{mod}{2^{b+1}}) > 2^b$.
\end{asparaitem}
\end{minipage}
\begin{minipage}[t]{0.45\linewidth}
If $i\equiv 3\md 4$
\begin{asparaitem}
	\item[{\bf (v)}] $s_i \prec s_{i-1}$.
	\item[{\bf (vi)}] $s_{i+1} \prec s_i$.
	\item[{\bf (vii)}] $s_{i} \md{2^{b+1}} \leq 2^b$.
	\item[{\bf (viii)}] $s_{i+1}(\operatorname{mod}{2^{b+1}}) \leq 2^b$.
\end{asparaitem}
\end{minipage}
\end{lemma}
}
\submit{\begin{lemma}\label{lem:consistency}
If $i\equiv 1\md4$: {\bf (i)} $s_i \succ s_{i-1}$, 
{\bf (ii)} $s_{i+1} \succ s_i$, {\bf (iii)} $s_{i} \md{2^{b+1}} > 2^b$,
{\bf (iv)} $s_{i+1}(\operatorname{mod}{2^{b+1}}) > 2^b$.
\medskip
If $i\equiv 3\md 4$: {\bf (v)} $s_i \prec s_{i-1}$,
{\bf (vi)} $s_{i+1} \prec s_i$, 
{\bf (vii)} $s_{i} \md{2^{b+1}} \leq 2^b$,
{\bf (viii)} $s_{i+1}(\operatorname{mod}{2^{b+1}}) \leq 2^b$.
s\end{lemma}
}
\begin{proof}
The proof is by induction, and is similar to \Lem{cons-mon} with some crucial differences in part ({\bf i}).
For parts {\bf (iv)} and {\bf (viii)} we will assume $s_{i+1} = M(s_i)$ exists. \smallskip
\noindent
{\bf (i)} The base case of $i=1$ follows since we have assumed $r=0$, and therefore as argued above, $s_0=x[a]\md{2^{b+1}}\leq 2^b$.
Suppose for some $i\equiv1\md4$ we get $s_i \prec s_{i-1}$ and for all $j<i$ the lemma is indeed true. 
Since $(s_{i-1},s_i)\in H$, $s_{i-1}\succ s_i$ implies $s_{i-1}\md{2^{b+1}} > 2^b$ (recall $r=0$). We now exhibit a different matching $\M'$ with larger $\Phi()$ value, contradicting the choice of $\M$.
Define the set of edges: $E_{-} := \{(s_j,s_{j+1}): j \textrm{ odd }, -1\le j <i \}$, and 
$E_+ := (s_{-1},s_1) \cup \{(s_{j-1},s_{j+2}):  j \textrm{ odd }, 1\le j \le i-4 \} \cup (s_{i-1},s_{i-3})$. 
By induction, for $1\le j\le i-4$, the pair $(s_{j-1},s_{j+2})$ is precisely $(s_j \oplus e_a, s_{j+1} \oplus e_a)$.
Here, $\oplus$ is the coordinate-wise sum, and $e_a$ is a vector with $0$'s on all coordinates but $a$, and either
$+2^b$ or $-2^b$ on the $a$th coordinate depending on whether $j\equiv 3\md4$ or $1\md4$, respectively.
Note that this argument requires $s_j$ and $s_{j+1}$ to have the same $\md{2^{b+1}}$, which is guaranteed by the induction hypothesis. Since $(s_{j},s_{j+1})$ was in $\M$, the pair $(s_{j-1},s_{j+2})$ is a valid pair as well. Furthermore, if $s_j \succ s_{j+1}$, then $s_{j-1} \succ s_{j+2}$, and a similar statement is true with $\prec$ replacing $\succ$.
The above shows that we can swap $E_-$ by $E_+$ from $\M$ to get $\M'$ without changing the matched end points. Now for the weights. 
The weight of $E_-$, by induction, is $W_- = [f(s_{0}) - f(s_{-1})] + [f(s_1) - f(s_2)] + [f(s_4) - f(s_3)] + \cdots + [f(s_{i-1}) - f(s_{i-2})]$. Observe the signs changing from term to term due to induction hypothesis.
Similarly, we get $w(E_+) = [f(s_{1}) - f(s_{-1})] + [f(s_0) - f(s_3)] + \cdots + [f(s_{i-5}) - f(s_{i-2})] + [f(s_{i-1}) - f(s_{i-3})]$.
Therefore, $w(E_-) = w(E_+)$ and $w(\M') = w(\M)$.
Note, for odd $1\leq j\leq i-4$, we have $|s_{j+1} - s_{j}| = |s_{j+2} - s_{j-1}|$. Thus, the only pairs which affect $\Phi$ are the 
pairs $(s_{-1},s_0), (s_{i-1},s_{i-2})$ in $\M$ and $(s_{-1},s_1), (s_{i-1},s_{i-3})$ in $\M'$. The following claim proves that if $s_i \prec s_{i-1}$, then $\Phi(\M') > \Phi(\M)$.
\begin{claim}\label{clm:msd}
$\msd(|s_{-1} - s_1|) + \msd(|s_{i-1} - s_{i-3}|) > \msd(|s_{-1} - s_0|) + \msd(|s_{i-1} - s_{i-2}|)$.
\end{claim}
\begin{proof}
Note that $\msd(s_{-1} - s_0) = b$, by definition, since it lies in $M^0_{a,b}$.
Furthermore, $s_1 = s_0 + 2^b$. The following easy observation implies that $\msd(s_{-1} - s_1) \geq b+1$.
\begin{observation}
For integers $b,p$, if $2^b\divides p$ and $2^{b+1}\nd p$, then $2^{b+1}\divides p\pm2^b$.
\end{observation}
\noindent
Now we show that $\msd(|s_{i-1} - s_{i-3}|)\geq \msd(|s_{i-1} - s_{i-2}|)$. Let the RHS be $b'$. 
Note that $s_{i-3} - s_{i-1} = s_{i-2} -s_{i-1} + 2^b$, since by induction $s_{i-2}\md{2^{b+1}} \leq 2^b$, and $(s_{i-3},s_{i-2})\in H$. Thus, if $b' \leq b$, then $2^{b'}\divides |s_{i-2} - s_{i-1}|$ implies $2^{b'}\divides |s_{i-3}-s_{i-1}|$ as well. Thus,
it suffices to show $b' \leq b$.
Suppose not, and $b'\geq b+1$. This implies $2^{b+1}\divides (s_{i-2} - s_{i-1})$. 
By induction, we get that $s_{i-2}\md{2^{b+1}} \leq 2^b$. By supposition, we have $s_{i-1}\md{2^{b+1}}>2^b$. Contradiction.
\end{proof}
\noindent
{\bf (ii)} Suppose for some $i\equiv1\md4$ we get $s_i \succ s_{i+1}$ and for all $j<i$ the lemma is indeed true. 
Delete the set of $\M$-edges $E_{-} := \{(s_j,s_{j+1}): j \textrm{ odd }, -1\le j \le i \}$, and add the set of edges 
$E_+ := (s_{-1},s_1) \cup \{(s_{j-1},s_{j+2}):  j \textrm{ odd }, 1\le j \le i-4 \} \cup (s_{i-3},s_{i+1})$.
\noindent
As in the above case, check that $\M-E_{-}+E_+$ is a valid matching (this uses the induction hypothesis, as above) which leaves $s_i, s_{i-1}$ unmatched. Now we consider the weights. 
The weight of $E_-$, by induction, is 
\full{
$$ W_- = [f(s_{0}) - f(s_{-1})] + [f(s_1) - f(s_2)] + [f(s_4) - f(s_3)] + \cdots + [f(s_{i-1}) - f(s_{i-2})] + [f(s_{i+1}) - f(s_i)] $$
}
\submit{
\begin{align*}
W_- & = & [f(s_{0}) - f(s_{-1})] + [f(s_1) - f(s_2)] + [f(s_4) - f(s_3)] \\
& & + \cdots + [f(s_{i-1}) - f(s_{i-2})] + [f(s_{i+1}) - f(s_i)]
\end{align*}
}
\noindent
Observe the signs changing from term to term due to induction hypothesis, except for the last term which is assumed for the sake of contradiction. 
Similar to the previous case, we get
$w(E_+) =: W_+ = [f(s_{1}) - f(s_{-1})] + [f(s_0) - f(s_3)] + \cdots + [f(s_{i-5}) - f(s_{i-2})] + [f(s_{i}) - f(s_{i-3})].$
\noindent
Thus, we get the weight of the new matching is precisely $w(\M) - W_- + W_+ = w(\M) + f(s_i) - f(s_{i-1})$. We proved in {\bf (i)} that $s_i \succ s_{i-1}$, and since $(s_{i-1},s_i)$ is not a violation, we get $f(s_i) > f(s_{i-1})$ contradicting the maximality of $\M$. \medskip
\noindent
{\bf (iii)} For $i\equiv 1\md 4$, we have $(s_{i-1},s_i)\in H$. We have proved in {\bf (i)} that $s_i \succ s_{i-1}$. Therefore (since $r=0$), $s_i \md{2^{b+1}} > 2^b$. \smallskip
\noindent
We now note that parts {\bf (v)}, {\bf (vi)}, and {\bf (vii)} can be proved similarly as the three cases above. We do not repeat them here. We now show {\bf (iv)} and {\bf (viii)} follow as easy corollaries.\smallskip
\noindent
{\bf (iv)},{\bf (viii)} For $i\equiv 1\md 4$, if $s_{i+1}\md{2^{b+1}}\leq 2^b$, then $s_{i+2} \succ s_{i+1}$, which contradicts {\bf (v)} for $i+2\equiv 3\md 4$. For $i\equiv3\md4$, if $s_{i+1}\md{2^{b+1}} > 2^b$, then $s_{i+1} \succ s_{i+2}$ which contradicts {\bf (i)} for $i+2\equiv 1\md 4$.
\end{proof}
\noindent
Armed with this handle on the ancestor-descendant relationships, we can prove the progress and disjointedness lemmas alluded to in \Sec{altpaths}.
\begin{lemma}
For odd $i$, $s_i$ is not $\M$-unmatched.
\end{lemma}
\begin{proof}
Suppose it is. Then as in the proof of the previous lemma we can find a better matching. Once again, assume $i\equiv 1\md 4$ (leaving the other case out since it is analogous). We delete the set of edges $E_{-} := \{(s_j,s_{j+1}): j \textrm{ odd }, -1\le j \le i-2 \}$ and add the set of edges 
$E_+ = (s_{-1},s_1) \cup \{(s_{j-1},s_{j+2}):  j \textrm{ odd }, 1\le j \le i-3\}$. As in the above lemma, we get that $\M-E_-+E_+$ is a valid matching whose weight is, as before, $w(\M) + f(s_i) - f(s_{i-1}) > w(\M)$ since $s_i\succ s_{i-1}$ and we have assumed there are violated pairs $(s_{i-1},s_i)$.
\end{proof}
\begin{lemma}\label{lem:mono-disj-hg}
For odd $i$, $s_i \notin X$.
\end{lemma}
\begin{proof} 
We need to show that $(s_i,s_{i+1})\notin M$. Suppose $i\equiv 1\md 4$ (the other case has the same argument).
\Lem{consistency} {\bf (iii)}, {\bf (iv)} shows that both $s_i\md{2^{b+1}}$ and $s_{i+1}\md{2^{b+1}}$ are $> 2^b$. Now, if the remainders are same then 
$2^{b+1}\divides |s_i - s_{i+1}|$ implying $\msd(|s_i - s_{i+1}|) > b$ which in turn implies $(s_i,s_{i+1})\notin M$. If the remainders are not same, 
then $|s_i - s_{i+1}| < 2^b$. This implies that $2^b\notdivides |s_i - s_{i+1}|$ which again implies $(s_i,s_{i+1})\notin M$.
\end{proof}
\noindent
We conclude that for any $x\in X$, if no $(s_i,s_{i+1})\in C$ for even $i$, then $\bS_x$ can never terminate. The non-termination contradictions \Prop{S}, and therefore our supposition must be wrong. This ends the proof of \Lem{mono-hg} , and thus, the proof of \Thm{mono-hg}.
}
\section{A pseudo-distance for $(\alpha,\beta)$-Lipschitz}\label{sec:dist}
A key concept that unifies Lipschitz and monotonicity is a
pseudo-distance defined on $\D$.  
The challenge faced in the final proof is tweezing out all
the places in the previous argument where the distance function is ``hidden".
We define a weighted directed graph $\G = (\D,E)$ where $\D$ is the hypergrid $[k]^n$. 
$E$ contains directed edges of the form $(x,y)$, where $\|x-y\|_1 = 1$.
The length of edge $(x,y)$ is gives as follows. If $x \prec y$, the length is $-\alpha$.
If $x \succ y$, the length is $\beta$.
\begin{definition} \label{def:d} The function $\d(x,y)$ between $x,y \in \D$ 
is the shortest path length from $x$ to $y$ in $\G$.
\end{definition}
\noindent
This function is asymmetric, meaning that $\d(x,y)$ and $\d(y,x)$ are
possibly different. Furthermore, $\d(x,y)$ can be negative, so this is
not a distance in the usual parlance of metrics.
Nonetheless, $\d(x,y)$ has many useful properties, which can be proven by 
expressing it in a more convenient form.
Given any $x,y\in \D$, we define $\hcd(x,y)$ to be the $z\in \D$ maximizing $||z||_1$ such that $x\succ z$ and $y\succ z$. 
Note that if $x\succ y$ then $\hcd(x,y) = y$.
\begin{claim} \label{clm:d}
For any $x,y\in \D$, $\d(x,y) = \beta \hd{x}{\hcd(x,y)} - \alpha\hd{y}{\hcd(x,y)}$.
\end{claim}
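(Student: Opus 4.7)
The plan is to prove the claim by showing both an upper bound via an explicit walk and a matching lower bound via a conservation argument on the net coordinate change along any directed path.

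First, I would observe that $\hcd(x,y)$ is simply the coordinatewise minimum of $x$ and $y$, since this is the largest $z$ in $\ell_1$-norm satisfying $z \preceq x$ and $z \preceq y$. Consequently,
\[
\hd{x}{\hcd(x,y)} = \sum_{i=1}^n \max(0,\, x[i]-y[i]), \qquad \hd{y}{\hcd(x,y)} = \sum_{i=1}^n \max(0,\, y[i]-x[i]).
\]

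For the upper bound on $\d(x,y)$, I would exhibit the concrete path that walks from $x$ down coordinate by coordinate to $\hcd(x,y)$, then up coordinate by coordinate to $y$. This path uses exactly $\hd{x}{\hcd(x,y)}$ down-edges (each of length $\beta$) and $\hd{y}{\hcd(x,y)}$ up-edges (each of length $-\alpha$), yielding total length $\beta\hd{x}{\hcd(x,y)} - \alpha\hd{y}{\hcd(x,y)}$.

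For the matching lower bound, I would take an arbitrary directed path from $x$ to $y$ in $\G$. Let $u_i$ (respectively $d_i$) be the number of up-edges (respectively down-edges) the path takes in coordinate $i$; the endpoint condition forces $u_i - d_i = y[i] - x[i]$. Thus $u_i \geq \max(0, y[i]-x[i])$ and $d_i \geq \max(0, x[i]-y[i])$, and crucially the two slacks are equal: writing $\delta_i := u_i - \max(0,y[i]-x[i]) = d_i - \max(0,x[i]-y[i]) \geq 0$ and $\Delta := \sum_i \delta_i$, the total path length becomes
\[
-\alpha \sum_i u_i + \beta \sum_i d_i \;=\; \beta\hd{x}{\hcd(x,y)} - \alpha\hd{y}{\hcd(x,y)} + (\beta-\alpha)\Delta.
\]
Since $\alpha < \beta$ by hypothesis and $\Delta \geq 0$, this is at least the claimed expression, and minimizing over all paths gives the lower bound on $\d(x,y)$.

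The only subtle point is that $\alpha$ and $\beta$ are allowed to have arbitrary signs (e.g. the $c$-Lipschitz case has $\alpha = -c < 0 < c = \beta$), so one cannot simply argue that $U$ and $D$ individually are bounded below and multiply by nonnegative coefficients. The conservation identity $u_i - d_i = y[i]-x[i]$ is what saves the argument: it forces excess up-steps and excess down-steps to appear in pairs, so the slack is multiplied by $\beta - \alpha$, which is positive by the standing hypothesis $\alpha < \beta$. This is the only place where that hypothesis is used, and it is precisely what makes $\d$ well-defined (no negative cycles) in the first place.
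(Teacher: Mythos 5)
Your proof is correct and follows essentially the same route as the paper: both establish the upper bound by exhibiting the explicit down-then-up path through $\hcd(x,y)$, and both establish the lower bound by the conservation identity for coordinatewise increments and decrements, using $\beta > \alpha$ to conclude that the excess contributes nonnegatively. Your reformulation with the slack $\delta_i$ is a slightly cleaner packaging of the paper's case split into the sets $A$, $B$, $C$, but the underlying argument is identical.
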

\begin{proof} Let us partition the coordinate set $[n] = A \cupdot B \cupdot C$
with the following property. For all $i \in A$, $x_i > y_i$.
For all $i \in B$, $x_i < y_i$, and for all $i \in C$, $x_i = y_i$.
Any path in $\G$ can be thought of as sequence of coordinate increments
and decrements.
Any path from $x$ to $y$ must finally decrement all coordinates in $A$,
increment all coordinates in $B$, and preserve coordinates in $C$.
Furthermore, increments add $-\alpha$ to the path length,
and decrements add $\beta$. 
Fix a path, and let $I_i$ and $D_i$ denote the number of increments and decrements
in dimension $i$. For $i \in A$, $D_i = I_i + |x_i-y_i|$, for $i \in B$, $I_i = D_i + |x_i-y_i|$,
and for $i \in C$, $I_i = D_i$. The path length is given by
\begin{eqnarray*} & & \sum_{i \in A} (\beta D_i - \alpha I_i) + \sum_{i \in B} (\beta D_i - \alpha I_i)
+ \sum_{i \in C} (\beta D_i - \alpha I_i) \\
& = & \sum_{i \in A} [\beta|x_i-y_i| + I_i(\beta-\alpha)]
+ \sum_{i \in B} [-\alpha|x_i-y_i| + D_i(\beta-\alpha)] + \sum_{i \in C} I_i(\beta-\alpha)\\
& \geq & \beta\sum_{i\in A} (x_i - y_i) - \alpha\sum_{i\in B} (y_i - x_i)
\end{eqnarray*}
For the inequality, we use $\beta \geq \alpha$. 
Let $z = \hcd(x,y)$. Note that $z_i = \min(x_i,y_i)$.
Consider the path from $x$ that only decrements to reach $z$, and then only increments to reach $y$.
The length of this path is exactly $\beta\sum_{i\in A}(x_i-y_i) - \alpha\sum_{i\in B}(y_i-x_i)$.
\end{proof}
\noindent
It is instructive see the distance for monotonicity and Lipschitz. In the case of monotonicity (when $\alpha=0, \beta=\infty$), 
$\d(x,y) = 0$ if $x\prec y$ and $\d(x,y) = \infty$ otherwise. In the case of Lipschitz, $\d(x,y) = ||x-y||_1$.
\smallskip
\noindent
The next two claims establish some properties of the pseudo-distance. 
\begin{claim}\label{clm:d-prop}
\begin{asparaitem} 
	\item (Triangle equality) Fix $x,y$. Suppose $z$ has the property that for all coordinates $a$, $z[a]$ lies in $[x[a],y[a]]$ or $[y[a],x[a]]$ (whichever
	is valid). Then, $\d(x,y) = \d(x,z) + \d(z,y)$.
	\item (Triangle inequality) $\d(x,y) \leq \d(x,z) + \d(z,y)$.
	\item (Projection)Let $v$ be a vector with a single non-zero coordinate. 
Let $x' = x + v$ and $y' = y + v$. Then $\d(x,y) = \d(x',y')$.
	\item (Positivity) Consider a ``cycle" of distinct points $x_1, x_2, \ldots, x_s, x_{s+1} = x_1$ 
	Then $\sum_{c=1}^s \d(x_c, x_{c+1}) > 0$.
\end{asparaitem}
\end{claim}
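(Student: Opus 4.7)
I plan to prove all four parts by reducing to the explicit formula in \Clm{d}. For any pair $x,y$, partition the coordinates as $A = \{i : x_i > y_i\}$, $B = \{i : x_i < y_i\}$, $C = \{i : x_i = y_i\}$; then \Clm{d} reads
\[
\d(x,y) = \beta \sum_{i \in A}(x_i - y_i) - \alpha \sum_{i \in B}(y_i - x_i).
\]

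For the triangle equality, the hypothesis on $z$ pins $z_i$ to lie between $x_i$ and $y_i$ in every coordinate (and $z_i = x_i = y_i$ when $i\in C$). Reading off the definition of $\hcd$, the point $\hcd(x,z)$ has $A$-coordinates from $z$ and $B$- and $C$-coordinates from $x$, while $\hcd(z,y)$ has $A$-coordinates from $y$, $B$-coordinates from $z$, and $C$-coordinates agreeing with both. Plugging these into the formula above and adding makes the $z$-terms telescope, yielding $\d(x,z)+\d(z,y)=\beta\sum_A(x_i-y_i)-\alpha\sum_B(y_i-x_i)=\d(x,y)$. The triangle inequality then follows by the standard shortest-path concatenation argument: any shortest $x$-to-$z$ path in $\G$ followed by any shortest $z$-to-$y$ path gives a walk from $x$ to $y$ of length $\d(x,z)+\d(z,y)$, so $\d(x,y)$ is at most this sum.

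For the projection property, since $v$ has a single nonzero coordinate, shifting both $x$ and $y$ by $v$ leaves the partition $A,B,C$ unchanged and preserves every difference $x_i-y_i$, so the displayed formula gives the same value before and after the shift.

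The main obstacle is positivity. My plan is to expand each $\d(x_c,x_{c+1})$ via \Clm{d} and regroup the sum by coordinate. For each coordinate $i$, let $T_i=\sum_{c=1}^s \max(x_c[i]-x_{c+1}[i],0)$ denote the total downward movement and $U_i$ the total upward movement in that coordinate. Since $x_{s+1}=x_1$, the net displacement in coordinate $i$ is zero, so $T_i=U_i$, and coordinate $i$ contributes exactly $T_i(\beta-\alpha)\geq 0$ to the cycle sum. For strict positivity, observe that distinctness of $x_1,\ldots,x_s$ (with $s\geq 2$) forces at least one coordinate to vary along the cycle, so some $T_i>0$; combined with $\beta>\alpha$ this makes the total strictly positive. (In the monotonicity instance $\beta=\infty$, any nonzero $T_i$ already drives the sum to $+\infty$, so the claim is trivial there.)
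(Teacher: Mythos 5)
Your proof is correct and follows essentially the same route as the paper: reduce everything to the explicit formula of \Clm{d} and argue coordinate by coordinate. One place where you are more careful: the paper's triangle-equality argument only spells out the chain case $x \succ z \succ y$ and waves at ``the other case'' being analogous, but the hypothesis merely requires $z[a]$ to lie between $x[a]$ and $y[a]$ in each coordinate, which does not make $x,y,z$ pairwise comparable; your partition into $A,B,C$ and the explicit computation of $\hcd(x,z)$ and $\hcd(z,y)$ handles the general incomparable case directly. Your positivity argument is the per-coordinate aggregation version of the paper's ``balanced increments/decrements on a closed walk'' reasoning; the two are equivalent, and both correctly conclude a strictly positive multiple of $\beta - \alpha$.
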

\begin{proof}
The triangle equality property follows from \Clm{d}. Suppose $x \succ z \succ y$. 
We have $\hcd(x,y) = y$, $\hcd(x,z) = z$, and $\hcd(y,z) = y$.
Hence, $\d(x,y) = \beta\hd{x}{y}$ $= \beta(\hd{x}{z} + \hd{z}{y})$
$= \d(x,z) + \d(z,y)$. The other case is analogous.
The triangle inequality follows because $\d(x,y)$ is a shortest path length.
For the projection property, let $z = \hcd(x,y)$  and let $z' = \hcd(x',y')$. Note that $z$ and $z'$ also differ only in (say) the $a$th coordinate by the same amount $v_a$. Thus, $\hd{x}{z} = \hd{x'}{z'}$ and $\hd{y}{z} = \hd{y'}{z'}$, implying $\d(x,y) = \d(x',y')$.
For positivity, note that $\d(s_c,s_{c+1})$ is the length of a path in $\G$.
So $\sum_{c=1}^s \d(x_c, x_{c+1})$ is length of a non-trivial cycle in $\G$.
Each coordinate increment adds $-\alpha$ to the length, and a decrement adds $\beta$.
The number of increments and decrements are the same, so the length is a strictly positive multiple of $\beta - \alpha$, a strictly positive
quantity.
\end{proof}
\noindent
The following lemma connects the distance to the $(\alpha,\beta)$-Lipschitz property.
\begin{lemma}\label{lem:l} 
A function is $(\alpha,\beta)$-Lipschitz iff for all $x,y\in \B$, $f(x) - f(y) - \d(x,y) \leq 0$.
\end{lemma}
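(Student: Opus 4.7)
The plan is to prove both directions by reducing the property to an edge-length inequality on the weighted graph $\G$ and then telescoping along a shortest path. The core observation is that the $(\alpha,\beta)$-Lipschitz condition at a single adjacent pair $x \prec y$ (with $\|x-y\|_1 = 1$) is logically equivalent to the two inequalities $f(x) - f(y) \le -\alpha$ and $f(y) - f(x) \le \beta$. But $-\alpha$ is precisely the length in $\G$ of the directed edge $(x,y)$, and $\beta$ is precisely the length of the reverse edge $(y,x)$. So the property is equivalent to the statement: for every directed edge $(u,v) \in E$ of $\G$, $f(u) - f(v) \le \ell(u,v)$, where $\ell$ denotes edge length.

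For the ``only if'' direction, assume $f$ is $(\alpha,\beta)$-Lipschitz, so the above edge inequality holds. Fix arbitrary $x,y \in \B$ and let $x = z_0, z_1, \ldots, z_m = y$ be a shortest $x \to y$ path in $\G$, of total length $\d(x,y)$. Telescoping,
\begin{equation*}
f(x) - f(y) \;=\; \sum_{i=0}^{m-1} \bigl(f(z_i) - f(z_{i+1})\bigr) \;\le\; \sum_{i=0}^{m-1} \ell(z_i, z_{i+1}) \;=\; \d(x,y),
\end{equation*}
which is the desired inequality.

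For the ``if'' direction, suppose $f(x) - f(y) - \d(x,y) \le 0$ for all $x,y \in \B$. Apply this to an adjacent pair $x \prec y$ with $\|x-y\|_1 = 1$. Then $\d(x,y) = -\alpha$ (the single-edge path is optimal, since any other path would add a $(\beta - \alpha)$-positive cycle contribution by the positivity property of \Clm{d-prop}), giving $f(x) - f(y) \le -\alpha$, i.e., $f(y) - f(x) \ge \alpha$. Similarly $\d(y,x) = \beta$, yielding $f(y) - f(x) \le \beta$. These are exactly the two inequalities defining $(\alpha,\beta)$-Lipschitz.

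The only step requiring care is justifying $\d(x,y) = -\alpha$ and $\d(y,x) = \beta$ for adjacent pairs $x \prec y$; this follows directly from \Clm{d} (with $\hcd(x,y) = x$, so $\d(x,y) = -\alpha \cdot 1 = -\alpha$, and $\hcd(y,x) = x$, so $\d(y,x) = \beta \cdot 1 = \beta$). No other obstacle arises; both directions are essentially a one-line telescoping argument once the translation to edge lengths is made.
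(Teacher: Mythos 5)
Your proof is correct and takes essentially the same approach as the paper: the ``if'' direction is identical (apply the inequality to adjacent pairs and read off $\d(x,y) = -\alpha$, $\d(y,x) = \beta$ from \Clm{d}), and your ``only if'' direction telescopes along a shortest path, which is exactly the path through $\hcd(x,y)$ that the paper sums over when it writes $f(x)-f(z)\leq \beta\hd{x}{z}$ and $\alpha\hd{y}{z}\leq f(y)-f(z)$. The telescoping phrasing is a touch cleaner in that it reads the bound directly off the shortest-path definition of $\d$ rather than the explicit $\hcd$ formula, but the content is the same.
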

\begin{proof} Suppose the function satisfied the inequality for all $x,y$.
If $x$ and $y$ differ in one-coordinate by $1$ with $x\prec y$, we get $f(y) - f(x) \leq \beta = \d(y,x)$ and $f(y) - f(x) \geq \alpha = -\d(x,y)$ implying $f$ is $(\alpha,\beta)$-Lipschitz. Conversely, suppose $f$ is $(\alpha,\beta)$-Lipschitz. 
Setting $z=\hcd(x,y)$,
$f(x) - f(z)  \leq \beta \hd{x}{z}$ and $\alpha\hd{y}{z} \leq f(y) - f(z)$. Summing these,
$f(x) - f(y) \leq \beta\hd{x}{z} - \alpha\hd{y}{z} = \d(x,y)$. 
\end{proof}
We give a simple, but important fact about distances related to the function values.
\begin{claim} \label{clm:min} $\min(f(x) - f(y) - \d(x,y), f(y) - f(x) - \d(y,x)) < 0$.
\end{claim}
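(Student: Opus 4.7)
The plan is to prove this by a one-line averaging argument, using the positivity property established in \Clm{d-prop}. The key observation is that adding the two quantities inside the $\min$ causes the function-value terms to cancel, leaving only a distance-related expression that we can control directly.

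More precisely, I would simply compute:
\[
\bigl(f(x) - f(y) - \d(x,y)\bigr) + \bigl(f(y) - f(x) - \d(y,x)\bigr) = -\bigl(\d(x,y) + \d(y,x)\bigr).
\]
Now apply the positivity clause of \Clm{d-prop} to the two-point cycle $x_1 = x$, $x_2 = y$, $x_3 = x_1$ (which is a legitimate non-trivial cycle in $\G$ since $x \neq y$ is implicit here---otherwise both quantities equal $-\d(x,x) = 0$ and the statement would be vacuous). The positivity property gives $\d(x,y) + \d(y,x) > 0$, so the sum displayed above is strictly negative.

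Since the sum of two real numbers is strictly negative, at least one of them must be strictly negative, and in particular their minimum is strictly less than $0$. This is exactly the claim. The entire argument is essentially immediate from \Clm{d-prop}, so no real obstacle arises; the only subtlety is noting that even though $\d$ is a pseudo-distance (asymmetric and possibly negative), the round-trip $\d(x,y) + \d(y,x)$ is genuinely positive because each coordinate increment and matching decrement together contribute $\beta - \alpha > 0$.
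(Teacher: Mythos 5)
Your proposal is correct and matches the paper's argument: both sum the two expressions so the function values cancel, reduce to $-(\d(x,y)+\d(y,x))$, and invoke the positivity clause of \Clm{d-prop} on the two-point cycle. The only cosmetic difference is that you phrase it directly while the paper phrases it as a contradiction.
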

\begin{proof} Suppose not. Then $f(x) - f(y) - \d(x,y) + $ $f(y) - f(x) - \d(y,x) \geq 0$,
implying $\d(x,y) + \d(y,x) \leq 0$. This violates the positivity of \Clm{d-prop}.
\end{proof}
\noindent
The next lemma is a generalization of \Thm{vc}, which argued that the size of a minimum vertex cover 
is exactly $\eps_f |\D|$. 
We crucially use the triangle
inequality for $\d(x,y)$.
We define an undirected weighted clique on $\D$.
Given a function $f$, we define the weight $w(x,y)$ (for any $x,y \in \B$) as follows: 
\begin{equation}
w(x,y) ~:=  ~~ \max\Big(f(x) - f(y) - \d(x,y), ~~f(y) - f(x) - \d(y,x)\Big) \label{eq:w}
\end{equation}
\noindent
Note that although the distance $\d$ is asymmetric, the weight is symmetric. 
\Lem{l} shows that a function is $(\alpha,\beta)$-Lipschitz iff all $w(x,y)\leq 0$.
Once again, consider the special cases of monotonicity and Lipschitz. For monotonicity, 
$w(x,y) = f(x) - f(y)$ when $x\prec y$ and $-\infty$ otherwise. For Lipschitz, 
$w(x,y) = |f(x) - f(y)| - \hd{x}{y}$.
\noindent
We define the unweighted {\em violation graph} as
$\VG_f = (\D,E)$ where $E = \{(x,y): w(x,y) > 0\}$. 
The following lemma generalizes \Thm{vc} from \cite{FLNRRS02}.
\begin{lemma}\label{lem:vc} The size of a minimum vertex cover in $\VG_f$
is exactly $\eps_f|\D|$.
\end{lemma}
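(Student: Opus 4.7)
The plan is to prove both directions of equality: that the minimum vertex cover of $\VG_f$ has size at most $\eps_f|\D|$, and at least $\eps_f|\D|$. The easy direction is almost immediate. Let $f^*$ be a closest $(\alpha,\beta)$-Lipschitz function to $f$ (attained since $\D$ is finite), and set $S^* = \{x : f(x) \neq f^*(x)\}$, so $|S^*| = \eps_f |\D|$. If $(x,y)$ is an edge of $\VG_f$, i.e.\ $w_f(x,y) > 0$, and both $x,y \notin S^*$, then $f^*$ agrees with $f$ on both endpoints, so $w_{f^*}(x,y) = w_f(x,y) > 0$, contradicting \Lem{l} applied to $f^*$. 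Hence $S^*$ is a vertex cover, and the minimum has size at most $\eps_f|\D|$.

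For the reverse inequality, I will adapt the McShane--Whitney extension to the asymmetric pseudo-distance $\d$. Let $S$ be a minimum vertex cover and set $T := \D \setminus S$. Since no edge of $\VG_f$ has both endpoints in $T$, \Lem{l} yields $f(z) - f(z') \leq \d(z,z')$ for all $z,z' \in T$. Assuming $T \neq \emptyset$ (if $T = \emptyset$, the bound $\eps_f \leq 1 = |S|/|\D|$ is trivial), define $f'(x) := \max_{z \in T} [f(z) - \d(z,x)]$ for every $x \in \D$. I claim that (a) $f'$ agrees with $f$ on $T$, and (b) $f'$ is $(\alpha,\beta)$-Lipschitz; together these give $\{x : f'(x) \neq f(x)\} \subseteq S$, so $\eps_f|\D| \leq |S|$.

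For (a), the empty path shows $\d(x,x) = 0$, so taking $z = x$ in the max gives $f'(x) \geq f(x)$ for $x \in T$; conversely, the no-violation hypothesis gives $f(z) - \d(z,x) \leq f(x)$ for every $z \in T$. For (b), by \Lem{l} it suffices to show $f'(x) - f'(y) \leq \d(x,y)$ for every ordered pair $(x,y)$. Let $z^*$ attain the maximum defining $f'(x)$. Then $f'(x) - f'(y) \leq [f(z^*) - \d(z^*,x)] - [f(z^*) - \d(z^*,y)] = \d(z^*,y) - \d(z^*,x) \leq \d(x,y)$, where the last step uses the triangle inequality from \Clm{d-prop}. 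The main delicate point is precisely the asymmetry of $\d$: the classical McShane--Whitney argument is usually stated for symmetric metrics, so one must verify that the triangle inequality (rather than symmetry) is what the extension really uses. Fortunately the one-sided computation above applies to every ordered pair, which is exactly what \Lem{l} demands.
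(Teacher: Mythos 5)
Your proof is correct and structurally parallel to the paper's: the easy direction (a minimum vertex cover has size at most $\eps_f|\D|$, since the modified points of a closest Lipschitz function must hit every violating edge) is identical, and for the hard direction both arguments hinge on showing that a partial function satisfying $f'(x)-f'(y)\leq \d(x,y)$ on $\D\setminus S$ extends to all of $\D$, with the triangle inequality for $\d$ doing the real work. The difference is purely in how the extension is produced. The paper's Claim~\ref{clm:partial} builds the extension one point at a time by backward induction, showing at each step that the admissible interval $[m,M]$ for the new value is nonempty (again via the triangle inequality), and leaves the choice within that interval free. You instead exhibit the extension in one shot via the McShane--Whitney formula $f'(x)=\max_{z\in T}[f(z)-\d(z,x)]$, verifying agreement on $T$ using $\d(x,x)=0$ (which indeed holds, since positivity rules out negative cycles) and verifying the Lipschitz condition by passing a maximizer $z^*$ through the triangle inequality. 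Your closed-form version is cleaner to state and avoids induction; the paper's greedy version has the small advantage of always producing finite values, which matters only in the degenerate $\beta=\infty$ parametrization of monotonicity, where your formula can evaluate to $-\infty$ at points with no $T$-predecessor. That edge case is already implicitly handled by the paper's remark that $\beta=\infty$ can be read as ``larger than any value of $f$,'' under which your argument is airtight, so this is a cosmetic rather than substantive gap.
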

\begin{proof} Let $U$ be a minimum vertex cover in $\VG_f$.
Since each edge in $\VG_f$ is a violation, the points at which the function is modified must intersect all edges, and therefore should form a vertex cover. 
Thus, $\eps_f |\D| \geq |U|$. We show how to modify the function values at $U$ to get a function $f'$ with no violations. 
We invoke the following claim with $V = \D - U$, and $f'(x) = f(x), \forall x \in V$. 
This gives a function $f'$ such that $\Delta(f,f') = |U|/|\D|$. By \Lem{l}, $f'$
is $(\alpha,\beta)$-Lipschitz, and $|U| \geq \eps_f |\D|$. Hence, $|U| = \eps_f|\D|$.
\begin{claim} \label{clm:partial} Consider partial function $f'$ defined on a subset $V \subseteq \D$,
such that for all $\forall x,y \in V$, $f'(x) - f'(y) \leq \d(x,y)$. 
It is possible to fill in the remaining values such that $\forall x,y \in \B$, $f'(x) - f'(y) \leq \d(x,y)$.
\end{claim}
\begin{proof} We prove by backwards induction on the size of $V$. If $|V| = |\D|$, this is trivially
true. Now for the induction step. It suffices define $f'$ for some $u \notin V$.
We need to define $f'(u)$ so that 
$f'(u) - f'(y)\leq \d(u,y)$ and $f'(x) - f'(u) \leq \d(x,u)$ for all $x,y\in V$. It suffices
to argue that
$$ m := \max_{x\in V} \left(f'(x) - \d(x,u)\right) \ \leq \ \min_{y\in V} \left(f'(y) + \d(u,y)\right) =: M$$
Suppose not, so for some $x, y \in V$, $f'(x) - \d(x,u) > f'(y) + \d(u,y)$.
That implies that $f'(x) - f'(y) > \d(x,u) + \d(u,y) \geq \d(x,y)$ (using triangle inequality).
Contradiction, so $m \leq M$.
\end{proof}
\end{proof}
\noindent
The following is a simple corollary of the previous lemma.
\begin{corollary}\label{cor:vc-maxm}
The size of any maximal matching in $VG_f$ is at least $\frac{1}{2}\eps_f|\D|$.
\end{corollary}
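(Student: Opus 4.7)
The plan is to derive this directly from Lemma \ref{lem:vc}, which states that the minimum vertex cover of $VG_f$ has size exactly $\eps_f|\D|$. The connection goes through the classical fact that the endpoints of any maximal matching form a vertex cover.

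First, I would let $M$ be an arbitrary maximal matching in $VG_f$ and let $U$ be the set of all endpoints of edges in $M$, so $|U| = 2|M|$. Next, I would argue that $U$ is a vertex cover of $VG_f$: if some edge $(x,y) \in E(VG_f)$ had neither endpoint in $U$, then $M \cup \{(x,y)\}$ would still be a matching (since $x,y$ are unmatched by $M$), contradicting the maximality of $M$. Hence every edge of $VG_f$ has at least one endpoint in $U$.

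Finally, applying Lemma \ref{lem:vc}, the minimum vertex cover has size exactly $\eps_f|\D|$, so $2|M| = |U| \geq \eps_f|\D|$, which rearranges to $|M| \geq \frac{1}{2}\eps_f|\D|$, as required.

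There is no real obstacle here; the only thing to note is that the argument is purely combinatorial and uses no property of the pseudo-distance $\d$ or the weights beyond the already-established equivalence between vertex covers in $VG_f$ and distance to the $(\alpha,\beta)$-Lipschitz property given by Lemma \ref{lem:vc}.
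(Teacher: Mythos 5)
Your argument is correct and is exactly the standard reasoning the paper has in mind: the paper simply labels this ``a simple corollary of the previous lemma'' without writing out the details. The observation that the endpoints of a maximal matching form a vertex cover, combined with Lemma~\ref{lem:vc}, gives $2|M| \geq \eps_f|\D|$ as you state.
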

\noindent
By a perturbation argument, we can assume that $w(x,y)$ is never exactly zero.
This justifies the strict inequalities used in the monotonicity proofs.
\begin{claim} \label{clm:pert} For any function $f$, there exists a function $f'$
with the following properties. Both $f$ and $f'$ have the same set of violated
pairs, $\eps_f = \eps_{f'}$, and for all $x,y \in \B$, $w_{f'}(x,y) \neq 0$.
\end{claim}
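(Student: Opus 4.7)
The plan is to take $f' = f + \delta$ for a carefully chosen perturbation $(\delta_x)_{x \in \D}$ of small magnitude. For any $x \neq y$, $w_{f'}(x,y)$ is the maximum of the two quantities $[f(x)-f(y)-\d(x,y)] + (\delta_x - \delta_y)$ and $[f(y)-f(x)-\d(y,x)] + (\delta_y - \delta_x)$. So if $\max_x |\delta_x|$ is strictly smaller than half the smallest nonzero $|w_f|$ over pairs $x \neq y$, every pair with $w_f(x,y) \neq 0$ automatically retains both the sign and nonzero-ness of its weight under $f'$. The only obstacle is the pairs $\{x,y\}$ with $w_f(x,y) = 0$, for which an arbitrary perturbation could flip the weight to either sign or leave it at zero.

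To handle these, I would invoke \Clm{min}, which guarantees that at a zero pair, exactly one of $f(x)-f(y)-\d(x,y) = 0$ or $f(y)-f(x)-\d(y,x) = 0$ holds, with the other strictly negative. Orient each such pair by directing $x \to y$ whenever $f(x) - f(y) = \d(x,y)$. A sufficiently small perturbation keeps the strictly-negative branch of the max strictly negative; so to force $w_{f'}(x,y) < 0$ (which is both nonzero and non-violating), it suffices to arrange $\delta_x < \delta_y$ along every directed edge of the resulting directed graph $D$ on $\D$.

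The core step is to show $D$ is acyclic, so a consistent choice of $\delta$'s exists. Suppose $x_1 \to x_2 \to \cdots \to x_s \to x_1$ were a directed cycle of distinct points. Summing the defining equalities $f(x_i) - f(x_{i+1}) = \d(x_i, x_{i+1})$ telescopes the left side to zero, yielding $\sum_{i=1}^s \d(x_i, x_{i+1}) = 0$. But the positivity clause of \Clm{d-prop} asserts that this cyclic sum is strictly positive, a contradiction. Hence $D$ is a DAG. I would then fix a topological order $\pi$ on $\D$ refining $D$ and set $\delta_x = \eta \cdot \pi(x)$ for $\eta > 0$ small enough to satisfy the previous magnitude bound.

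With this choice: pairs with $w_f(x,y) > 0$ remain violations under $f'$, pairs with $w_f(x,y) < 0$ remain non-violations, and zero pairs acquire $w_{f'}(x,y) < 0$; in particular $w_{f'}(x,y) \neq 0$ everywhere and the violation graphs of $f$ and $f'$ coincide. By \Lem{vc}, coinciding violation graphs give $\eps_f = \eps_{f'}$. The step I expect to be the main obstacle is establishing the DAG property of the oriented zero-weight pairs, and it is exactly the positivity of the pseudo-distance $\d$ that makes this work cleanly.
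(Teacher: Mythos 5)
Your proposal is correct and takes essentially the same approach as the paper: the paper also defines a digraph on the zero-weight pairs (oriented from $y$ to $x$ when $f(x)-f(y)-\d(x,y)=0$, the reverse of your convention but equivalent), shows it is a DAG via exactly the same positivity-of-$\d$ telescoping argument, and concludes $\eps_f = \eps_{f'}$ from \Lem{vc}. The only cosmetic difference is that the paper realizes the perturbation iteratively (repeatedly pick a sink, infinitesimally decrease its value, recurse), whereas you define a single global perturbation $\delta_x = \eta'\pi(x)$ from a topological order; both implement the same idea and are correct.
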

\begin{proof} We will construct a function $f'$ such that $w_{f'}(x,y)$
has the same sign as $w_f(x,y)$. When $w_f(x,y) = 0$, then $w_{f'}(x,y) < 0$.
Since exactly the same pairs have a strictly positive weight, their violation
graphs are identical. By \Lem{vc}, $\eps_f = \eps_{f'}$.
Construct the following digraph $T$ on $\D$. For every $x,y$ such that $f(x) - f(y) - \d(x,y) = 0$, put
a directed edge from $y$ to $x$. Suppose there is a cycle $x_1, x_2, \ldots, x_s, x_{s+1} = x_1$
in this digraph. Then $\sum_{c=1}^s [f(s_c) - f(s_{c+1}) - \d(s_c, s_{c+1})]$
$= -\sum_{c=1}^s \d(s_c, s_{c+1}) = 0$. This violates the positivity of \Clm{d-prop},
so $T$ is a DAG.
Pick a sink $s$. For any $x$, $f(x) - f(s) - \d(x,s)$ is non-zero. Infinitesimally
decrease $f(s)$ (call the new function $f'$). For all $x$, $w_{f'}(x,s)$ has the same
sign as $w_f(x,s)$ and is strictly negative if $w_f(x,s) = 0$.
By iterating in this manner, we generate the desired function $f'$.	
\end{proof}
\submit{\vspace{-15pt}}
\section{Generalized Lipschitz Testing on Hypergrids}\label{sec:lip}
In this section, we prove \Thm{main}.  With the distance $\d(x,y)$ in place, the basic spirit of the monotonicity proofs can
be carried over. The final proof requires manipulations of the distance function.
We do not explicitly have the ``directed" behavior of monotonicity that allows for many of rewiring
arguments. 
The matching $\H$ is the same as in \Sec{setup}. The generalized Lipschitz tester picks a pair $(x,y)\in \H$ at random.
We choose $\M$ to be the maximum weight matching that also maximizes $\Phi(M)$ (as defined by \eqref{eq:phi}).
We again set up the alternating paths as in \Sec{altpaths}, by fixing some matching $H^r_{a,b}$ and taking
alternating paths with $\st{\M}{H^r_{a,b}}$.
We have a minor change that aids in some case analysis.
By \Clm{min}, either $f(x) - f(y) > \d(x,y)$ or $f(y) - f(x) > \d(y,x)$, but not both.
We will show that it suffices to consider only one of these cases.
To that effect, define the set $X$ as follows.
$$ X = \{ x | (x,y) \in \cross{\M}{H^r_{a,b}} \setminus H^r_{a,b}, x \in L(H^r_{a,b}), f(x) - f(y) > \d(x,y)\} $$
(For monotonicity, the last condition is redundant.) As before, the main lemma is the following.
\begin{lemma} \label{lem:sx-gen} For all $x \in X$, $\bS_x$ contains a 
violated $H^r_{a,b}$-pair.
\end{lemma}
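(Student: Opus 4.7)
The plan is to mirror the monotonicity proof from \Sec{mono-hg}, with the pseudo-distance $\d(x,y)$ and the weight function $w(x,y)$ from \eqref{eq:w} replacing the plain function difference. The definition of $X$ fixes the ``direction'' of the starting violation: $f(s_0) - f(s_{-1}) > \d(s_0, s_{-1})$, so the initial weight contribution from $(s_{-1}, s_0)$ is $f(s_0) - f(s_{-1}) - \d(s_0, s_{-1}) > 0$. I would introduce a directional indicator $\a_k \in \{+1,-1\}$ for each pair $(s_k, s_{k+1})$ recording which side of the max in $w(\cdot,\cdot)$ is achieved, and formulate analogs of conditions \eqref{eq:*} and \eqref{eq:**} asserting both that these $\a_k$ alternate in a prescribed pattern along $\bS_x$ and that the partial-order comparisons between successive terms inherit that pattern.

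Then I would prove analogs of \Lem{cons-mon}, \Lem{cons-mon2}, and \Lem{last} by induction on even indices, supposing for contradiction that $\bS_x$ contains no violated $H^r_{a,b}$-pair. At each step, the rewiring removes $E_-(i)$ and inserts $E_+(i)$ (or $\hat{E}$). The $f$-values telescope exactly as in the monotonicity case once the directional pattern is in hand. The distance terms are handled by the projection property in \Clm{d-prop}: for odd $k$ with $-1 < k < i$, the pairs $(s_k, s_{k+1})$ and $(s_{k-1}, s_{k+2})$ are related by translation of $\pm 2^b$ in coordinate $a$, so $\d(s_{k-1}, s_{k+2}) = \d(s_k, s_{k+1})$ and $\d(s_{k+2}, s_{k-1}) = \d(s_{k+1}, s_k)$. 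Consequently, the $\d$-contributions in $w(E_+(i))$ and $w(E_-(i))$ cancel everywhere except at the two endpoints of the rewiring, and the residual reduces to the scalar gap $\a_i \cdot (f(s_{i+1}) - f(s_i))$ minus a boundary distance correction that is absorbed by the directional condition. This contradicts the maximality of $\M$ as in the monotonicity proof.

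For the terminal step, once the direction pattern propagates through every even index $\le j+1$, the same rewiring arguments force $s_j$ to be $\M$-matched (else a strict weight improvement exists) and to lie outside $\cross{\M}{H^r_{a,b}}$ (else $\bS_x$ would continue), hence $s_j \in \sk{\M}{H^r_{a,b}}$. A distance-aware version of \Clm{pot} then provides a rewiring of \emph{equal} weight but strictly larger $\Phi$, contradicting the choice of $\M$ as a max-weight-then-max-$\Phi$ matching. The main obstacle is the bookkeeping for the asymmetric pseudo-distance: unlike in monotonicity, a $\d$-term appears as an additive correction in every weight contribution, and its asymmetry ($\d(x,y) \neq \d(y,x)$ in general) is what forces the introduction of the directional indicators $\a_k$. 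Establishing that the $\a_k$ must follow a fixed alternating pattern along $\bS_x$ -- and then verifying, via the triangle equality, projection, and positivity properties of \Clm{d-prop}, that the $\d$-terms align to telescope under this pattern -- is the genuinely new technical content beyond the monotonicity proof.
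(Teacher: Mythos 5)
Your plan matches the paper's own proof step for step: conditions \eqref{eq:circ} and \eqref{eq:2-circ} are the parity-encoded form of your directional indicators $\a_k$; \Prop{lip-comp} is exactly the projection-based $\d$-telescoping you describe; the boundary correction $\d(s_0,s_1)=-\alpha 2^b$ via triangle equality is what absorbs into the directional condition; and the inductive rewiring plus the terminal potential argument are organized in the paper as \Lem{lip-cons}, \Lem{lip-cons2}, \Clm{lip-rep}, \Lem{lip-last}, and \Lem{lip-skew}. One small slip in your terminal step: you claim $s_j\notin\cross{\M}{H^r_{a,b}}$ ``else $\bS_x$ would continue,'' but $\bS_x$ terminates on cross pairs just as it does on skew pairs --- it extends only through $\st{\M}{H^r_{a,b}}$-pairs --- so that reasoning does not rule out the cross case. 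This is harmless, because the potential argument (\Clm{pot}, invoked in \Lem{lip-skew}) needs only $\msd(|s_j[a]-s_{j+1}[a]|)\le b$, which holds whenever $s_j$ is $\M\setminus\st{\M}{H^r_{a,b}}$-matched; accordingly the paper's \Lem{lip-last} proves only this weaker claim and never asserts $s_j\in\sk{\M}{H^r_{a,b}}$.
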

We apply some symmetry arguments to show the next lemma, which proves \Thm{adeq}. For convenience, we drop the sub/superscripts
in $H^r_{a,b}$. (Note that we do not lose the $2$ factor here, as compared to \Lem{mono-hc}.)
\begin{lemma} \label{lem:viol-gen} The number of violations in $H$
is at least $\cross{\M}{H}$.
\end{lemma}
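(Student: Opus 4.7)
The plan is to extend the monotonicity argument of \Lem{mono-hg} to the Lipschitz setting by running the alternating-paths machinery of \Sec{altpaths} in \emph{both} possible directions of violation, exploiting a sign symmetry that monotonicity lacks. By \Clm{min}, every $(x,y) \in \cross{\M}{H}$ with $x \in L(H)$ is exclusively either \emph{up-heavy} ($f(x)-f(y) > \d(x,y)$) or \emph{down-heavy} ($f(y)-f(x) > \d(y,x)$). The set $X$ as defined captures exactly the lower endpoints of up-heavy cross-pairs.

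I would first apply \Lem{sx-gen} directly to $X$, extracting a violated $H$-pair from each $\bS_x$. I would then mirror the argument via the transformation $f \mapsto -f$: under this transformation the $(\alpha,\beta)$-Lipschitz property becomes $(-\beta,-\alpha)$-Lipschitz, and one checks by unwinding \Clm{d} that the pseudo-distance reverses as $\d_{-f}(x,y) = \d(y,x)$. The symmetric weight $w$ from \eqref{eq:w} is manifestly invariant under $f \mapsto -f$, so $\M$ together with $\Phi(\M)$ remains jointly maximal and the entire framework of \Sec{altpaths} and \Lem{sx-gen} applies verbatim to $-f$. The symmetric set $X'$ is then the upper endpoints of down-heavy cross-pairs, and applying \Lem{sx-gen} to $-f$ gives each $y \in X'$ a sequence $\bS'_y$ containing a violated $H$-pair (of $-f$, hence of $f$). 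The partition is exact, $|X|+|X'| = |\cross{\M}{H}|$.

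It remains to charge $X \cup X'$ injectively to violated $H$-pairs; this is where we save the factor of $2$ that was lost in \Lem{mono-hg}. Cross-charging between $X$ and $X'$ comes for free: the violated $H$-pair detected by $\bS_x$ for $x \in X$ is itself up-heavy (this is inherited by tracing the strict inequalities of \eqref{eq:*} along the sequence, exactly as in the monotonicity proof), while the one detected by $\bS'_y$ for $y \in X'$ is down-heavy; by \Clm{min} no single $H$-pair can be both, so the two families of violated $H$-pairs are automatically disjoint. The main obstacle is in-class injectivity: two endpoints of one alternating path in $\st{\M}{H} \triangle H$ could both lie in $X$ (symmetrically, in $X'$), in which case their sequences might detect the same first violated $H$-pair. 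To rule this out I plan a full-path rewiring in the spirit of \Lem{cons-mon2}: if $x,x'\in X$ were both endpoints of a path, remove all straight pairs along the path together with the two terminal cross-pairs from $\M$, and insert the $H$-edges of the path together with a single long cross-pair bridging the two ends; a careful bookkeeping using \Clm{d-prop} (triangle equality, projection, positivity) combined with the $\Phi$-computation of \Clm{pot} should show that the new matching either strictly dominates $\M$ in weight, or matches it in weight while strictly increasing $\Phi$, contradicting the choice of $\M$. Tracking the distance cancellations in this rewiring---where $\d$ appears asymmetrically and is no longer just $0$ or $\infty$ as in monotonicity---is the most delicate step.
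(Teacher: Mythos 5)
Your proposal misses the simpler counting argument that actually closes the proof and makes your proposed ``full-path rewiring'' unnecessary. The paper shows that for each of the $2|\cross{\M}{H}|$ endpoints of cross pairs --- all four classes $X$, $X'$, $Y$, $Y'$ --- the alternating path in the symmetric difference of $\st{\M}{H}$ and $H$ containing that endpoint contains a violated $H$-pair. These alternating paths are the connected components of the symmetric difference, hence vertex-disjoint and edge-disjoint; a path has at most two endpoints, so by pigeonhole there are at least $|\cross{\M}{H}|$ distinct paths with a cross endpoint, each supplying a distinct violated $H$-pair. This is how the factor of $2$ (lost in \Lem{mono-hc}) is recovered: one uses all $2|\cross{\M}{H}|$ cross endpoints rather than one per pair, and counts \emph{paths}, not \emph{starting points}. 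If $x,x'\in X$ are the two endpoints of the same path, that path contributes a single violated pair and no collision argument is required, so your in-class injectivity concern is a red herring and the delicate rewiring you sketch is never needed.

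The symmetry reductions are also not quite right as written. You define $X'$ as the \emph{upper} endpoints of down-heavy pairs, but $f\mapsto -f$ reverses $\d$ without swapping $L(H)$ and $U(H)$, so after that transformation \Lem{sx-gen} still only applies to \emph{lower} endpoints. The paper handles all four endpoint classes using two symmetries: $f\mapsto -f$ alone for $X'$ (defined there as lower endpoints of down-heavy pairs), and a domain reversal $\Psi$ --- which swaps $L(H)\leftrightarrow U(H)$ and passes to the $(-\beta,-\alpha)$-Lipschitz property --- for the upper-endpoint sets $Y\cup Y'$. Finally, your cross-charging heuristic (that violations found from $X$ ``inherit'' the up-heavy direction) invokes condition \eqref{eq:*}, which belongs to the monotonicity analysis; in the Lipschitz setting the relevant condition is \eqref{eq:circ}, and the inheritance claim is not established in your sketch. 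In any case it becomes irrelevant once the path-counting argument is in place.
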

\begin{proof} We can classify the endpoints of $\cross{\M}{H} \setminus H^r_{a,b}$ into
the following sets. Consider a generic $(x,y) \in \cross{\M}{H}$ where $x \in L(H)$.
If $f(x) - f(y) > \d(x,y)$, we put $x$ in $X$ and $y$ in $Y$. Otherwise, $f(y) - f(x) > \d(y,x)$,
and we put $x$ in $X'$ and $y$ in $Y'$.
By \Lem{sx-gen}, for $x \in X$, $\bS_x$ has a violated $H$-pair. Consider $x' \in X'$.
Take the function $\hat{f} = -f$ and the $(-\beta,-\alpha)$-Lipschitz property.
By \Clm{d}, the new distance satisfies $\hat{\d}(u,v) = \d(v,u)$. If $f(u) - f(v) > \d(u,v)$,
then $\hat{f}(v) - \hat{f}(u) > \hat{\d}(v,u)$ (and vice versa). Hence, the violation graphs, the weights, $\M$, and the
alternating paths are identical.
Take $x' \in X'$, so it belongs to some $(x',y') \in \cross{\M}{H}$. 
We have $\hat{f}(x) - \hat{f}(y) > \hat{\d}(x,y)$. Applying \Lem{sx-gen} to $\hat{f}$ for the $(-\beta,-\alpha)$-Lipschitz property,
$\bS_{x'}$ has a violated $H$-pair. All in all, for any $x \in X \cup X'$, $\bS_x$ contains 
a violated $H$-pair.
To deal with $Y \cup Y'$, we will first reverse the entire domain, by switching the direction
of all edges in the hypergrid. (Represent this transformation by $\Psi:[k]^n \to [k]^n$, and note
that $\Psi^{-1} = \Psi$.) By the shortest path definition of $\d$,
the new distance satisfies $\hat{\d}(u,v) = \d(\Psi(v),\Psi(u))$. Hence,
we are looking at the $(-\beta, -\alpha)$-Lipschitz property.
The matching $H$ remains the same, but the identities of $L(H)$ and $U(H)$
have switched. Construct function $\hat{f}(x) = -f(\Psi(x))$. If in the original domain $f(u) - f(v) > \d(u,v)$,
then $\hat{f}(\Psi(v)) - \hat{f}(\Psi(u)) > \hat{\d}(\Psi(v),\Psi(u))$ (and vice versa). Again, the alternating path
structure is identical. Consider in the original domain $(x,y) \in \cross{\M}{H}$ where $x \in L(H)$.
In the new domain, $\Psi(y) \in L(H)$. Hence, we can apply the conclusion of the previous paragraph
for all points in $y \in \Psi(Y \cup Y')$, and deduce that $\bS_y$ contains a violated $H$-pair.
Finally, we conclude that every alternating path with an endpoint of $\cross{\M}{H} \setminus H^r_{a,b}$
contains a violated pair. There are at least $|\cross{\M}{H} \setminus H^r_{a,b}|$ such (disjoint) alternating paths.
\end{proof}
\subsection{Preliminary setup} \label{sec:lip-pre}
All the propositions of \Sec{altpaths} hold, since they were independent of the property at hand.
We start by generalizing the monotonicity-specific setup done in \Sec{struct}. We fix some
matching $H^r_{a,b}$, and drop all super/subscripts for ease of notation. 
\begin{proposition} \label{prop:lip-comp} Consider the pairs in $E_-(i)$ and $E^+(i)$. For all
even $0 \leq j \leq i-2$, $\d(s_j,s_{j+3}) = \d(s_{j+1},s_{j+2})$ and $\d(s_{j+3},s_j) = \d(s_{j+2},s_{j+1})$.
\end{proposition}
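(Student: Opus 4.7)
The plan is to apply the projection property of \Clm{d-prop} directly. The key structural observation is that for even $j$, the pairs $(s_j, s_{j+1})$ and $(s_{j+2}, s_{j+3})$ are both $H$-pairs, so each differs only in the $a$th coordinate by exactly $2^b$, while the intermediate pair $(s_{j+1}, s_{j+2})$ is in $\st{\M}{H}$, meaning $s_{j+1}$ and $s_{j+2}$ lie on the same side of the $H$-cut.

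First, invoke \Prop{sub}. If $j \equiv 0 \md 4$, then $s_j, s_{j+3} \in L(H)$ and $s_{j+1}, s_{j+2} \in U(H)$, so we can write $s_{j+1} = s_j + 2^b e_a$ and $s_{j+2} = s_{j+3} + 2^b e_a$. If $j \equiv 2 \md 4$, the labels $L(H)$ and $U(H)$ swap and we instead have $s_{j+1} = s_j - 2^b e_a$ and $s_{j+2} = s_{j+3} - 2^b e_a$. In both cases, there is a single-coordinate vector $w = \pm 2^b e_a$ with $s_{j+1} = s_j + w$ and $s_{j+2} = s_{j+3} + w$.

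Finally, apply the projection clause of \Clm{d-prop} with $x = s_j$, $y = s_{j+3}$, and $v = w$: since $v$ is supported on a single coordinate, we conclude $\d(s_j, s_{j+3}) = \d(s_j + w, s_{j+3} + w) = \d(s_{j+1}, s_{j+2})$, and symmetrically $\d(s_{j+3}, s_j) = \d(s_{j+2}, s_{j+1})$.

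I do not anticipate any genuine obstacle: once the case split on $j \md 4$ is in hand, the only substantive step is recognizing that the ``shift by $w$'' simultaneously converts the outer pair of $E_+$ into the inner $\st{\M}{H}$-pair of $E_-$, and the projection clause of \Clm{d-prop} then fires immediately. The mild bookkeeping point to be careful about is the sign of $w$ depending on whether $s_j$ is a lower or upper endpoint of its $H$-edge, but the conclusion is the same in both cases.
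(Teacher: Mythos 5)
Your proof is correct and follows essentially the same route as the paper: invoke \Prop{sub} to place $s_j$ and $s_{j+3}$ on the same side of the $H$-cut, observe that $s_{j+1}$ and $s_{j+2}$ are then obtained by the same shift $\pm 2^b e_a$, and apply the projection property from \Clm{d-prop}. The paper phrases the shift slightly more tersely (``both obtained by adding or subtracting $2^b$ from the $a$th coordinate'') but the argument is identical.
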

\begin{proof} By \Prop{sub}, $s_j$ and $s_{j+3}$ both lie in $L(H)$ or $U(H)$. Hence, $s_{j+1} = H(s_j)$
and $s_{j+2} = H(s_{j+3})$ are both obtained by adding or subtracting $2^b$ from the $a$th coordinate.
By the projection property, $\d(s_j,s_{j+3}) = \d(s_{j+1},s_{j+2})$ and $\d(s_{j+3},s_j) = \d(s_{j+2},s_{j+1})$.
\end{proof}
Our aim is to generalize the conditions \eqref{eq:*} and \eqref{eq:**}. The former condition is obtained
by assuming that $(s_i,s_{i+1})$ is not a violation. For monotonicity, this implies a single inequality,
but here, there are two inequalities. It turns out that because we are in the setting where
$w(x,y) = f(x) - f(y) - \d(x,y) > 0$, only one of these is necessary. For
even $i$, if $(s_i,s_{i+1})$ is not a violation, \Cor{sub} implies 
\begin{align} 
\textrm{If} \ i\equiv 0\md 4, f(s_{i+1}) - f(s_{i}) > \alpha 2^b.\nonumber\\
\textrm{If} \ i\equiv 2\md 4, f(s_{i}) - f(s_{i+1}) > \alpha 2^b. \label{eq:circ} \tag{$\circ$}
\end{align}
Nowe we generalize \eqref{eq:**}. The pair $(s_{i-1},s_i)$ is a violation, but we do not
know whether $w(s_{i-1},s_i)$ is $f(s_{i-1}) - f(s_i) - \d(s_{i-1},s_i)$ or $f(s_{i}) - f(s_{i-1}) - \d(s_{i},s_{i-1})$.
The following is the equivalent
of the ordering condition of \eqref{eq:**}.
\begin{align} 
\textrm{If} \ i\equiv 0\md 4, f(s_{i}) - f(s_{i-1}) > \d(s_i,s_{i-1}).\nonumber\\
\textrm{If} \ i\equiv 2\md 4, f(s_{i-1}) - f(s_{i}) > \d(s_{i-1},s_i). \label{eq:2-circ} \tag{$\circ\circ$}
\end{align}
\subsection{The structure lemmas} \label{sec:lip-str}
This lemma is the direct analogue of \Lem{cons-mon}. The proof is also along similar lines.
\begin{lemma}\label{lem:lip-cons}
Consider some even index $i$ such that $s_i$ exists.
Suppose conditions \eqref{eq:circ} and \eqref{eq:2-circ} held for all even indices $\leq i$.
Then, $s_{i+1}$ is $\M$-matched.
\end{lemma}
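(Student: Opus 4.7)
The plan is to mirror \Lem{cons-mon}, tracking the asymmetric pseudo-distance $\d$ carefully. Assume for contradiction that $s_{i+1}$ is $\M$-unmatched; we handle $i \equiv 0 \pmod 4$, the other case being symmetric. Consider the rewired matching $\M' = \M - E_-(i) + E_+(i)$, valid by \Prop{ek}, and aim to contradict the choice of $\M$ by showing either $w(\M') > w(\M)$, or else $w(\M') \geq w(\M)$ with $\Phi(\M') > \Phi(\M)$.

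For $E_-(i)$, condition \eqref{eq:2-circ} specifies the violation direction of each $\M$-pair, yielding an exact expression with alternating signs that mirrors the monotonicity computation. For $E_+(i)$, we lower bound each pair's weight by the direction ``inherited'' from the adjacent $\M$-pair after the $H$-shift: for $(s_{j'}, s_{j'+3})$ use the direction paralleling \eqref{eq:2-circ} for $(s_{j'+1}, s_{j'+2}) \in \st{\M}{H}$. With these choices, the $f$-contributions telescope to $f(s_{i+1}) - f(s_i)$ just as in \Lem{cons-mon}. By \Prop{lip-comp}, the identity $\d(s_{j'}, s_{j'+3}) = \d(s_{j'+1}, s_{j'+2})$ cancels all interior distance terms, and the right-boundary pair cancels via the vector identity $s_{i+1} - s_{i-2} = s_i - s_{i-1}$ (unwinding the $H$-shifts through $s_{i-2}, s_{i-1}$) together with the projection property of $\d$. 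Only the left-boundary difference $\d(s_0, s_{-1}) - \d(s_1, s_{-1})$ survives.

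The main obstacle is this boundary term, whose value depends on whether $s_1 = H(s_0)$ lies coordinate-wise between $s_0$ and $s_{-1}$. In the favourable sub-case, the triangle equality of \Clm{d-prop} gives $\d(s_0, s_{-1}) - \d(s_1, s_{-1}) = \d(s_0, s_1) = -\alpha \cdot 2^b$, so $w(\M') - w(\M) \geq [f(s_{i+1}) - f(s_i)] - \alpha \cdot 2^b > 0$ by \eqref{eq:circ}, contradicting $w$-maximality. In the contrary sub-case ($s_{-1}[a] < s_0[a]$) the boundary term becomes $-\beta \cdot 2^b$ and the weight inequality alone is insufficient; here the proof must appeal to the $\Phi$-tie-breaking property of $\M$. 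The key observation is that the cross-pair definition forces $\msd(|s_{-1}[a] - s_0[a]|) = b$ exactly, while the $H$-shift by $2^b$ makes $\msd(|s_{-1}[a] - s_1[a]|) \geq b+1$; all other coordinate contributions to $\Phi(E_+(i)) - \Phi(E_-(i))$ balance via the same shift identities, giving $\Phi(\M') > \Phi(\M)$. Verifying that the companion inequality $w(\M') \geq w(\M)$ can simultaneously be maintained in this sub-case---via a more delicate choice of direction for the boundary pair $(s_{-1}, s_1)$, or a secondary rewiring involving $H(s_{-1})$---is the principal technical challenge of the full proof.
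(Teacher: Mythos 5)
Your ``favourable sub-case'' is exactly the paper's proof: rewire with $E_-(i)$ and $E_+(i)$, match the $f$-terms, cancel all interior $\d$-terms via \Prop{lip-comp} and the right boundary via the shift identity $s_{i+1}-s_{i-2}=s_i-s_{i-1}$, and reduce to the single boundary quantity $\d(s_0,s_{-1})-\d(s_1,s_{-1})=\d(s_0,s_1)=-\alpha 2^b$, which contradicts \eqref{eq:circ}. Up to that point the argument is correct and essentially identical.

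The ``contrary sub-case'' $s_{-1}[a]<s_0[a]$ that you flag as the principal difficulty does not arise. The cross-pair $(s_0,s_{-1})$ is labelled so that $s_0\in L(H^r_{a,b})$ and $s_{-1}\in U(H^r_{a,b})$, and the assignment of $\M$-pairs to the sets $\cross{\M}{H^r_{a,b}}$ (spelled out in \Sec{mono-hg} and carried over in \Sec{lip}) always places the endpoint with the \emph{smaller} $a$-coordinate in $L$: for a violating pair one picks a coordinate $a$ with $\msd(|y[a]-x[a]|)=b$, sets $x$ to be the end with smaller $a$-coordinate, and lands in $H^0_{a,b}$ or $H^1_{a,b}$ according to $x[a]\md{2^{b+1}}$; the residue computation then shows $x\in L$ and $y\in U$ for that choice of $r$. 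Consequently $s_{-1}[a]-s_0[a]$ is a positive odd multiple of $2^b$, so $s_1[a]=s_0[a]+2^b\le s_{-1}[a]$ and $s_1$ lies coordinate-wise between $s_0$ and $s_{-1}$; the triangle equality of \Clm{d-prop} applies unconditionally and the boundary term is always $-\alpha 2^b$. Your calculation that the boundary would be $-\beta 2^b$ if $s_{-1}[a]<s_0[a]$, and that the weight argument would then fail, is itself correct as arithmetic---it is just a case that the conventions exclude. Since you conclude by saying the companion argument for this phantom case is ``the principal technical challenge of the full proof,'' the proposal as written is incomplete, but the missing piece is the observation that the case is vacuous, not a genuine appeal to the $\Phi$-potential (which enters only in \Lem{lip-skew}, not here).
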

\begin{proof} The proof is by contradiction. 
Assume $i\equiv 0\md 4$. (The proof for the case $i\equiv 2\md 4$ is similar and omitted.)
As in the proof of \Lem{cons-mon}, we argue that $w(\M') > w(\M)$,
where $\M' = \M - E_-(i) + E_+(i)$. 
By condition \eqref{eq:**},
\begin{eqnarray}
w(E_-(i)) & = & [f(s_{0}) - f(s_{-1}) - \d(s_0, s_{-1})] + [f(s_1) - f(s_2) - \d(s_1, s_2)] \nonumber \\
& & + [f(s_4) - f(s_3) - \d(s_4,s_3)] + [f(s_5) - f(s_6) - \d(s_5,s_6)] + \cdots \nonumber \\
& & + [f(s_{i-3}) - f(s_{i-2}) - \d(s_{i-3},s_{i-2})] + [f(s_{i}) - f(s_{i-1}) - \d(s_i,s_{i-1})] \label{eq:w-lip}
\end{eqnarray}
For $w(E_+(i))$, it suffices to find a lower bound. Since (for any $u,v \in \D$) $w(u,v)$ is the maximum
of two expressions, we can choose the expression to match $w(E_-(i))$ as much as possible.
For a pair $(s_k,s_{k+3})$ in $E_+(i)$, we bound the weight by
$f(s_k) - f(s_{k+3}) - \d(s_k, s_{k+3})$ if $j \equiv 0 \md 4$ and by $f(s_{k+3}) - f(s_k) - \d(s_{k+3},s_k)$ if $j \equiv 2 \md 4$.
This ensure that the coefficients of $f(\cdot)$ are identical to those in \Eqn{w-lip}. 
\begin{eqnarray}
w(E_+(i)) & \geq & [f(s_{1}) - f(s_{-1}) - \d(s_1, s_{-1})] + [f(s_0) - f(s_3) - \d(s_0, s_3)] \nonumber \\
& & + [f(s_5) - f(s_2) - \d(s_5,s_2)] + [f(s_4) - f(s_7) - \d(s_4,s_7)] + \cdots \nonumber \\
& & + [f(s_{i-4}) - f(s_{i-1}) - \d(s_{i-4},s_{i-1})] + [f(s_{i+1}) - f(s_{i-2}) - \d(s_{i+1},s_{i-2})] \label{eq:w+lip}
\end{eqnarray}
Note that only $w(E_+(i))$ involves $f(s_{i+1})$ and only $w(E_-(i))$ involves $f(s_i)$, but all
other $f(\cdot)$ terms have identical coefficients.
To deal with the difference of the distances, we use \Prop{lip-comp}. All the distance terms
in \Eqn{w+lip} except for the first cancel out with an equivalent term in \Eqn{w-lip}.
\begin{eqnarray*}
w(E_+(i)) - w(E_-(i)) & \geq & f(s_{i+1}) - f(s_i) - \d(s_1,s_{-1}) + \d(s_0, s_{-1})
\end{eqnarray*}
Since $(s_0,s_{-1})$ is a cross pair and $s_1 = H(s_0)$, we can use triangle equality to deduce that
$\d(s_0, s_{-1}) - \d(s_1,s_{-1}) = \d(s_0,s_1) = - \alpha 2^b$. 
Combining, $w(E_+(i)) - w(E_-(i)) \geq f(s_{i+1}) - f(s_i) - \alpha 2^b$. By condition \eqref{eq:circ}
for $i$, the RHS is strictly positive. Contradiction.
\end{proof}
Now for analogue of \Lem{cons-mon2} and \Clm{rep}. We will prove the latter first.
\begin{lemma}\label{lem:lip-cons2}
Consider some even index $i$ such that $s_i$ exists.
Suppose conditions \eqref{eq:circ} and \eqref{eq:2-circ} held for all even indices $\leq i$.
Then, condition \eqref{eq:2-circ} holds for $i+2$.
\end{lemma}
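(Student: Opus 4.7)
\medskip
\noindent
\textbf{Proof plan.} The plan is to mimic \Lem{cons-mon2} almost verbatim, with \Prop{lip-comp} playing the role of the partial-order comparisons in \Prop{ek-comp}, and with the pseudo-distance taking the place of the $\prec$ relation. The core of the proof is a matching rewiring that strictly increases the weight, contradicting maximality of $\M$.

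First I would prove a distinctness claim analogous to \Clm{rep}: under the assumed conditions, $s_{-1},s_0,\ldots,s_{j+1}$ are distinct. The only nontrivial case is $s_j = y$, handled by rewiring $\{(x,y),(s_{j-2},s_{j-1})\}$ to $(x,s_{j-2})$ and using \eqref{eq:circ} at $j-1$ together with \Prop{cross} to deduce that this single pair strictly exceeds the combined weight of the two it replaces; the calculation closely follows the proof of \Clm{rep}, only now $f(s_j)-f(s_{j-1})-\d(s_j,s_{j-1})$ replaces the quantity $f(y)-f(s_{j-1})$.

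Next, for the main lemma, apply \Lem{lip-cons} to conclude that $s_{i+2}:=\M(s_{i+1})$ exists. Assume for contradiction that \eqref{eq:2-circ} fails at $i+2$; since $(s_{i+1},s_{i+2})\in\M$ is a violation, \Clm{min} forces the strict opposite inequality to hold, namely (for $i\equiv 0\md 4$, so $i+2\equiv 2\md 4$) $f(s_{i+2})-f(s_{i+1})>\d(s_{i+2},s_{i+1})$ — exactly the wrong sign for the \eqref{eq:2-circ} trend. Now form the rewired matching $\M'=\M-E_-(i+2)+E'$ where $E'=E_+(i-2)\cup\{(s_{i-2},s_{i+2})\}$. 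By the distinctness claim and \Prop{ek}, $\M'$ is a valid matching. The weight of $E_-(i+2)$ is computed as in \eqref{eq:w-lip}, except the final summand is $f(s_{i+2})-f(s_{i+1})-\d(s_{i+2},s_{i+1})$ (the assumed negation of \eqref{eq:2-circ}).

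For $w(E')$ I would lower-bound each pair's weight by picking, among the two expressions in the definition of $w$, the one whose $f$-coefficients align with those in $E_-(i+2)$ — this is the same trick used in \Lem{lip-cons}. Summed over $E_+(i-2)$ this gives the analogue of \eqref{eq:w+lip}, and the appended pair contributes $f(s_{i+2})-f(s_{i-2})-\d(s_{i+2},s_{i-2})$. All $f(\cdot)$ terms on the two sides cancel except $f(s_{i+1})$ and $f(s_i)$, leaving $w(\M')-w(\M)\geq f(s_{i+1})-f(s_i)+\Delta$, where $\Delta$ is the net contribution of distance terms. Inside $E_+(i-2)$ the distances pair off with those in the initial block of $E_-(i+2)$ via \Prop{lip-comp}; the $(s_{-1},s_0)$ vs.\ $(s_{-1},s_1)$ mismatch contributes $-\alpha 2^b$ exactly as in \Lem{lip-cons}, via triangle equality along the cross pair $(s_{-1},s_0)$ and the $H$-edge $(s_0,s_1)$.

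The main obstacle is the new pair $(s_{i-2},s_{i+2})$, whose distance must be reconciled with $\d(s_{i-1},s_i)$, $\d(s_{i},s_{i+1})$, and $\d(s_{i+1},s_{i+2})$ that appear in $E_-(i+2)$. Here I would exploit that $s_{i-2}=H(s_{i-1})$ and that the pairs $(s_{i-1},s_i)\in\st{\M}{H}$, $(s_i,s_{i+1})\in H$, $(s_{i+1},s_{i+2})\in\st{\M}{H}$ line up along a path in $\G$ whose concatenation is witnessed (after an $H$-translation by $2^b e_a$) by $s_{i-2}\to s_{i+2}$; by the projection and triangle-equality properties of \Clm{d-prop}, the relevant distance cancellations reduce exactly to a single additional $H$-step, contributing a second $-\alpha 2^b$ that is absorbed into the $(\circ)$-inequality at $i$. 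Putting these together yields $w(\M')-w(\M)\geq f(s_{i+1})-f(s_i)-\alpha 2^b>0$ by \eqref{eq:circ} at $i$, contradicting the maximality of $\M$. The case $i\equiv 2\md 4$ is fully symmetric and I would omit it. The delicate step is verifying that the triangle-equality hypothesis (the intermediate point's coordinates lying between those of the endpoints) actually holds for the path used in the $(s_{i-2},s_{i+2})$ reduction; if it fails in some configuration, I would fall back on the triangle inequality plus positivity (\Clm{d-prop}) to still extract a nonnegative distance slack, which combined with the strict $f$-inequality suffices.
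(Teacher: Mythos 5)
Your overall structure matches the paper's proof exactly: deduce $s_{i+2}=\M(s_{i+1})$ from \Lem{lip-cons}, note that failure of \eqref{eq:2-circ} at $i+2$ together with $(s_{i+1},s_{i+2})$ being a violation forces $w(s_{i+1},s_{i+2})=f(s_{i+2})-f(s_{i+1})-\d(s_{i+2},s_{i+1})$, prove a distinctness claim (the paper's \Clm{lip-rep}) to validate the rewiring, take $\M'=\M-E_-(i+2)+E'$ with $E'=E_+(i-2)\cup\{(s_{i-2},s_{i+2})\}$, and lower-bound $w(E')$ pair-by-pair by whichever of the two expressions in the definition of $w$ aligns its $f$-coefficients with $w(E_-(i+2))$. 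All of that is right.

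The flaw is in the distance accounting for the appended pair $(s_{i-2},s_{i+2})$. First, $\d(s_i,s_{i+1})$ does not appear in $w(E_-(i+2))$: $(s_i,s_{i+1})$ is an $H$-pair, not an $\M$-pair, so the distances you must reconcile are $\d(s_i,s_{i-1})$ and $\d(s_{i+2},s_{i+1})$ against $\d(s_{i+2},s_{i-2})$, with no extra $H$-step in the sum. Second, there is no ``second $-\alpha 2^b$.'' If you decompose the path $s_{i+2}\to s_{i+1}\to s_i\to s_{i-1}\to s_{i-2}$ you pick up both an $H$-step of cost $\beta 2^b$ (the edge $s_{i+1}\to s_i$, since $s_i\prec s_{i+1}$) and one of cost $-\alpha 2^b$ (the edge $s_{i-1}\to s_{i-2}$), giving a net $(\alpha-\beta)2^b<0$ that \eqref{eq:circ} alone cannot absorb. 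The correct treatment is the tighter move: triangle inequality once, $\d(s_{i+2},s_{i-2})\leq\d(s_{i+2},s_{i+1})+\d(s_{i+1},s_{i-2})$, so
\[
-\d(s_{i+2},s_{i-2})+\d(s_i,s_{i-1})+\d(s_{i+2},s_{i+1})\ \geq\ \d(s_i,s_{i-1})-\d(s_{i+1},s_{i-2})\ =\ 0,
\]
where the final equality is \Prop{lip-comp} (a projection statement), not positivity. With this the appended pair contributes $\geq 0$, the only surviving slack is the single $-\alpha 2^b$ from the $(s_{-1},s_0)$ vs.\ $(s_{-1},s_1)$ mismatch, and \eqref{eq:circ} at $i$ closes the argument. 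Your fallback (``triangle inequality plus positivity'') is nearly there but names the wrong tool: positivity of \Clm{d-prop} concerns cycle lengths, whereas what cancels $\d(s_{i+1},s_{i-2})$ against $\d(s_i,s_{i-1})$ is the projection invariance captured in \Prop{lip-comp}.
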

\begin{claim} \label{clm:lip-rep} Let $j$ be the last index of $\bS_x$. Suppose
conditions \eqref{eq:circ} and \eqref{eq:2-circ} hold for all even $i < j$. Then the sequence $s_{-1}, s_0 ,s_1, \ldots, s_j, \M(s_j)$ are distinct.
\end{claim}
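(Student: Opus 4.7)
The plan is to follow the template of Claim \ref{clm:rep} (the monotonicity analogue), replacing the order-based arithmetic with manipulations of the pseudo-distance $\d$. First I would observe that $s_0, s_1, \ldots, s_j$ are distinct simply because they form an alternating path in $\st{\M}{H} \cup H$. For the remainder of the distinctness I would do an $\M$-partner chase: if $y = s_k$ for some $1 \leq k < j$, then $\M(y) = x = s_0$ would have to equal $\M(s_k)$, but along the alternating path $\M(s_k) \in \{s_{k-1}, s_{k+1}\}$ with neither equal to $s_0$ (distinctness of the path), a contradiction. An entirely analogous chase rules out $\M(s_j) = s_k$ for all $-1 \leq k < j$ except in the single exceptional case $s_j = y$ (which forces $\M(s_j) = \M(y) = s_0$). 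So the whole claim reduces to ruling out $s_j = y$.

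To rule out $s_j = y$, I would first note that this forces $j \equiv 1 \md 4$ (by \Prop{sub}, since $y \in U(H)$) and $j \geq 3$ (else $s_1 = H(x) = y$ would put $(x,y)$ in $H$, contradicting $(x,y) \in \cross{\M}{H} \setminus H$). Then I would exhibit the same rearrangement used in the monotonicity proof,
\[
\M' \;=\; \bigl(\M \setminus \{(x,y),\,(s_{j-2}, s_{j-1})\}\bigr) \cup \{(x, s_{j-2})\},
\]
which is a valid matching because $x$ and $s_{j-2}$ are freed by the removal. The membership $x \in X$ fixes $w(x,y) = f(x) - f(y) - \d(x,y)$, condition \eqref{eq:2-circ} at $j-1 \equiv 0 \md 4$ fixes $w(s_{j-2}, s_{j-1}) = f(s_{j-1}) - f(s_{j-2}) - \d(s_{j-1}, s_{j-2})$, and the lower bound $w(x, s_{j-2}) \geq f(x) - f(s_{j-2}) - \d(x, s_{j-2})$ applies unconditionally. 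The $f$-terms cancel just as in the monotonicity calculation, leaving
\[
w(\M') - w(\M) \;\geq\; \bigl[f(y) - f(s_{j-1})\bigr] \;+\; \bigl[\d(x,y) - \d(x, s_{j-2}) + \d(s_{j-1}, s_{j-2})\bigr].
\]

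The one non-routine step, and the main obstacle, is lower-bounding the distance bracket by $-\alpha 2^b$. The plan is to factor through $s_{j-1}$: the triangle inequality $\d(x, s_{j-2}) \leq \d(x, s_{j-1}) + \d(s_{j-1}, s_{j-2})$ (from \Clm{d-prop}) reduces the bracket to $\d(x, y) - \d(x, s_{j-1})$, and then the triangle \emph{equality} of \Clm{d-prop} applied at $s_{j-1} = y - 2^b e_a$ gives $\d(x, y) = \d(x, s_{j-1}) + \d(s_{j-1}, y)$. The equality is justified coordinatewise: $s_{j-1}$ agrees with $y$ outside coord $a$, and in coord $a$ the cross-pair condition places $s_{j-1}[a] = y[a] - 2^b$ between $x[a]$ and $y[a]$. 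Since $\d(s_{j-1}, y) = -\alpha 2^b$ (a single increment along the cheapest direction), the bracket is exactly $-\alpha 2^b$. Finally, condition \eqref{eq:circ} at $j-1$ yields $f(y) - f(s_{j-1}) > \alpha 2^b$, so $w(\M') > w(\M)$, contradicting the maximality of $\M$ and completing the argument.
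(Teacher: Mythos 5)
Your proposal follows exactly the same route as the paper's proof. Both reduce the distinctness claim to ruling out the exceptional coincidence $s_j = y$ (the paper does this by citing \Clm{rep}, you do it by an explicit $\M$-partner chase, but the underlying reasoning is the same), both pin down $j\equiv 1\md 4$ and $j\geq 3$, and both then exhibit the identical rearrangement replacing $\{(x,y),(s_{j-2},s_{j-1})\}$ with $(x,s_{j-2})$. The weight calculation is also the same: you use $x\in X$ to fix $w(x,y)$, condition $(\circ\circ)$ at $j-1$ to fix $w(s_{j-2},s_{j-1})$, the unconditional lower bound on $w(x,s_{j-2})$, a triangle inequality through $s_{j-1}$, the triangle equality $\d(x,y)=\d(x,s_{j-1})-\alpha 2^b$, and finally condition $(\circ)$ at $j-1$; the paper writes this as a chain of bounds on $w(A)$ rather than on $w(\M')-w(\M)$, but these are algebraically identical. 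The places where you spell things out more explicitly (the coordinate-wise verification of the triangle-equality hypothesis via the cross-pair condition, the observation that $w(x,s_{j-2})>w(A)>0$ so the new pair really is a violation) are exactly the steps the paper leaves implicit, so this is a faithful reconstruction rather than a new argument.

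One small imprecision: in your $\M$-partner chase you write ``$\M(s_k)\in\{s_{k-1},s_{k+1}\}$ with neither equal to $s_0$''; for $k=1$ one of the two candidates literally \emph{is} $s_0$, so the phrasing is off, but since $\M(s_1)=s_2$ for $1<j$ (the $\st{\M}{H}$-partner, not the $H$-partner) the conclusion still holds. Worth tightening the wording.
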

\begin{proof} By the arguments in \Clm{rep}, it suffices to get a contradiction assuming $s_j = y$.
Since $y \in U(H)$, by \Prop{sub}, $j \equiv 1 \md 4$. 
Note that $s_{j-1} = H(y)$ and $(x,y)$ is a cross pair. Therefore, we have the triangle
equality $\d(x,y) = \d(x,s_{j-1}) + \d(s_{j-1},y) = \d(x,s_{j-1}) - \alpha 2^b$.
We will replace pairs $A = \{(x,y), (s_{j-1},s_{j-2})\} \in \M$ with $(x,s_{j-2})$, and argue that the weight has increased.
Applying condition \eqref{eq:2-circ} for $j-1$,
\begin{eqnarray*}
	w(A) & = & [f(x) - f(y) - \d(x,y)] + [f(s_{j-1}) - f(s_{j-2}) - \d(s_{j-1},s_{j-2})]\\
	& = & f(x) - f(y) + f(s_{j-1}) - f(s_{j-2}) - \d(x,s_{j-1}) + \alpha 2^b - \d(s_{j-1},s_{j-2}) \\
	& \leq & f(x) - f(y) + f(s_{j-1}) - f(s_{j-2}) -\d(x,s_{j-2}) + \alpha 2^b \ \ \ \ \textrm{(triangle inequality)}\\
	& = & [f(x) - f(s_{j-2}) - \d(x,s_{j-2})] - [f(y) - f(s_{j-1}) - \alpha 2^b] \\
	& \leq & w(x,s_{j-2}) - [f(y) - f(s_{j-1}) - \alpha 2^b] 
\end{eqnarray*}
The second term is strictly positive (by condition \eqref{eq:circ} for $j-1 \equiv 0 \md{4}$), contradicting the maximality of $\M$.
\end{proof}
\begin{proof} (of \Lem{lip-cons2}) Assume $i\equiv 0\md 4$. (The proof for the case $i\equiv 2\md 4$ is similar and omitted.)
By \Lem{lip-cons}, $\M(s_{i+1})$ exists, and is denoted by $s_{i+2}$.
The proof is by contradiction, so assume condition \eqref{eq:2-circ} does not hold for $i+2 \equiv 2 \md 4$.
This means $f(s_{i+1}) - f(s_{i+2}) \leq \d(s_{i+1},s_{i+2})$. Since $(s_{i+1},s_{i+2})$ is a violation,
this implies $w(s_{i+1},s_{i+2}) = f(s_{i+}) - f(s_{i+1}) - \d(s_{i+2},s_{i+1})$.
We set $E' = E_+(i-2) \cup (s_{i-2},s_{i+2})$.
We argue that $w(\M') > w(\M)$,
where $\M' = \M - E_-(i+2) + E'$. By \Prop{ek} and \Clm{lip-rep}, $\M'$ is a valid matching.
By condition \eqref{eq:2-circ} for even $k < i+2$ and the above conclusion on $w(s_{i+1},s_{i+2})$,
we get almost the same expression as \eqref{eq:w-lip}.
\begin{eqnarray}
w(E_-(i+2)) & = & [f(s_{0}) - f(s_{-1}) - \d(s_0, s_{-1})] + [f(s_1) - f(s_2) - \d(s_1, s_2)] \nonumber \\
& & + [f(s_4) - f(s_3) - \d(s_4,s_3)] + [f(s_5) - f(s_6) - \d(s_5,s_6)] + \cdots \nonumber \\
& & + [f(s_{i-3}) - f(s_{i-2}) - \d(s_{i-3},s_{i-2})] + [f(s_{i}) - f(s_{i-1}) - \d(s_i,s_{i-1})] \nonumber \\
& & + [f(s_{i+2}) - f(s_{i+1}) - \d(s_{i+2},s_{i+1})] \label{eq:w-lip2}
\end{eqnarray}
For $w(E')$, we follow the same pattern in \eqref{eq:w+lip}.
\begin{eqnarray}
w(E') & \geq & [f(s_{1}) - f(s_{-1}) - \d(s_1, s_{-1})] + [f(s_0) - f(s_3) - \d(s_0, s_3)] \nonumber \\
& & + [f(s_5) - f(s_2) - \d(s_5,s_2)] + [f(s_4) - f(s_7) - \d(s_4,s_7)] + \cdots \nonumber \\
& & + [f(s_{i-3}) - f(s_{i-6}) - \d(s_{i-3},s_{i-6})] + [f(s_{i-4}) - f(s_{i-1}) - \d(s_{i-4},s_{i-1})] \nonumber \\
& & + [f(s_{i+2}) - f(s_{i-2}) - \d(s_{i+2},s_{i-2})] \label{eq:w+lip2}
\end{eqnarray}
By \Prop{lip-comp}, all distance terms
in \Eqn{w+lip} barring the first and last are identical to an equivalent term in \Eqn{w-lip}.
\begin{eqnarray*}
w(E_+(i+2)) - w(E_-(i+2)) & \geq & f(s_{i+1}) - f(s_i) \\
& & - \d(s_1,s_{-1}) - \d(s_{i+2},s_{i-2}) + \d(s_0, s_{-1}) + \d(s_i, s_{i-1}) + \d(s_{i+2},s_{i+1})
\end{eqnarray*}
As in the proof of \Lem{lip-cons},
$\d(s_0, s_{-1}) - \d(s_1,s_{-1}) = \d(s_0,s_1) = - \alpha 2^b$. Furthermore,
\begin{eqnarray*}
- \d(s_{i+2},s_{i-2}) + \d(s_i, s_{i-1}) + \d(s_{i+2},s_{i+1}) & \geq & \d(s_i,s_{i-1}) - \d(s_{i+1},s_{i-2}) \ \ \ \textrm{(triangle inequality)} \\
& = & 0 \ \ \ \textrm{(\Prop{lip-comp})}
\end{eqnarray*}
Combining, $w(E') - w(E_-(i+2)) \geq f(s_{i+1}) - f(s_i) - \alpha 2^b$. This is strictly positive, by condition \eqref{eq:circ}
for $i$. Contradiction.
\end{proof}
We proceed to the analogue of \Lem{last}. Because of the use of distances and potentials,
we require a much simpler statement.
\begin{lemma} \label{lem:lip-last} Suppose $\bS_x$ contains no violated $H$-pair. Let the last
term by $s_j$ ($j$ is odd). For every even $i \leq j+1$, condition \eqref{eq:2-circ} holds.
Furthermore, $s_j$ is $\M \setminus \st{\M}{H}$-matched.
\end{lemma}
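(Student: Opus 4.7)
The plan is to mirror the structure of the proof of \Lem{last}, replacing \Lem{cons-mon} and \Lem{cons-mon2} with their Lipschitz analogues \Lem{lip-cons} and \Lem{lip-cons2}. First I would prove that \eqref{eq:2-circ} holds for every even $i \le j+1$ by strong induction on $i$. The base case $i=0$ is immediate from the definition of $X$: since $x \in X$, we have $s_0 = x \in L(H)$ (so $0 \equiv 0 \md 4$ is the relevant branch) and $f(s_0) - f(s_{-1}) = f(x) - f(y) > \d(x,y) = \d(s_0,s_{-1})$, which is exactly \eqref{eq:2-circ} at $i=0$.

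For the inductive step, assume \eqref{eq:2-circ} holds for all even $i' < i$ with $i \le j+1$. Because $\bS_x$ contains no violated $H$-pair, for every even $i' \le i-2$ the pair $(s_{i'},s_{i'+1}) \in H$ has $w(s_{i'},s_{i'+1}) < 0$ (strict, by \Clm{pert}); translating this inequality via \Cor{sub} is precisely condition \eqref{eq:circ} at $i'$. Thus \eqref{eq:circ} and \eqref{eq:2-circ} both hold for all even indices $\le i-2$, so \Lem{lip-cons2} applied at the index $i-2$ yields \eqref{eq:2-circ} at $i$, closing the induction.

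For the second statement, apply \Lem{lip-cons} at the even index $i = j-1$: by the first part, conditions \eqref{eq:circ} and \eqref{eq:2-circ} hold for every even index $\le j-1$, so the lemma asserts that $s_j$ is $\M$-matched. On the other hand, the construction of $\bS_x$ dictates that an odd-indexed term $s_j$ is terminal exactly when $s_j$ is not $\st{\M}{H}$-matched. Combining these two facts, $s_j$ must be matched by a pair in $\M \setminus \st{\M}{H}$, which completes the proof.

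The substantive work here has already been absorbed into \Lem{lip-cons} and \Lem{lip-cons2}: all the matching-rewiring with the asymmetric pseudo-distance $\d$, the use of \Prop{lip-comp} (projection) and the triangle equality for cross pairs, and the sign-tracking that replaces the strictly directional monotonicity arguments. Given these, the main thing to be careful about above is that the inductive bookkeeping uses the correct form of the non-violation inequality \eqref{eq:circ}, which only uses the direction $w(s_{i'},s_{i'+1}) < 0$ that is consistent with the partial order supplied by \Cor{sub}; no second inequality is needed, which is why the statement of \Lem{lip-last} is cleaner than in the monotonicity case and why we do not need to separately rule out $s_j \in \cross{\M}{H}$.
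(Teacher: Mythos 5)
Your proof is correct and follows essentially the same route as the paper's: induction on even indices with base case $i=0$ from the definition of $X$, invoking \Lem{lip-cons2} for the inductive step, and then \Lem{lip-cons} plus the terminal condition of the alternating path construction for the second claim. Your closing remark correctly explains why the conclusion is weaker than in \Lem{last} (no need to rule out cross pairs), since that ruling-out is deferred to \Lem{lip-skew}.
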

\begin{proof} The first part is identical to that of \Lem{last}. 
Condition \eqref{eq:2-circ} holds for $i=0$, and applications of \Lem{lip-cons2}
complete the proof. By \Lem{lip-cons} $s_j$ is $\M$-matched, but being
the last term cannot be $\st{\M}{H}$-matched.
\end{proof}
\subsection{The existence of a violated edge in $\bS_x$}
We show the existence of a violated $H$-edge in $\bS_x$,
proving \Lem{sx-gen}.
Suppose $\bS_x$ has no violated $H$-pair.
By \Lem{lip-last}, $s_j$ is $\M \setminus \st{\M}{H}$-matched.
By the following lemma (analogue of \Lem{skew}) asserts $\msd(s_j[a] - s_{j+1}[a]) > b$, implying $s_j$ is $\st{\M}{H}$-matched.
\begin{lemma} \label{lem:lip-skew} Suppose $\bS_x$ contains no violated $H$-pair. Let the last
term by $s_j$ ($j$ is odd). Then $\msd(s_j[a] - s_{j+1}[a]) > b$.
\end{lemma}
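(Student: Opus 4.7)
The plan is to mimic the proof of \Lem{skew} for monotonicity, adapting it to the pseudo-distance setting. Assume for contradiction that $\msd(|s_j[a] - s_{j+1}[a]|) \leq b$. By \Lem{lip-last}, $s_{j+1} = \M(s_j)$ exists and condition \eqref{eq:2-circ} holds for every even $i \leq j+1$ (in particular for $i = j+1$, which fixes the ``direction'' of the violation $(s_j, s_{j+1})$). By \Clm{lip-rep}, the points $s_{-1}, s_0, \ldots, s_j, s_{j+1}$ are all distinct. Form $\hat E = E_+(j-1) \cup \{(s_{j-1}, s_{j+1})\}$ and $\M' = \M - E_-(j+1) + \hat E$; by \Prop{ek}, $\hat E$ and $E_-(j+1)$ cover the same vertex set, so $\M'$ is a valid matching.

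WLOG $j \equiv 1 \md 4$ (the case $j \equiv 3 \md 4$ is symmetric). The weight analysis follows the pattern of \Lem{lip-cons2}. Using \eqref{eq:2-circ} at every even index, $w(E_-(j+1))$ is expanded as an alternating sum of terms $[f(s_u) - f(s_v) - \d(s_u, s_v)]$ with signs determined by parity modulo $4$. For $w(\hat E)$ we lower-bound each pair by the expression for $w$ whose $f$-coefficients match those of $w(E_-(j+1))$ term-by-term, so that all $f$-terms cancel exactly. Every ``middle'' distance contribution cancels via \Prop{lip-comp} ($\d(s_k, s_{k+3}) = \d(s_{k+1}, s_{k+2})$ for even $k$, and similarly in the reverse direction). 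What remains is the boundary
\[
\bigl[\d(s_0, s_{-1}) - \d(s_1, s_{-1})\bigr] + \bigl[\d(s_j, s_{j+1}) - \d(s_{j-1}, s_{j+1})\bigr].
\]
The first bracket equals $\d(s_0, s_1) = -\alpha 2^b$ by the triangle \emph{equality} of \Clm{d-prop}, applied to $s_0 \prec s_1 \prec s_{-1}$ (the cross-pair property of $(s_{-1},s_0)$ via \Prop{cross}). The second bracket is $\geq -\d(s_{j-1}, s_j) = \alpha 2^b$ by the triangle \emph{inequality} of \Clm{d-prop} (since $s_j \in U(H)$ by \Prop{sub} gives $s_{j-1} \prec s_j$ in coordinate $a$). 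Summing, $w(\hat E) - w(E_-(j+1)) \geq 0$, so $w(\M') \geq w(\M)$.

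The potential analysis is property-agnostic and exactly mirrors \Clm{pot}. The middle $\Phi$-contributions cancel because the vectors $s_{k+3} - s_k$ and $s_{k+2} - s_{k+1}$ are equal (the corresponding pairs of $\hat E$ and $E_-(j+1)$ are related by a single translation by $\pm 2^b e_a$). The boundary reduces to
\[
\msd(|s_1 - s_{-1}|_a) - \msd(|s_0 - s_{-1}|_a) + \msd(|s_{j-1} - s_{j+1}|_a) - \msd(|s_j - s_{j+1}|_a).
\]
Since $(s_{-1}, s_0)$ is a cross pair for $H$, $\msd(|s_0 - s_{-1}|_a) = b$ exactly, while $|s_1 - s_{-1}|_a = |s_0 - s_{-1}|_a - 2^b$ is divisible by $2^{b+1}$, giving a strict gain of at least $1$. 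The hypothesis $\msd(|s_j - s_{j+1}|_a) \leq b$ together with $s_{j-1}[a] = s_j[a] \pm 2^b$ implies $\msd(|s_{j-1} - s_{j+1}|_a) \geq \msd(|s_j - s_{j+1}|_a)$, so the remaining bracket is $\geq 0$. Hence $\Phi(\M') > \Phi(\M)$.

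Either $w(\M') > w(\M)$, contradicting maximality of $w(\M)$; or $w(\M') = w(\M)$ and then $\Phi(\M') > \Phi(\M)$ contradicts the choice of $\M$ as a $\Phi$-maximizing maximum-weight matching. The main obstacle is the careful bookkeeping in the weight calculation: verifying that the chosen directions in the lower bound for $w(\hat E)$ make the $f$-coefficients cancel term-by-term, and that every middle distance term pairs off under \Prop{lip-comp}. The loss of strict triangle equality (replaced by triangle inequality at the terminal boundary) is absorbed exactly by the $\alpha 2^b$ contributed by the initial cross-pair boundary, so no slack remains; this is why the ``direction'' in condition \eqref{eq:2-circ} is crucial, and why both parities $j \equiv 1, 3 \md 4$ must be checked to land on the same symmetric calculation.
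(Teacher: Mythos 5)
Your proof is correct and follows essentially the same route as the paper's: rewire $\M$ to $\M' = \M - E_-(j+1) + \hat E$ with $\hat E = E_+(j-1) \cup (s_{j-1},s_{j+1})$, show $w(\M')\ge w(\M)$ using triangle equality at the initial boundary (giving $-\alpha 2^b$) and triangle inequality at the terminal boundary (giving $\alpha 2^b$, so the slack vanishes exactly), and then invoke the property-agnostic Claim~\ref{clm:pot} argument to force $\Phi(\M') > \Phi(\M)$. Your closing dichotomy ($w(\M')>w(\M)$ or $w(\M')=w(\M)$ with $\Phi(\M')>\Phi(\M)$) is a slightly more explicit phrasing of the same contradiction the paper draws.
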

\begin{proof} The proof is analogous to that of \Lem{skew}.
By \Lem{lip-last}, for all even $i \leq j+1$, condition \eqref{eq:2-circ} holds.
By \Clm{lip-rep}, $s_{-1}, s_0, s_1, \ldots, s_j, \M(s_j) = s_{j+1}$ are all distinct.
We rewire $\M$ to $\M'$ by removing $E_-(j+1)$ from $\M$ and adding the set $\hat{E} = E_+(j-1) \cup (s_{j-1},s_{j+1})$.
We will assume that $j \equiv 1 \md 4$ (the other case is analogous and omitted).
By \eqref{eq:2-circ}, we can exactly express $w(E_-(j+1))$.
\begin{eqnarray*}
w(E_-(j+1)) & = & [f(s_{0}) - f(s_{-1}) - \d(s_0, s_{-1})] + [f(s_1) - f(s_2) - \d(s_1, s_2)] \nonumber \\
& & + [f(s_4) - f(s_3) - \d(s_4,s_3)] + [f(s_5) - f(s_6) - \d(s_5,s_6)] + \cdots \nonumber \\
& & + [f(s_{j-1}) - f(s_{j-2}) - \d(s_{j-1},s_{j-2})] + [f(s_{j}) - f(s_{j+1}) - \d(s_j,s_{j+1})]
\end{eqnarray*}
We get a lower bound for $w(\hat{E})$ that matches the $f$ terms exactly.
\begin{eqnarray*}
w(\hat{E}) & \geq & [f(s_{1}) - f(s_{-1}) - \d(s_1, s_{-1})] + [f(s_0) - f(s_3) - \d(s_0, s_3)] \nonumber \\
& & + [f(s_5) - f(s_2) - \d(s_5,s_2)] + [f(s_4) - f(s_7) - \d(s_4,s_7)] + \cdots \nonumber \\
& & + [f(s_{j}) - f(s_{j-3}) - \d(s_{j},s_{j-3})] + [f(s_{j-1}) - f(s_{j+1}) - \d(s_{j-1},s_{j+1})]
\end{eqnarray*}
By \Prop{lip-comp}, the distance terms $\d(s_c,s_{c+3})$ and $\d(s_{c+3},s_c)$ can be
matched to equivalent terms. In the following, we use the equality $\d(s_0,s_{-1}) - \d(s_1, s_{-1}) = -\alpha 2^b$.
\begin{eqnarray*}
w(\hat{E}) - w(E_-(j+1)) & \geq & - \d(s_1, s_{-1}) - \d(s_{j-1},s_{j+1}) + \d(s_0, s_{-1}) + \d(s_{j},s_{j+1}) \\
& \geq & -\alpha 2^b - \d(s_{j-1},s_j) \ \ \ \textrm{(triangle inequality)}\\
& = & -\alpha 2^b - (-\alpha 2^b) = 0 \ \ \ \textrm{(By \Prop{sub}, $j \equiv 1 \md 4$, so $s_j \in U(H)$.)} 
\end{eqnarray*}
So $\M'$ is also a maximum weight matching. 
Observe that the potential $\Phi$ is independent of the property at hand. \Clm{pot} only uses the basic
structure of the alternating paths and is applicable here. It asserts that
if $\msd(s_j[a] - s_{j+1}[a]) \leq b$, then $\Phi(\M') > \Phi(\M)$, contradicting the choice of $\M$.
\end{proof}
\section{Acknowledgements}
Sandia National Laboratories is a multi-program laboratory managed and operated by Sandia Corporation, a wholly owned subsidiary of Lockheed Martin Corporation, for the U.S. Department of Energy's National Nuclear Security Administration under contract DE-AC04-94AL85000. CS is grateful for the support received from the Early Career LDRD program at Sandia National Laboratories.
\bibliographystyle{acmsmall}
\bibliography{derivative-testing}
\end{document}